\definecolor{emilioeditcolor}{rgb}{0.94, 0.97, 1.0}
\definecolor{winered}{rgb}{0.5,0,0}
\newtheoremstyle{theoremdd}
{\topsep}{\topsep}{\upshape}{0pt}{\bfseries}{.}{ }{\thmname{#1}\thmnumber{ #2}\thmnote{ (#3)}}
\theoremstyle{theoremdd}
\newtheorem{Th}{Theorem}[section]
\theoremstyle{definition}
\newtheorem{Lemma}[Th]{Lemma}
\newtheorem{Cor}[Th]{Corollary}
\newtheorem{Prop}[Th]{Proposition}
\newtheorem{Def}[Th]{Definition}
\newtheorem{Ex}[Th]{Example}
\newcommand{\R}{\mathbb{R}}
\newcommand{\Z}{\mathbb{Z}}
\tikzset{curve/.style={settings={#1},to path={(\tikztostart)
    .. controls ($(\tikztostart)!\pv{pos}!(\tikztotarget)!\pv{height}!270:(\tikztotarget)$)
    and ($(\tikztostart)!1-\pv{pos}!(\tikztotarget)!\pv{height}!270:(\tikztotarget)$)
    .. (\tikztotarget)\tikztonodes}},
    settings/.code={\tikzset{quiver/.cd,#1}
        \def\pv##1{\pgfkeysvalueof{/tikz/quiver/##1}}},
    quiver/.cd,pos/.initial=0.35,height/.initial=0}
\tikzset{tail reversed/.code={\pgfsetarrowsstart{tikzcd to}}}
\tikzset{2tail/.code={\pgfsetarrowsstart{Implies[reversed]}}}
\tikzset{2tail reversed/.code={\pgfsetarrowsstart{Implies}}}
\tikzstyle{vertex}=[circle, draw, inner sep=0pt, minimum size=6pt]
\title{Regularizing Calabi-Yau topological conformal field theories using cutoff heat kernels}
\author{
  Yashasvi Aulak \\
  Graduate Center, City University of New York
 }
\begin{document}
\maketitle

\begin{abstract}
In this paper we construct a family of topological conformal field theories (TCFTs) associated to a Calabi-Yau space by modifying the heat kernel and sections of the Calabi-Yau space. This is done by restricting to certain eigenspaces of the Laplacian. We then present two a-priori distinct ways to regularize the Calabi-Yau TCFT by using these modified heat kernels, and then show that they are equivalent. Finally, we relate the regularized TCFTs for different cutoffs.
\end{abstract}

\section{Introduction}

Topological conformal field theories (TCFTs) have played an important role in the world of mathematical physics for the past few decades. In physics, TCFTs provide models for condensed matter physics while in mathematics they are used to calculate certain topological invariants. A key component of quantum field theory (QFT) in general is renormalization, which renders the theory useful for an experimental physicist by obtaining finite results from divergent integrals. Initially, renormalization seemed like an ad-hoc fix that lacked a solid mathematical foundation. However, a few decades later, the reinterpretation of QFT as a scale dependent theory established renormalization as an intrinsic component of QFT. Kevin Costello in his book \cite{book}, gives a beautiful and rigorous account of renormalization. In this work, we apply renormalization techniques to regularize TCFTs associated to a Calabi-Yau space. 

We start with a concise background of TCFTs as described by Kevin Costello in \cite{tcft}. The Segal category of metrized ribbon graphs is a differential graded symmetric monoidal (dgsm) category whose objects are non-negative integers and morphisms are metrized ribbon graphs with a number of incoming and outgoing labeled external edges. A TCFT is a dgsm functor from the Segal category of metrized ribbon graphs to the dgsm category of chain complexes. The data of a Calabi-Yau space consists of an oriented Riemannian manifold $M$ equipped with a bundle of $\mathbb{Z}/2$-graded complex algebras $\mathcal{A}_M$, whose algebra of global sections we call $\mathcal{A}$, along with a trace map compatible with the Riemannian metric on $M$. This data can be used to construct a TCFT as in \cite{tcft}. 

Let us recall this construction. First, let $\Gamma(m,n)$ denote the moduli space of metrized ribbon graphs with $m$ incoming and $n$ outgoing external vertices. A singular cubical chain in $\Gamma(m,n)$ determines an unmetrized ribbon graph $\gamma$. Similar to Feynman graphs, the edges of $\gamma$ can be labeled with propagator-like forms constructed using the heat kernel associated to the Laplacian on $M$. The vertices of $\gamma$ are labeled with functionals that integrate volume forms on $M$. The incoming external vertices are labeled with input forms in $\mathcal{A}^{\otimes m}$ and after multiplying and integrating forms using Feynman diagrams-like rules, one gets output forms in $\mathcal{A}^{\otimes n}$. This gives a map from singular cubical chains on the moduli space $C_\bullet(\Gamma(m,n))$ to $\text{Hom}(\mathcal{A}^{\otimes m}, \mathcal{A}^{\otimes n})$ which determines the TCFT. 

The integrals involved in the process described above are not always convergent. Costello imposes some additional restrictions on the moduli space $\Gamma(m,n)$. One of these restrictions is that there are no graphs with loops of vanishing length, which ensures that the integrals always converge. However, such a moduli space is not compact. Our main goal (not achieved in this work), is to construct a well defined TCFT on a compactified moduli space. In order to compactify, we allow graphs with shrinking loops and then use renormalization in QFT to get finite results.   

In this work we present a way to impose regulators on the TCFT. We do this in two a-priori distinct ways, and ultimately prove that they are equivalent. We then establish a simple relationship among the regularized TCFTs with different cutoffs. 

In \cite{tcft}, Costello describes a length based regulator dependent TCFT and draws connections to QFT. In future work, we will study that further and construct a regulator independent TCFT.

The paper is organized as follows:
\begin{itemize}
    \item in section \ref{section background} of this paper we give the relevant background about Calabi-Yau spaces, the moduli space of metrized ribbon graphs, dgsm categories and TCFTs,
    \item in section \ref{Single eigenvalue TCFTs} we introduce a way to modify the heat kernel by restricting to a single eigenspace of the Laplacian. We then show that using the modified heat kernel and Calabi-Yau space, we get a TCFT, which we call a single-eigenvalue TCFT,
    \item in section \ref{Constructing TCFTs using single-eigenvalue TCFTs} we introduce two ways to construct TCFTs using single-eigenvalue TCFTs. The first way is by defining a sum of single-eigenvalue TCFTs. The second way is by restricting to a subspace built using multiple eigenspaces. We then prove that these constructions are equivalent. Finally we relate these regularized TCFTs to QFTs regularized with sharp cutoffs. Appendix \ref{section category theory background} provides the relevant category theory background.
    
\end{itemize}

\section*{Acknowledgments}

I am thankful to my PhD advisor Mahmoud Zeinalian for the numerous conversations about this project.\\
I would also like to thank Jeffrey Kroll and Emilio Minichiello for very helpful discussions on differential geometry and category theory.

\section{Background} \label{section background}

Sections \ref{32}-\ref{36} are adapted from \cite{tcft}. We mostly follow the notation from that reference as well. 

\subsection{Calabi-Yau Spaces} \label{32}

In this section we recall the definition of a Calabi-Yau space. First, we need a few preliminary definitions.

\begin{Def}
An \textit{elliptic space} $(M,\mathcal{A})$ consists of
\begin{enumerate}
        \item an $m$-dimensional compact (oriented) Riemannian manifold M, with $m \geq 2$,
        \item a bundle $\mathcal{A}_M$ of complex $\Z_2$-graded associative algebras over $C^\infty(M)$, whose algebra of global sections we denote by $\mathcal{A}$, and
        \item an odd derivation $Q : \mathcal{A} \to \mathcal{A}$ such that
        \begin{itemize}
            \item $Q^2=0$,
            \item $Q$ is a degree 1 differential operator on $C^\infty(M)$, and
            \item $(\mathcal{A},Q)$ is an elliptic complex. 
        \end{itemize}
    \end{enumerate}
A \textit{Calabi-Yau structure} on an elliptic space $(M, \mathcal{A})$ consists of a \textit{trace} map $\text{Tr}: \mathcal{A} \to \mathbb{C}$ which factors as
\begin{equation*}
    \mathcal{A} \xrightarrow{\text{Tr}_M} \Omega^m(M) \xrightarrow{\int_M} \mathbb{C},
\end{equation*}
and for each $e, e' \in \mathcal{A}$ satisfies
\begin{equation*}
    \text{Tr}[e,e']=0, \qquad \text{ and } \qquad  \text{Tr}(Qe)=0.
\end{equation*}
This gives a trace pairing $\text{Tr}:\mathcal{A} \otimes \mathcal{A} \to \Omega^m(M) $, and we require it to be fiber-wise non-degenerate. (Wherever appropriate, $\mathcal{A}_M \otimes \mathcal{A}_M$ and $\mathcal{A} \otimes \mathcal{A} $ denote the bundle $\mathcal{A}_M \boxtimes \mathcal{A}_M \to M \cross M$ and sections of the bundle respectively, referring to the completed projective tensor product). 
    
A Hermitian metric on $\mathcal{A}$ is called compatible if there exists a linear operator $*:\mathcal{A} \to \mathcal{A}$ such that $\text{Tr}(e*e')=<e,e'>$ and $**e=(-1)^P e$. Here, the parity P is 1 or 0 depending on whether the trace map is odd or even.
\end{Def}

The following proposition is stated without proof (see \cite{tcft} for proof).
\begin{Prop}
Let $Q^\dagger$ be the Hermitian adjoint to $Q$ with respect to the metric. Define the Laplacian/Hamiltonian by $H=[Q,Q^\dagger].$ \\ 
Then, 
\begin{itemize}
    \item $H$ is elliptic, self adjoint with respect to $<,>$ and has non-negative eigenvalues. The eigenvectors of H span $L^2$ completion of $\mathcal{A}$. 
    \item $Q^\dagger$, $H$ are both self adjoint with respect to Trace pairing.
    \item $\mathcal{A} = \text{Im}(Q) \oplus \text{Ker}(H) \oplus \text{Im}(Q^\dagger)$ and $\text{Ker}(H) = \text{Ker}(Q) \cap \text{Ker}(Q^\dagger)$ 
\end{itemize}
\end{Prop}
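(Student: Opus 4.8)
The plan is to run the standard Hodge-theoretic argument, adapted to the $\Z_2$-graded elliptic complex $(\mathcal A,Q)$; the only ingredient beyond formal manipulation is the operator $*$ provided by the compatible Hermitian metric, which is needed for the trace-pairing statements.

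\emph{Ellipticity, self-adjointness and spectrum.} Since $Q$ is a first-order operator and $(\mathcal A,Q)$ is an elliptic complex, for each $\xi\neq 0$ the symbol satisfies $\ker\sigma(Q)(\xi)=\im\sigma(Q)(\xi)$. As $Q^\dagger$ is also first-order, with $\sigma(Q^\dagger)(\xi)$ the adjoint of $\sigma(Q)(\xi)$ up to a unit, the symbol of the second-order operator $H=QQ^\dagger+Q^\dagger Q$ is, up to a positive unit, $\sigma(Q)(\xi)\sigma(Q)(\xi)^{*}+\sigma(Q)(\xi)^{*}\sigma(Q)(\xi)$; pairing this with a vector $v$ gives $\|\sigma(Q)(\xi)^{*}v\|^{2}+\|\sigma(Q)(\xi)v\|^{2}$, and exactness of the symbol complex then forces $v=0$, so $H$ is elliptic. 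Formal self-adjointness for $\langle\,,\,\rangle$ is immediate from $(Q^\dagger)^{*}=Q$ and $Q^{*}=Q^\dagger$. If $He=\lambda e$ then $\lambda\|e\|^{2}=\langle QQ^\dagger e,e\rangle+\langle Q^\dagger Qe,e\rangle=\|Q^\dagger e\|^{2}+\|Qe\|^{2}\geq 0$, giving non-negativity; the same identity shows $\ker H=\ker Q\cap\ker Q^\dagger$. Finally, since $M$ is compact and $H$ is elliptic and formally self-adjoint, the spectral theorem for elliptic operators gives a discrete spectrum with finite-dimensional eigenspaces whose span is dense in the $L^{2}$-completion of $\mathcal A$.

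\emph{Self-adjointness for the trace pairing.} Write $\langle\!\langle a,b\rangle\!\rangle:=\text{Tr}(ab)$. Differentiating $\text{Tr}(Q(ab))=0$ by the Leibniz rule gives $\text{Tr}((Qa)b)=-(-1)^{|a|}\text{Tr}(a(Qb))$, i.e.\ $Q$ is self-adjoint up to the Koszul sign for $\langle\!\langle\,,\,\rangle\!\rangle$. To pass to $Q^\dagger$ I would invoke compatibility of the Hermitian metric: the operator $*$ satisfies $\text{Tr}(a* b)=\langle a,b\rangle$ and $**=(-1)^{P}$, and combining this with the defining relation $\langle Q^\dagger a,b\rangle=\langle a,Qb\rangle$ of the Hermitian adjoint expresses $Q^\dagger$ in terms of $Q$ and $*$ — the graded analogue of the Hodge identity $d^{\dagger}=\pm\,{*}\,d\,{*}$. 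Trace-self-adjointness of $Q^\dagger$ then follows from that of $Q$ together with the fact that $*$ is, up to sign, an isometry of $\langle\!\langle\,,\,\rangle\!\rangle$. Once $Q$ and $Q^\dagger$ are both trace-self-adjoint, $H=QQ^\dagger+Q^\dagger Q$ is too, being a sum of composites of such operators.

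\emph{Hodge decomposition.} Ellipticity of $H$ and compactness of $M$ yield an orthogonal splitting $\mathcal A=\ker H\oplus\im H$ with $\ker H$ finite-dimensional. The subspaces $\im Q$, $\im Q^\dagger$, $\ker H$ are mutually orthogonal: $\langle Qa,Q^\dagger b\rangle=\langle Q^{2}a,b\rangle=0$, and for $h\in\ker H$ we have $Qh=Q^\dagger h=0$, so $\langle h,Qa\rangle=\langle Q^\dagger h,a\rangle=0$ and likewise $h\perp\im Q^\dagger$. It remains to see $\im H=\im Q\oplus\im Q^\dagger$: the inclusion $\subseteq$ is clear from $H=QQ^\dagger+Q^\dagger Q$, and conversely, writing $x=h+Hy$ with $h\in\ker H$ and using $QH=HQ$ together with $Qh=0$ gives $Qx=H(Qy)\in\im H$, and similarly $Q^\dagger x=H(Q^\dagger y)\in\im H$, so $\im Q,\im Q^\dagger\subseteq\im H$. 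Assembling, $\mathcal A=\im Q\oplus\ker H\oplus\im Q^\dagger$.

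The one genuinely delicate step is the trace-self-adjointness of $Q^\dagger$: it hinges on the precise interplay between the Hermitian pairing, the operator $*$, and the trace pairing, and on tracking the Koszul and parity signs (the factor $(-1)^{P}$ and the odd degree of $Q$). Everything else is either standard elliptic operator theory on a compact manifold (ellipticity, the spectral decomposition, and the splitting $\mathcal A=\ker H\oplus\im H$) or, given that splitting, pure linear algebra.
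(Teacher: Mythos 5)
The paper itself states this proposition without proof, deferring to \cite{tcft}, so there is no internal argument to compare against; your Hodge-theoretic proof is the standard one used in that reference and is correct: the symbol computation for ellipticity, the identity $\lambda\Vert e\Vert^2=\Vert Qe\Vert^2+\Vert Q^\dagger e\Vert^2$ giving non-negativity and $\ker H=\ker Q\cap\ker Q^\dagger$, the spectral theorem for self-adjoint elliptic operators on a compact manifold, and the splitting $\mathcal{A}=\ker H\oplus \text{im}\,H$ combined with $QH=HQ$ to get $\text{im}\,H=\text{im}\,Q\oplus\text{im}\,Q^\dagger$ all go through. The one step you leave schematic, trace-self-adjointness of $Q^\dagger$ via the relation $Q^\dagger=\pm * Q\, *$ coming from $\text{Tr}(e*e')=\langle e,e'\rangle$ and $**=(-1)^P$, is exactly where the Koszul signs and the conjugate-linearity of the Hermitian pairing versus the bilinearity of the trace pairing must be tracked, but the strategy is the right one and matches the cited source.
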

\begin{Ex}
    An example of a Calabi-Yau space and the setup that we work in for the rest of this work is;\\
    (M is a ($d \geq 2$) dimensional compact (oriented) Riemannian manifold, $\mathcal{A}=\Omega^\bullet(M,\R),Q=d$, $\text{Tr}_x(a(x)b(x)) = \int_{x \in M} a(x)\wedge b(x)$, $ \langle , \rangle $ is the Euclidean metric induced on differential forms, $Q^\dagger=d^*, * = \star$ is the Hodge star operator,  and $H := dd^* + d^* d)$. Henceforth, this tuple will be referred to as the \textit{data of a Calabi-Yau space}.
\end{Ex}

\subsection{The heat kernel}   \label{33} 

Given a general Calabi-Yau space, one can study the heat kernel for the Laplacian $H$ (as in \cite{tcft}),
\begin{Def}
\label{31}
    A \textit{heat kernel for $H$} is an element,
    \begin{equation*}
        K \in \Gamma(\R_{\geq 0} \cross M \cross M, \mathcal{A}_m \boxtimes \mathcal{A}_m)
    \end{equation*}
    such that,
    \begin{equation}
        \frac{d}{dt}K_t(x,y)=-H_xK_t(x,y)
    \end{equation}
    \begin{equation}
        \text{lim}_{t \to 0} K_t(x,y) = \delta_{x,y}
    \end{equation}
\end{Def}
where $\delta_{x,y}$ is the delta distribution. Define $L_t(x,y) = -Q_x^\dagger K_t(x,y)$. The following proposition is stated without proof (the proof can be found in \cite{tcft})
\begin{Prop}
\label{4}
    $K_t(x,y)$ and $L_t(x,y)$ satisfy the following relations,
    \begin{equation}
        [Q,C_{K_t(x,y)}]=[Q^\dagger,C_{K_t(x,y)}]=[H,C_{K_t(x,y)}]=0 \hspace{5pt} ; \hspace{5pt} [Q,C_{L_t(x,y)}] = -H K_t
    \end{equation}
    where $C_K : \mathcal{A} \to \mathcal{A}$ ; $ K \in \mathcal{A}^2$ is a convolution operator defined as $C_{K(x,y)}f:=(-1)^P \text{Tr}_yK(x,y)f(y)$ for $f \in \mathcal{A}$, and $P$ is the degree of trace map.
    \begin{equation}
        Q_xK_t(x,y) + Q_yK_t(x,y) = 0 \hspace{10 pt}  \text{ $K(x,y)$ is a closed differential form on $M \cross M$ $\forall t \in \R_{\geq 0}$} 
    \end{equation}
\begin{equation}
        Q^\dagger_xK_t(x,y) = Q^\dagger_yK_t(x,y) 
    \end{equation}
    \begin{equation}
       H_xK_t(x,y) = H_yK_t(x,y) 
    \end{equation}
    \begin{equation}
      K_t(x,y) = (-1)^PK_t(y,x) 
    \end{equation}
    \begin{equation}
      L_t(x,y) = (-1)^PL_t(y,x) 
    \end{equation}
     \begin{equation}
       (Q_x+Q_y) K_t(x,y)=-H_xK_t(x,y) = \frac{d}{dt}K_t(x,y) 
    \end{equation}
    
\end{Prop}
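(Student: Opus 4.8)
My plan rests on two observations. First, the convolution operator is the heat semigroup: $C_{K_t}=e^{-tH}$. Since $H$ and $\mathrm{Tr}_y$ involve disjoint tensor factors, applying $C_{(-)}$ to the defining relations $\tfrac{d}{dt}K_t=-H_xK_t$ and $\lim_{t\to 0}K_t=\delta_{x,y}$ turns them into the operator ODE $\tfrac{d}{dt}C_{K_t}=-H\,C_{K_t}$ with initial value $C_{K_0}=\mathrm{id}$ (the sign $(-1)^P$ in the definition of $C_{(-)}$ is exactly what forces $C_\delta=\mathrm{id}$). Second, $[Q,H]=[Q^\dagger,H]=0$: expanding $H=QQ^\dagger+Q^\dagger Q$ and using the graded Leibniz rule together with $Q^2=(Q^\dagger)^2=0$ gives this at once. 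From these the three operator identities $[Q,C_{K_t}]=[Q^\dagger,C_{K_t}]=[H,C_{K_t}]=0$ follow by ODE uniqueness: for $\bullet\in\{Q,Q^\dagger,H\}$ the graded commutator $D_t:=[\bullet,C_{K_t}]$ satisfies $\tfrac{d}{dt}D_t=[\bullet,-H\,C_{K_t}]=-H\,D_t$ (using $[\bullet,H]=0$) with $D_0=[\bullet,\mathrm{id}]=0$, so $D_t\equiv 0$.

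Next I would set up a dictionary between operators and their trace-kernels. Because $Q_x$ and $Q^\dagger_x$ commute with $\mathrm{Tr}_y$ and $\int_y$, and using $\mathrm{Tr}(Qe)=0$ for $Q$ and the self-adjointness of $Q^\dagger$ for the trace pairing (the Proposition above) for $Q^\dagger$, one obtains $C_{Q_xK}=Q\circ C_K$, $C_{Q_yK}=\pm\,C_K\circ Q$ and likewise $C_{Q^\dagger_xK}=Q^\dagger\circ C_K$, $C_{Q^\dagger_yK}=\pm\,C_K\circ Q^\dagger$, with Koszul signs determined by $P$ and the degree of $K$. Feeding $[Q,C_{K_t}]=0$ and $[Q^\dagger,C_{K_t}]=0$ into this dictionary and using fiberwise non-degeneracy of the trace pairing (so that a kernel is recovered from its convolution operator) gives the kernel identities $Q_xK_t+Q_yK_t=0$ (equivalently, $K_t$ is closed on $M\times M$) and $Q^\dagger_xK_t=Q^\dagger_yK_t$. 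The remaining operator identity $[Q,C_{L_t}]=-HK_t$ --- read on the operator side as $[Q,C_{L_t}]=-H\,C_{K_t}$ --- then drops out of $C_{L_t}=-C_{Q^\dagger_xK_t}=-Q^\dagger\circ C_{K_t}$, since $[Q,-Q^\dagger C_{K_t}]=-[Q,Q^\dagger]C_{K_t}+Q^\dagger[Q,C_{K_t}]=-H\,C_{K_t}$.

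For the symmetry $K_t(x,y)=(-1)^PK_t(y,x)$ I would use that $H$, hence $e^{-tH}=C_{K_t}$, is self-adjoint with respect to the trace pairing (again the Proposition above), and translate ``operator self-adjoint for $\mathrm{Tr}$'' into ``trace-kernel symmetric'', which is where the factor $(-1)^P$ (the parity of the trace) enters; alternatively, expand $K_t$ over an $L^2$-orthonormal eigenbasis of $H$ and check the symmetry term by term, essentially the construction in \cite{tcft}. Everything else is bookkeeping: $H_xK_t=H_yK_t$ follows from the symmetry of $K_t$ and the heat equation, both sides being $-\tfrac{d}{dt}K_t(x,y)$; $L_t(x,y)=(-1)^PL_t(y,x)$ follows from $L_t(x,y)=-Q^\dagger_xK_t(x,y)$ combined with the symmetry of $K_t$ and the identity $Q^\dagger_xK_t=Q^\dagger_yK_t$; and for the final relation (whose left-hand side is $(Q_x+Q_y)L_t(x,y)$) I compute $(Q_x+Q_y)L_t=-Q_xQ^\dagger_xK_t-Q_yQ^\dagger_xK_t$, anticommute the operators in the second term (both odd, acting on distinct variables), apply $Q_yK_t=-Q_xK_t$ to rewrite it as $-Q^\dagger_xQ_xK_t$, and conclude $(Q_x+Q_y)L_t=-(Q_xQ^\dagger_x+Q^\dagger_xQ_x)K_t=-H_xK_t=\tfrac{d}{dt}K_t$.

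The formal steps are routine; the one place I expect genuine care to be required is the sign bookkeeping in the operator--kernel dictionary. One must track the Koszul signs precisely enough to land on exactly $Q_xK_t+Q_yK_t=0$ (a true plus sign --- closedness --- rather than $Q_xK_t=Q_yK_t$), exactly $Q^\dagger_xK_t=Q^\dagger_yK_t$, and exactly the factor $(-1)^P$ in $K_t(x,y)=(-1)^PK_t(y,x)$. These signs are dictated by the $\Z_2$-grading on $\mathcal{A}$ together with the parity $P$ of the trace, and a mistake in any of them would propagate into wrong signs throughout the TCFT construction downstream.
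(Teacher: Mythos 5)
The paper does not actually prove this proposition: it is stated without proof, with the reader referred to \cite{tcft}, so there is no in-paper argument to compare yours against. Judged on its own merits, your proposal is correct at the formal level the paper operates on, and it is essentially the standard argument: identify $C_{K_t}$ with the semigroup $e^{-tH}$, get the operator commutators from $[Q,H]=[Q^\dagger,H]=0$, and transport statements between operators and kernels through the trace pairing, whose fiberwise non-degeneracy lets you recover kernel identities such as $(Q_x+Q_y)K_t=0$, $Q^\dagger_xK_t=Q^\dagger_yK_t$ and $K_t(x,y)=(-1)^PK_t(y,x)$ from the operator ones. Your computation $[Q,-Q^\dagger C_{K_t}]=-[Q,Q^\dagger]C_{K_t}+Q^\dagger[Q,C_{K_t}]=-HC_{K_t}$ is the right way to read the statement $[Q,C_{L_t}]=-HK_t$, and you correctly noticed that the left-hand side of the last displayed identity should be $(Q_x+Q_y)L_t(x,y)$ rather than $K_t$ (this is confirmed by the cutoff analogue, equation (\ref{17}) in Proposition \ref{7}, whose proof in the paper runs the same computation you do, writing $-(d_x+d_y)d^*_xK_t=-H_xK_t$ after the cross terms cancel).

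Two small caveats. First, the ODE-uniqueness step for the operator-valued $D_t=[\bullet,C_{K_t}]$ with distributional initial condition $C_{K_0}=\mathrm{id}$ needs either a word of functional-analytic justification or can be bypassed entirely: since the paper assumes a discrete spectral decomposition of $H$ (and indeed later writes $K_t=\sum_i e^{-\lambda_i t}\sum_j e_{ij}\otimes\star e_{ij}$), it is cleaner to note that $e^{-tH}$ preserves each eigenspace of $H$ and hence commutes with anything commuting with $H$, or simply to verify all seven identities termwise in the eigenbasis, which is how the later cutoff versions (Lemma \ref{18}, Proposition \ref{7}) are actually proved. Second, the operator--kernel dictionary with its Koszul signs, which you flag as the delicate point, is only sketched; the signs there are exactly where $\mathrm{Tr}(Qe)=0$ (graded skew-adjointness of $Q$) versus trace self-adjointness of $Q^\dagger$ enter, producing the plus sign in $Q_xK_t+Q_yK_t=0$ as opposed to the equality $Q^\dagger_xK_t=Q^\dagger_yK_t$, and a full write-up should carry these out explicitly. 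Neither caveat is a gap in the strategy.
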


\subsection{Moduli space of metrized ribbon graphs} \label{34}
A TCFT is based on the Segal category of metrized ribbon graphs, whose set of morphisms is the moduli space of metrized ribbon graphs. We describe the moduli space in this subsection.
\begin{Def}
    A \textit{ribbon graph} (by Hamilton \cite{ham}) is a set $\Gamma$ (called the set of half-edges) together with the following data:
    \begin{itemize}
        \item A partition of $\Gamma$ into pairs, denoted by $E(\Gamma)$, called the set of edges of $\Gamma$.
        \item A partition of $\Gamma$, denoted by $V(\Gamma)$, called the set of vertices of $\Gamma$. We will refer to the cardinality of a vertex $v \in V(\Gamma)$ as the valency of $v$.
        \item A cyclic ordering on elements of every vertex $v \in V(\Gamma)$.
        \end{itemize}
A ribbon graph with a non-negative real number $l(e) \in \R_{\geq 0}$ (called length) assigned to each edge $e$ is called a \textit{metrized ribbon graph}. 
       \end{Def}

\begin{Def}
\cite{tcft} 
The \textit{moduli space of metrized ribbon graphs}  \footnote{This space is equivalent to the moduli space of Riemann surfaces, check \cite{tcft}].} with n incoming and m outgoing external vertices (denoted by $\Gamma(n,m)$) is a topological space defined (constructed) as follows,
\begin{enumerate}
\item  $\Gamma(n,m)$ as a set is the collection of all metrized ribbon graphs with n incoming and m outgoing labeled external vertices. A vertex is called external if it's one-valent. Non-external vertices are at least trivalent and are called internal vertices. External vertices are labeled either outgoing or incoming and are ordered (by definition). An edge connected to any external vertex is called an external edge (other edges are called internal). These graphs need not be connected.
\item $\Gamma(n,m)$ has the following non-compact orbi-cell decomposition. 
\begin{itemize}
\item Given an umetrized ribbon graph $\gamma$ with p edges, m,n incoming and outgoing vertices, a p-cell $O_\gamma$ labeled by $\gamma$ is a subset of $\Gamma(n,m)$ consisting of all metrized ribbon graphs which, upon ignoring
the length assignments, are the same as  $\gamma$.
\item The cell $O_\gamma$ can be identified with a p-dimensional Euclidean cone $\R^p_{\geq 0}$. Each point in $\R^p_{\geq 0}$ is an ordered tuple of p non-negative numbers, that along with $\gamma$ gives a unique metrized ribbon graph in $O_\gamma$. Give $O_\gamma$ the standard topology of $\R^p_{\geq 0}$.
\item Let $e$ be an internal edge in an metrized ribbon graph $\gamma$. If $l(e)=0$, then identify $\gamma$ with $\gamma/e$. Hence, for an unmetrized graph $\gamma$ the boundary of the cell $O_\gamma$ can be defined as the union of all the subsets $O_{\gamma/e}$ of $\Gamma(m,n)$, for each $e$. 
\item One can now form the p-skeleton of $\Gamma(m,n)$ by gluing all p cells $O_\gamma$ together, in a fashion that whenever any 2 (or more) p cells have an identical boundary component (a p-1 cell), we glue the cells along their boundaries. p+1 (and further) skeletons are formed this way and () Topology is given to the whole complex $\Gamma(m,n)$. 
\end{itemize}
\item The following two conditions are imposed on these metrized graphs.
\begin{itemize}
\item  Every closed loop of edges in $\gamma$ is of positive length.
\item Every path of edges in $\gamma$ which starts and ends at different outgoing external
vertices is of positive length.
\end{itemize}
\item The connected components of graphs are allowed to be of the following exceptional types.
\begin{itemize}
\item A graph with two external vertices, one edge, and no internal vertices. The external vertices can be any configuration of incoming or outgoing.
\item A graph with two internal vertices, and two edges between them. 
\end{itemize}
\end{enumerate}
\end{Def}

\begin{Def} 
   The space $\Omega^i(\Gamma(n,m))$ of cellular differential forms with values in $\R$ is defined as follows (\cite{tcft}). An element $\omega \in \Omega^i(\Gamma(n,m))$ consists of a form $\omega_\gamma \in \Omega^i(\text{Met} (\gamma)) :=  \Omega^i(\R^p_{\geq 0})$, for each graph $\gamma$ as above, which is Aut($\gamma$) equivariant, and such that, for each edge e of $\gamma$ which is not a loop, $\omega_{\gamma/e} = \omega|_{\text{Met}(\gamma/e)} $. Hence, forms defined on boundaries of adjacent p-cells agree, giving a form on p-skeleton (and inductively on $\Gamma(m,n)$)\\
For a fixed edge $e$ of an unmetrized graph $\gamma$, $l_e$ is a positive real valued function on the cell  $O_\gamma$ mapping metrized ribbon graphs length of edge $e$ and $dl_e \in \Omega^1(\text{Met}(\gamma))$ is a one-form on the cell. 
\end{Def}
\begin{Def}
Let $C_*(\Gamma(n,m))$ denote the subcomplex of normalised singular cubical chains on $\Gamma(n,m)$) spanned by simplices which are smooth and lie entirely in one of the closed cells.\\
Explicitly, a pure k-chain $\sigma \in C_*(\Gamma(n,m))$ lying completely inside a cell $O_\gamma$ is a map $ \sigma : \R^k_{\geq 0} \to O_\gamma \subset \Gamma(m,n)$. \\
There is an integration pairing ($ C_*(\Gamma(n,m)) \otimes \Omega^*_{cell}(\Gamma(n,m)) \to \R$) between forms and chains on $\Gamma(n,m)$ given by integrating a form on a chain.
\end{Def}

\subsection{labeling graphs in moduli space with  Calabi-Yau space valued forms} \label{35}

an unmetrized ribbon graph is used to construct the functor on morphisms, by taking input forms and via a Feynman rules like procedure, producing output forms. We describe the labeling procedure \cite{tcft} explicitly in this subsection. 

Let $\gamma$ be an unmetrized ribbon graph with n incoming and m outgoing labeled external vertices. $\text{Met}(\gamma)$ is the space of metrics on $\gamma$. \\ 
Let $a=a_1(x_1) \otimes .. \otimes a_n(x_n)$ where each $a_i(x_i) \in \mathcal{A}$ labels an incoming vertex with respect to labeling of incoming external vertices. 

\begin{Def}
Given an edge $e \in E(\gamma)$ joining $v_i,v_j \in V(\gamma)$, we label the edge by the form $\omega_e$ which is defined in the following way, \\ 
If $e$ is an internal edge, then,
\begin{equation*}
    \omega_e := K_t(x_i,x_j) + dt L_t(x_i,x_j) \in \mathcal{A}^{\otimes 2} \otimes \Omega^\bullet(\text{Met}(\gamma))
\end{equation*}
If $e$ is an incoming external edge with external vertex labeled by $a_j(x_j)$, then
\begin{equation*}
    \omega_e := [K_t(x_1,x_2) + dt L_t(x_1,x_2)] \wedge a_k(x_k) \in \mathcal{A}^{\otimes 2} \otimes \Omega^\bullet(\text{Met}(\gamma))
\end{equation*}
\end{Def}
\begin{Def}
    Define the map $ \tilde{K}_\gamma : \mathcal{A}^m \to \mathcal{A}^{\otimes 2|E(\gamma)|} \otimes \Omega^\bullet(\text{Met}(\gamma))$ as,
    \begin{equation}
    a \mapsto \tilde{K}_\gamma(a) := \otimes_{e \in E(\gamma)} \omega_e \in \mathcal{A}^{\otimes 2|E(\gamma)|} \otimes \Omega^\bullet(\text{Met}(\gamma)) 
    \end{equation}
    Label each vertex $v \in V(\gamma)/[m]$, with a trace map, $Tr_v : \mathcal{A} \to \R$. \\
    Then define the map $K_\gamma : \mathcal{A}^{\otimes m} \to \mathcal{A}^{\otimes n} \otimes \Omega^*(\text{Met}(\gamma))$ as, 
 \begin{equation*}
       a \mapsto K_\gamma (a) := \otimes_{v \in V(\gamma)/[n]} Tr_v (\tilde{K}_\gamma(a))
    \end{equation*}
    \end{Def} 
The tensor order of the output forms is determined by labeling of external vertices.\\
We will use the notation $\tilde{K}_\gamma \in \mathcal{A}^{\otimes 2|E(\gamma)|} \otimes \Omega^\bullet(\text{Met}(\gamma))$ for the form (and not a map) to which $a$ hasn't been multiplied (fed) as input, we use $\tilde{K}_\gamma(-) \in \text{Hom}(\mathcal{A}^m,\mathcal{A}^{\otimes 2|E(\gamma)|} \otimes \Omega^\bullet(\text{Met}(\gamma)))$ as the map which takes in $a$ as an input, and $\tilde{K}_\gamma(a)\in \mathcal{A}^{\otimes 2|E(\gamma)|} \otimes \Omega^\bullet(\text{Met}(\gamma))$ as the final form after taking input $a$.\\
Similarly, we will use the notation $K_\gamma \in \mathcal{A}^{\otimes 2|E(\gamma)|} \otimes \Omega^\bullet(\text{Met}(\gamma)) \otimes \text{Hom}(\mathcal{A}^{2|E(\gamma)|-n})$ for the form (and not a map) that has not yet been multiplied with $a$ as input, we use $\tilde{K}_\gamma(-) \in \text{Hom}(\mathcal{A}^m,\mathcal{A}^{\otimes 2|E(\gamma)|} \otimes \Omega^\bullet(\text{Met}(\gamma)))$ as the map which takes in $a$ as an input, and $\tilde{K}_\gamma(a) \in \mathcal{A}^n \otimes \Omega^\bullet(\text{Met}(\gamma))$ as the final output form.\\
Note that $2|E(\gamma)|=|H(\gamma)|$
\begin{Lemma}
    The map,
    \begin{equation}
        K_\gamma : \mathcal{A}^{\otimes m} \to \mathcal{A}^{\otimes n} \otimes \Omega^*(\text{Met}(\gamma))^{\otimes |E(\gamma)|}
    \end{equation}
    commutes with the differential.
\end{Lemma}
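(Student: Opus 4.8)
The plan is to show that $K_\gamma$ is a chain map by unwinding its definition as a composite of the edge-labeling operator $\tilde{K}_\gamma$ and the vertex trace maps, and checking compatibility with the differential at each stage. The total differential on the target is the sum of the de Rham differential on $\mathrm{Met}(\gamma)$, the operator $Q$ acting on each of the $\mathcal{A}$-factors, and (on the moduli space side) the differential $d_t$ in the $t$-direction corresponding to the $dt$ component; compatibility with the $t$-direction is exactly what the $L_t$ term in $\omega_e$ is designed to encode.

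First I would reduce to the edge operator: since each vertex trace map $\mathrm{Tr}_v$ is, by the Calabi-Yau axioms (it kills $Q$-exact elements and commutators), a chain map $\mathcal{A}\to\R$, it suffices to prove that $\tilde{K}_\gamma : \mathcal{A}^{\otimes m}\to \mathcal{A}^{\otimes 2|E(\gamma)|}\otimes\Omega^\bullet(\mathrm{Met}(\gamma))$ commutes with the differential; then $K_\gamma$ does too, because applying a family of chain maps termwise and then contracting along the trace preserves the differential (using the graded Leibniz rule and that $\mathrm{Tr}(Qe)=0$). So the content is concentrated in $\tilde{K}_\gamma$, and since $\tilde{K}_\gamma(a) = \bigotimes_{e}\omega_e$ is a tensor product over edges, by the graded Leibniz rule it is enough to analyze a single edge label $\omega_e = K_t(x_i,x_j) + dt\, L_t(x_i,x_j)$ (and for an incoming external edge, $\omega_e\wedge a_k(x_k)$, where the extra factor contributes a $Q a_k$ term that is precisely the image of the input differential).

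The key computation is then: apply the total differential $D = Q_{x_i} + Q_{x_j} + d_{\mathrm{Met}}$ to $\omega_e$. Using Proposition \ref{4}: the pieces $(Q_{x_i}+Q_{x_j})K_t(x_i,x_j) = -H_{x_i}K_t(x_i,x_j) = \tfrac{d}{dt}K_t(x_i,x_j)$, while $d_{\mathrm{Met}}$ applied to the $K_t$ term produces $dt\,\tfrac{d}{dt}K_t(x_i,x_j)$ (the only metric-dependence being through $t$), and these cancel against the $dt$-component coming from $(Q_{x_i}+Q_{x_j})(dt\,L_t) = -dt\,(Q_{x_i}+Q_{x_j})L_t$; by the definition $L_t = -Q^\dagger_x K_t$ and the relation $[Q, C_{L_t}] = -HK_t$ (together with $(Q_x+Q_y)K_t = -H_xK_t$), one checks the $dt$-terms also cancel. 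So $D\omega_e = 0$ (each edge label is a cocycle for the total differential), and hence $\tilde{K}_\gamma(a)$ is closed relative to the input, i.e. $D\tilde{K}_\gamma(a) = \tilde{K}_\gamma(Qa)$ where $Qa$ means the differential applied to the input via the Leibniz rule on the external-edge factors. Combined with the first reduction, this gives $D K_\gamma(a) = K_\gamma(Q a)$, which is the claim.

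The main obstacle I expect is bookkeeping rather than conceptual: correctly tracking the Koszul signs from the $\Z_2$-grading on $\mathcal{A}$ and the form-degree on $\mathrm{Met}(\gamma)$ as one moves $Q$ past the various tensor factors and the $dt$'s, and making sure the contraction order imposed by the trace maps $\mathrm{Tr}_v$ (which identify the two half-edges at each vertex) is compatible with these signs — in particular that $\mathrm{Tr}$ being ``odd or even'' (parity $P$) is threaded consistently, matching the $(-1)^P$ factors appearing in Proposition \ref{4} and in the definition of $C_K$. A secondary subtlety is the exceptional graph types and the external-edge labeling convention, where one must confirm the same cancellation still holds; but these are degenerate cases of the same single-edge computation and should not introduce anything new.
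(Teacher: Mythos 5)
Your overall architecture is the same as the paper's: the paper proves the single-eigenvalue analogue (Lemma \ref{21}) by exactly this scheme — Lemma \ref{20} shows each edge label $\omega_e = K_t + dt\,L_t$ is closed for the total differential, hence $\widetilde{K}_\gamma$ is closed, and then the assembly of $K_\gamma$ is exhibited as a composite of chain maps; the only cosmetic difference is that where you invoke $\mathrm{Tr}(Qe)=0$ for the vertex contractions, the paper packages the same mechanism as pullback along diagonal maps followed by integration along the fiber over the closed manifold $M$. So the reduction to the single-edge computation is the right and intended route.

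However, the single-edge computation as you wrote it contains a genuine error. You use the identity $(Q_{x_i}+Q_{x_j})K_t = -H_{x_i}K_t = \frac{d}{dt}K_t$, which is the mis-stated equation (9) of Proposition \ref{4} (there $K_t$ should read $L_t$, as confirmed by equation (\ref{17}) of Proposition \ref{7}); it directly contradicts equation (4) of the same proposition and Corollary \ref{6}/Lemma \ref{18}, which give $(Q_x+Q_y)K_t = 0$. Moreover, your claimed cancellation cannot work on degree grounds: $(Q_{x_i}+Q_{x_j})K_t$ has no $dt$ factor, so it cannot cancel against the $dt$-component coming from $(Q_{x_i}+Q_{x_j})(dt\,L_t)$; if $(Q_x+Q_y)K_t$ really equalled $\frac{d}{dt}K_t \neq 0$, then $D\omega_e \neq 0$ and the whole argument would collapse. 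The correct bookkeeping (the paper's Lemma \ref{20}) is: the $dt$-free part of $D\omega_e$ is $(Q_x+Q_y)K_t = 0$; the $dt$-part is $dt\,\frac{d}{dt}K_t$ from $d_{\text{Met}}K_t$, which cancels against $(Q_x+Q_y)(dt\,L_t) = -dt\,(Q_x+Q_y)L_t = -dt\,\frac{d}{dt}K_t$. With that substitution your reduction and the remainder of your argument go through and coincide with the paper's proof.
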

The natural integration pairing between a chain $\sigma$ and a $\text{Met}(\gamma)$-valued form is used to define the following map,

\begin{equation*}
    \sigma \to \int_\sigma K_\gamma (a) 
\end{equation*}

\subsection{Topological conformal field theory} \label{36}
We are almost ready to define TCFT now. We describe both the categories formally and then define a TCFT. For the relevant basic category theory and notation, see Appendix \ref{section category theory background}. 

\begin{Def}
The Segal category of metrized ribbon graphs is a category $\mathscr{C}$ whose objects are non-negative integers, and the set of morphisms $\text{Mor}_\mathscr{C}(m,n)$ consists of all singular cubical chains on the moduli space of all metrized ribbon graphs with labeled m external incoming edges and n external outgoing edges, $\text{Mor}_\mathscr{C}(m,n) := C_\bullet(\Gamma(m,n))$. \footnote {This definition has been originally stated as the Segal Category of Riemann surfaces in \cite{seg} and has been slightly modified here using equivalence of the two categories (explored in \cite{tcft}).} \\
The composition of morphism is defined by gluing outgoing vertices of one graph with incoming vertices of the other such that labeling is respected.       
\end{Def}

\begin{Prop}
The Segal category of metrized ribbon graphs is a category $\mathscr{C}$ can be equipped with the structure of a dgsm category
\end{Prop}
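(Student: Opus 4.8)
The plan is to verify the axioms of a dgsm (differential graded symmetric monoidal) category directly, exhibiting each piece of structure on $\mathscr{C}$ and checking compatibility. First I would equip $\mathscr{C}$ with its monoidal structure: on objects, $m \otimes n := m+n$, with unit object $0$; on morphisms, given chains $\sigma \in C_\bullet(\Gamma(m,n))$ and $\tau \in C_\bullet(\Gamma(m',n'))$, their tensor product is the chain in $C_\bullet(\Gamma(m+m',n+n'))$ obtained by taking the disjoint-union map $\Gamma(m,n)\times\Gamma(m',n')\to\Gamma(m+m',n+n')$ (which is well defined precisely because the moduli space allows disconnected graphs) and applying it to the product chain $\sigma\times\tau$ via the Eilenberg--Zilber/shuffle map on normalised cubical chains. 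This is where the grading enters: the Koszul sign in the Eilenberg--Zilber map makes $\otimes$ a functor of dg-categories, and one checks $\partial(\sigma\otimes\tau) = \partial\sigma\otimes\tau + (-1)^{|\sigma|}\sigma\otimes\partial\tau$.

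Next I would define composition: given $\sigma\in C_\bullet(\Gamma(m,n))$ and $\sigma'\in C_\bullet(\Gamma(n,k))$, the gluing operation on metrized ribbon graphs (join the $n$ outgoing external vertices of the first to the $n$ incoming external vertices of the second, matching labels, and erase the resulting bivalent vertices / add edge lengths appropriately) induces a continuous map $\Gamma(m,n)\times\Gamma(n,k)\to\Gamma(m,k)$; precomposing the shuffle product of chains with this gluing map gives $\sigma'\circ\sigma$. One must check this lands in the subcomplex of smooth cubical chains supported in a single closed cell, is associative, is compatible with the differential (again a Leibniz-rule computation using that $\partial$ is a derivation for the shuffle product and that gluing is a chain map), and has identities — here the identity on $n$ is the chain given by the exceptional component consisting of $n$ disjoint single-edge graphs, each with one incoming and one outgoing vertex, parametrised by the edge lengths (or, more carefully, a suitable limiting/degenerate chain in $C_0$ or $C_1$). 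The symmetry isomorphisms come from relabeling external vertices by the symmetric groups $S_m$ on incoming and $S_n$ on outgoing vertices; naturality and the hexagon/pentagon coherences reduce to the corresponding identities for disjoint union and relabeling of graphs, carrying the expected signs.

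The main obstacle, I expect, is not any single coherence diagram but rather the careful bookkeeping at the boundary of the cell decomposition: one must confirm that all the structure maps (disjoint union, gluing, relabeling) respect the identifications $O_\gamma \supset O_{\gamma/e}$ for internal non-loop edges of length zero, so that they descend to honest continuous maps on $\Gamma(-,-)$ and hence to chain maps on $C_\bullet$, and that the exceptional components (the single-edge graph and the two-vertex/two-edge graph) behave correctly under composition so that identities and associativity actually hold on the nose rather than merely up to homotopy. I would handle this by working cell-by-cell: each structure map sends a cell $O_\gamma \times O_{\gamma'}$ into a single cell $O_{\gamma''}$ determined combinatorially, and the face relations among the $\gamma''$ are induced by those among $\gamma,\gamma'$, so compatibility with $\partial$ is automatic once checked on cells. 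The sign conventions should be fixed once and for all by declaring $\mathscr{C}$ to be enriched over the symmetric monoidal category of chain complexes with the Koszul sign rule, after which all the required identities are forced. Finally I would note that unitality of the monoidal structure ($0\otimes n = n$) and the interchange law between $\otimes$ and $\circ$ follow formally from the fact that disjoint union and gluing commute as operations on graphs.
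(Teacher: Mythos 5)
Your proposal is correct and follows essentially the same route as the paper: a strict monoidal structure given by addition of integers and disjoint union (Cartesian product) of graphs and chains, identity morphisms realized by chains of single-edge incoming-to-outgoing graphs, the symmetry realized by chains of crossing graphs implementing the relabeling, and the standard boundary on singular cubical chains as the differential. The paper's proof is terser — it simply asserts strictness (so pentagon/triangle are trivial), writes the braiding explicitly as a product of crossing $0$-chains and verifies the hexagon identity by direct computation — whereas you spell out the chain-level bookkeeping (cross products of cubical chains, Leibniz compatibility, behavior at cell boundaries) that the paper leaves implicit; note that for cubical chains the cross product is immediate, so no shuffle/Eilenberg--Zilber decomposition is actually needed.
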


\begin{proof}

\begin{itemize}

\item Define $\sigma_{1,1} \in \text{Mor}_\mathscr{C}(1,1)$ as the 0-chain whose image is the metrized ribbon graph with 1 incoming and 1 outgoing external vertex (labeled by 1 and 1 respectively) connected by an edge of infinitesimal length. Define  another 0-chain $\sigma_{i,i}^{i = 1 \to m} := \sigma_{1,1} \cross \sigma_{2,2} .. \cross \sigma_{m,m} \in \text{Mor}_\mathscr{C}(m,m)$ as the Cartesian product of m-chains like $\sigma_0$ with itself m many times where each connected component of the image graph joining an incoming vertex labeled i with an outgoing vertex labeled i . $\sigma_{i,i}^{i = 1 \to m} \equiv 1_m$ serves as the identity morphism for any integer m. The Cartesian product of chains (thought of as maps between Euclidean spaces) here is assumed to be strictly associative. 
\item The monoidal structure is given by addition of integers and Cartesian product of ribbon graphs. Hence,
$\alpha, \lambda, \rho$ are identity maps on components. Also, Pentagon and triangle identities are trivially satisfied as this is a strict category.
\item We define the braiding on components $R_{m,n} : m+n \to n+m $ as
\begin{equation*}
R_{m,n} := \sigma_{1,n+1} \cross \sigma_{2,n+2} .. \cross \sigma_{m,n+m} \cross \sigma_{m+1,1} \cross \sigma_{m+2,2}..\cross \sigma_{m+n,n}  
\end{equation*}

R clearly squares to identity, and satisfies hexagon identity, since,

\begin{center} \begin{tikzcd}[sep=huge]
  (m + n) + p \arrow[r,"\alpha_{m,n,p}"] \arrow[d,"R_{m,n} \cross 1_p"'] & m + (n + p) \arrow[r,"R_{m,n + p}" ] & (n + p) + m \arrow[d,"\alpha_{n,p,m}"]\\
  (n + m) + p \arrow[r,"\alpha_{n,m,p}"'] &
  n + (m + p) \arrow[r,"1_n \cross R_{m,p}"']  &
  n + (p + m) 
\end{tikzcd}\\
\end{center}

the R.H.S.,
\begin{flalign*}
& 1_n \cross R_{m,p} \cdot \alpha_{n,m,p} \cdot R_{m,n} \cross 1_p = &&\\
& \sigma_{1,1} \cross ..\cross \sigma_{n,n} ..\cross \sigma_{n+1,n+p+1} \cross \sigma_{n+2,n+p+2}..\cross \sigma_{n+m,n+p+m} \cross \sigma_{n+m+1,n+1} \cross \sigma_{m+2,n+2}..\cross \sigma_{m+p,n+p} &&\\ 
& \sigma_{1,n+1} \cross \sigma_{2,n+2}..\cross \sigma_{m,n+m} \cross \sigma_{m+1,1} \cross \sigma_{m+2,2}..\cross \sigma_{m+n,n} \cross \sigma_{m+n+1,n+m+1}..\cross \sigma_{m+n+p,n+m+p} &&\\ 
\end{flalign*}
is equal to the L.H.S.,
\begin{equation*}
   \alpha_{n,p,m} \cdot R_{m,n + p} \cdot \alpha_{m,n,p} = R_{m,n + p} = \sigma_{1,n+p+1} \cross \sigma_{2,n+p+2} .. \cross \sigma_{m,n+p+m} \cross \sigma_{m+1,1} \cross \sigma_{m+2,2}..\cross \sigma_{m+n,n+p}  
    \end{equation*}

\item $C_\bullet(\Gamma(m,n))$ is a chain complex with the standard boundary map on singular chains as the differential $\partial_\mathscr{C}$. 
\end{itemize}
    
\end{proof}

\begin{Def}
    Let Comp denote the category of Chain complexes whose objects are chain complexes (here, graded super vector spaces with a differential), and morphisms are linear maps between the complexes, $\text{Mor}_\text{Comp}(A,B) := \text{Hom}(A,B)$. $\text{Hom}(A,B)$ is a chain complex with the differential as pre-post composition map, $\partial_\text{Comp}(\Phi):=\partial_B \Phi - (-1)^{|\Phi|} \Phi \partial_A$ where $\Phi \in \text{Hom}(A,B)$. The product structure is given by tensor product of chain complexes.
\end{Def}
\begin{Prop}
\label{37}
 Comp is a dgsm category. (
The proof is quite elementary and is provided in the appendix)   
\end{Prop}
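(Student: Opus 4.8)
The plan is to verify, one structure at a time, that the category $\mathsf{Comp}$ satisfies the axioms of a differential graded symmetric monoidal category, following the same checklist as in the proof that $\mathscr{C}$ is dgsm. Since the relevant category-theoretic definitions and notation are collected in Appendix \ref{section category theory background}, the proof amounts to exhibiting the required data and checking the coherence conditions. Most of these checks are entirely standard, so the write-up should be a careful enumeration rather than a computation.

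First I would establish the underlying monoidal structure: the tensor product $A \otimes B$ of two chain complexes, with differential $\partial_{A\otimes B} = \partial_A \otimes 1 + (-1)^{|\cdot|} 1 \otimes \partial_B$ (the Koszul sign convention), and the unit object $\mathbb{C}$ (or the ground field $\R$) concentrated in degree $0$ with zero differential. I would then write down the associator $\alpha_{A,B,C}\colon (A\otimes B)\otimes C \to A\otimes(B\otimes C)$ and the left and right unitors $\lambda_A, \rho_A$; for chain complexes these are the obvious ``relabeling'' isomorphisms and carry no signs, so the pentagon and triangle identities hold on the nose for the same reason they hold for ordinary vector spaces. The one point worth a sentence is that all these maps are chain maps, i.e. commute with the differentials, which follows immediately from the Koszul sign rule being associative and unital.

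Next I would treat the symmetry. Here the braiding is the only place signs genuinely intervene: define $R_{A,B}\colon A\otimes B \to B\otimes A$ on homogeneous elements by $a\otimes b \mapsto (-1)^{|a||b|} b\otimes a$. I would check that $R_{A,B}$ is a chain map (again a short Koszul-sign verification), that $R_{B,A}\circ R_{A,B} = \mathrm{id}$, and that the two hexagon axioms hold — the latter being a sign bookkeeping exercise identical to the one for graded vector spaces, which I would state holds and not belabor. Finally, for the differential graded enrichment I would record that $\mathrm{Hom}(A,B)$ is itself a chain complex with the differential $\partial_{\mathsf{Comp}}(\Phi) = \partial_B\Phi - (-1)^{|\Phi|}\Phi\partial_A$ as in the preceding definition, note that $\partial_{\mathsf{Comp}}^2 = 0$, and check that composition $\mathrm{Hom}(B,C)\otimes\mathrm{Hom}(A,B)\to\mathrm{Hom}(A,C)$ and the tensor product of morphisms are themselves chain maps with respect to these differentials (the Leibniz rule for $\partial_{\mathsf{Comp}}$), so that the monoidal and symmetry structure maps above are in fact morphisms of the dg category.

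The main obstacle, such as it is, is purely notational rather than conceptual: making sure the Koszul signs in $\partial_{A\otimes B}$, in $R_{A,B}$, and in $\partial_{\mathsf{Comp}}$ are mutually consistent so that every structural map really is a degree-zero chain map and the hexagon/pentagon diagrams commute with the signs tracked correctly. Once a single consistent sign convention is fixed, every verification reduces to a line or two, which is why the statement is flagged as elementary and deferred to the appendix. I would therefore present the proof as: (i) define $(\otimes,\mathbb{C},\alpha,\lambda,\rho)$ and check it is a strict-enough monoidal structure on chain complexes; (ii) define $R$ and verify the symmetry axioms; (iii) verify the dg enrichment compatibility; and conclude.
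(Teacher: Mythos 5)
Your proposal follows essentially the same route as the paper's appendix proof: take the tensor product of chain complexes over the ground field with the obvious (identity/relabeling) associator and unitors so that pentagon and triangle hold trivially, take the swap as braiding and check the hexagon, and record that $\mathrm{Hom}(A,B)$ is a chain complex with $\partial_{\mathrm{Comp}}(\Phi)=\partial_B\Phi-(-1)^{|\Phi|}\Phi\,\partial_A$. The only real difference is that you insert the Koszul sign $(-1)^{|a||b|}$ in the braiding and explicitly verify the dg-compatibility of composition and tensor, whereas the paper's proof uses the unsigned swap $a\otimes b\mapsto b\otimes a$; your signed version is the standard convention (and is what makes the braiding a chain map), so this is a refinement rather than a gap.
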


\begin{Def}
\label{22}
By \cite{tcft}, an open \textit{topological conformal field theory} (referred to as TCFT throughout this work) is a dgsm functor from $\mathscr{C}$ to Comp. 
\end{Def}
    \begin{Def}
    The data of a Calabi-Yau space defines a map $T$ as follows,
    \begin{equation}
    T(m):=\mathcal{A}^m \hspace{10 pt} ; \hspace{10 pt} T(\sigma) := \int_\sigma K_{\gamma} \in \text{Hom}(\mathcal{A}^m,\mathcal{A}^n)   
    \end{equation}
    where, $m,n$ are integers with $m$ and $n$ many elements respectively; $\sigma$ is a chain (with m incoming and n outgoing edges) in the cell labeled by the unmetrized ribbon graph $\gamma(\sigma) \equiv \gamma$. \\
    \footnote{Technically, throughout this work, almost always the tensor product of spaces actually refers to the completed tensor product of spaces, but we use the same notation ($\overline{A \otimes B} \equiv A \otimes B$). Moreover, the proofs are illustrated only for homogeneous elements, and follow for the whole space by linearity.}
    The following theorem (from \cite{tcft}) is stated here.
\end{Def}
\begin{Th}
   T is a TCFT. 
\end{Th}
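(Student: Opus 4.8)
The plan is to verify the three defining properties of a dgsm functor $T:\mathscr{C}\to\mathsf{Comp}$: that $T$ respects composition (and identities), that it is a chain map (commutes with the differentials $\partial_\mathscr{C}$ and $\partial_{\mathsf{Comp}}$), and that it is symmetric monoidal (compatible with the monoidal products and the braidings $R_{m,n}$). The object assignment $T(m)=\mathcal{A}^{\otimes m}$ is already fixed, so everything reduces to statements about the morphism-level map $T(\sigma)=\int_\sigma K_\gamma$.

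First I would establish functoriality on composition. Given chains $\sigma\in C_\bullet(\Gamma(m,n))$ and $\tau\in C_\bullet(\Gamma(n,p))$ supported in cells labeled by $\gamma$ and $\gamma'$, their composite $\tau\circ\sigma$ is supported in the cell of the glued graph $\gamma'\circ\gamma$, obtained by identifying the outgoing vertices of $\gamma$ with the incoming vertices of $\gamma'$. The key identity is that the edge-labeling form $K_{\gamma'\circ\gamma}$ factors through $K_\gamma$ and $K_{\gamma'}$: each glued pair of an outgoing external edge of $\gamma$ and an incoming external edge of $\gamma'$ becomes an internal edge whose propagator $K_t+dt\,L_t$ arises from composing the two convolution operators $C_{K_t}$ attached to the two half-edges, and the trace at the identified vertex is exactly the trace pairing $\mathrm{Tr}:\mathcal{A}\otimes\mathcal{A}\to\mathbb{C}$ used to contract $\mathcal{A}^{\otimes n}$ against $\mathcal{A}^{\otimes n}$. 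Combining this with Fubini for the integration pairing over $\mathrm{Met}(\gamma'\circ\gamma)\cong\mathrm{Met}(\gamma)\times\mathrm{Met}(\gamma')$ gives $T(\tau\circ\sigma)=T(\tau)\circ T(\sigma)$. For identities, one checks $T(1_m)=T(\sigma_{1,1}^{\otimes m})=\mathrm{id}_{\mathcal{A}^{\otimes m}}$; the single-edge $0$-chain $\sigma_{1,1}$ of infinitesimal length carries the propagator $K_0(x,y)=\delta_{x,y}$, and the convolution operator $C_{\delta_{x,y}}$ is the identity on $\mathcal{A}$ up to the sign $(-1)^P$ (which the conventions are set up to absorb), and the exceptional-type graphs in the moduli space definition are precisely what make this legitimate.

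Next I would treat the chain-map property: $\partial_{\mathsf{Comp}}(T(\sigma))=T(\partial_\mathscr{C}\sigma)$. This is where the earlier lemmas do the real work. The lemma that $K_\gamma$ commutes with the differential, together with Stokes' theorem for the chain–form integration pairing, yields $Q\circ\int_\sigma K_\gamma-(-1)^{|\sigma|}\int_\sigma K_\gamma\circ Q=\int_{\partial\sigma}K_\gamma$, where the differential on the form side splits into the internal $Q$-differential on $\mathcal{A}$ and the de Rham differential on $\mathrm{Met}(\gamma)$ (this is the content of the $dt\,L_t$ term: $Q$ applied across the $K_t+dt\,L_t$ propagator produces $dt\,\frac{d}{dt}K_t$ via $[Q,C_{L_t}]=-HK_t$ and $\frac{d}{dt}K_t=-H K_t$, so the de Rham piece of the differential reproduces the $t$-derivative). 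The boundary $\partial\sigma$ of a cubical chain has two kinds of faces: honest boundary faces of the cube mapping into the same cell, and faces where some internal edge length $l_e\to 0$, which land in the boundary cell $O_{\gamma/e}$; the compatibility condition $\omega_{\gamma/e}=\omega|_{\mathrm{Met}(\gamma/e)}$ baked into the definition of cellular forms and chains, plus the semigroup property $K_{t_1}\ast K_{t_2}=K_{t_1+t_2}$ of the heat kernel (which identifies the contracted propagator in $\gamma$ with the single propagator in $\gamma/e$), ensures $T$ sends these face terms to the right thing. The positivity conditions on loops and on paths between outgoing vertices are exactly what guarantee the relevant integrals converge so that Stokes applies without boundary-at-infinity corrections.

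Finally, symmetric monoidality. On objects, $T(m)\otimes T(n)=\mathcal{A}^{\otimes m}\otimes\mathcal{A}^{\otimes n}=\mathcal{A}^{\otimes(m+n)}=T(m+n)$ strictly, and disjoint union of ribbon graphs corresponds to tensor product of the associated $K_\gamma$ maps, so $T(\sigma\times\sigma')=T(\sigma)\otimes T(\sigma')$; since $\mathscr{C}$ is strict, the coherence isomorphisms $\alpha,\lambda,\rho$ are carried to identities automatically. For the braiding one computes $T(R_{m,n})$: it is the tensor product over the components $\sigma_{i,n+i}$ and $\sigma_{m+j,j}$ of the identity-like single-edge propagators, which assembles to exactly the symmetry isomorphism $\mathcal{A}^{\otimes(m+n)}\xrightarrow{\sim}\mathcal{A}^{\otimes(n+m)}$ in $\mathsf{Comp}$ (with the Koszul signs matching because the $\mathbb{Z}/2$-grading on $\mathcal{A}$ and the sign conventions in $C_K$ are compatible). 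I expect the main obstacle to be the chain-map/Stokes step — specifically, keeping careful track of signs and of how the de Rham differential on $\mathrm{Met}(\gamma)$, the internal differential $Q$, and the edge-contraction boundary faces interleave, and confirming that the heat-kernel semigroup property glues the propagators correctly across $O_{\gamma/e}$; the functoriality and monoidality steps are largely bookkeeping once the gluing identity for $K_\gamma$ under composition and disjoint union is set up cleanly.
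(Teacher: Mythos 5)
Your proposal is correct and follows essentially the same route the paper takes: the paper itself only states this theorem with a citation to Costello's \cite{tcft}, but its detailed proof of the analogous single-eigenvalue theorem ($T^\lambda$ is a TCFT) uses exactly your three-part decomposition --- functoriality of composition/identities via gluing, traces and Fubini (Lemma \ref{2}), symmetric monoidality via strictness of the monoidal structures and the braiding check (Lemma \ref{26}), and the chain-map property reduced to closedness of the propagator form $K_t + dt\,L_t$ and the chain map $K_\gamma(-)$ (Lemmas \ref{20}, \ref{21}, \ref{8}). The only stylistic difference is that the paper packages your Stokes'-theorem step abstractly, as a composite of chain maps built from the hom-tensor adjunction and the integration pairing $C_\bullet(\Gamma(m,n)) \otimes \Omega^\bullet(\text{Met}(\gamma)) \to \R$, whereas you unpack the boundary-face bookkeeping explicitly (where, strictly, the $l_e \to 0$ faces are handled by the $\delta$-limit of the heat kernel and the cellular compatibility $\omega_{\gamma/e} = \omega|_{\text{Met}(\gamma/e)}$ rather than the semigroup identity).
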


\section{Single eigenvalue TCFTs} \label{Single eigenvalue TCFTs}

Now that we have presented the setup from \cite{tcft}, we are ready to present a method for regularizing the TCFT with the ultimate objective of obtaining finite results from possibly divergent integrals that arise when certain restrictions from \cite{tcft} are relaxed.

\subsection{The cutoff heat kernel}

Given the setup of a Calabi-Yau space as in the intro, $(m \text{-dimensional manifold }M,\mathcal{A}=\Omega^\bullet(M,\R),Q=d$, $\text{Tr}_x(a(x)b(x)) = \int_{x \in M} a(x)\wedge b(x), \langle , \rangle, Q^\dagger=d^*, * = \star, H := dd^* + d^* d)$; we present an explicit form of the heat kernel with the idea borrowed from (\cite{book}).\\
We assume (without proof) that the set of eigenvalues of the Laplacian is countable (indexed by $\mathbb{N}$) and the eigenvalues are positive and indexed in an increasing order. For each $i \in \mathbb{N}$, let $\lambda_i$ denote the distinct $i^\text{th}$ eigenvalue and let $e_{ij} \in \Omega^\bullet(M)$ denote the basis vectors of the eigen-space $E_i$ of $\lambda_i$, where $j$ for each $i$ is indexed by at most a countable set $J_i$. Recall the fact that eigenvectors of the Laplacian are orthogonal (and hence linearly independent whenever non-zero).
\begin{Def}
\label{25}
     Define the \textit{complete heat kernel} $K_t(x,y)$ as the following differential form on $M^2$ for any fixed $t \in \R$,
     \begin{equation*}
     K_t(x,y) : = \sum_{i \in \mathbb{N}} e^{-\lambda_i t} \sum_{j \in J_i} e_{ij}(x) \otimes_{\R} \star_y e_{ij}(y) \in \Gamma(M \cross M, \mathcal{A} \boxtimes \mathcal{A} ) \equiv \mathcal{A}^{\otimes 2} ; 
\end{equation*}
As one allows $t \in \R$ to vary, the Heat Kernel is an element $K_t(x,y) \in \Gamma(M \cross M \cross \R_{>0}, \mathcal{A} \boxtimes \mathcal{A} ) \equiv \mathcal{A}^{\otimes 2} \otimes C^\infty(\R_{>0})$
\end{Def}
\begin{Lemma}
     The complete heat kernel $K_t(x,y)$ is a heat kernel for the Laplacian in the sense of definition \ref{31}.
\end{Lemma}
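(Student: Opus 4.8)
\medskip
\noindent\textbf{Proof strategy.}
The plan is to verify the two conditions of Definition \ref{31} directly from the spectral expansion that defines $K_t$: the heat equation as an essentially formal term-by-term computation, and the $t\to 0$ limit via completeness of the eigenbasis. (One keeps in mind that the element of $\Gamma(\R_{\geq 0}\cross M\cross M,\mathcal{A}_M\boxtimes\mathcal{A}_M)$ being constructed is smooth only for $t>0$, its $t=0$ slice being the distributional limit $\delta_{x,y}$.)

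For the heat equation $\tfrac{d}{dt}K_t(x,y)=-H_xK_t(x,y)$, I would differentiate the series of Definition \ref{25} term by term. Each $e_{ij}$ lies in the $\lambda_i$-eigenspace of $H$, so $H_x\big(e_{ij}(x)\big)=\lambda_i e_{ij}(x)$ while $H_x$ leaves the factor $\star_y e_{ij}(y)$ untouched; since $\tfrac{d}{dt}e^{-\lambda_i t}=-\lambda_i e^{-\lambda_i t}$, differentiating the $i$-th group of terms in $t$ produces $-\lambda_i e^{-\lambda_i t}\sum_{j\in J_i} e_{ij}(x)\otimes_\R\star_y e_{ij}(y)$, which is precisely $-H_x$ applied to that group. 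Summing gives the identity, \emph{provided} the differentiated series converges and the interchange of $H_x$, $\tfrac{d}{dt}$ and $\sum$ is legitimate. This is the one genuinely analytic point: for $t>0$ the weights $e^{-\lambda_i t}$ decay faster than any power of $\lambda_i$, so combining Weyl-type growth bounds for the $\lambda_i$ and $\dim E_i$ with elliptic/Sobolev estimates bounding $\|e_{ij}\|_{C^k}$ polynomially in $\lambda_i$, all series obtained by applying finitely many $x$-derivatives and $t$-derivatives converge absolutely and uniformly on compact subsets of $\R_{>0}\cross M\cross M$. In particular $K_t$ is a genuine smooth section of $\mathcal{A}_M\boxtimes\mathcal{A}_M$ for $t>0$, and term-by-term differentiation is valid.

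For the initial condition I would read ``$\lim_{t\to 0}K_t(x,y)=\delta_{x,y}$'' as the statement that the convolution operators $C_{K_t}$ converge to the identity on $\mathcal{A}$ (equivalently, $K_t$ tends to the Schwartz kernel of $\mathrm{Id}$ in the distributional sense). Choosing the $e_{ij}$ orthonormal, they form an orthonormal basis of the $L^2$-completion of $\mathcal{A}$ by the spectral proposition of Section \ref{32}. Substituting the series into $C_{K(x,y)}f=(-1)^P\text{Tr}_y K(x,y)f(y)$ and using compatibility of the metric, $\text{Tr}(e*e')=\langle e,e'\rangle$, one computes $C_{K_t}f=\sum_i e^{-\lambda_i t}\sum_{j\in J_i}\langle e_{ij},f\rangle\, e_{ij}$, the sign $(-1)^P$ in the definition of $C_K$ being exactly what makes the overall constant $+1$. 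Letting $t\to 0$, every weight $e^{-\lambda_i t}\to 1$ and, by Parseval, the right-hand side converges to the eigenfunction expansion of $f$, which is $f$ itself; hence $C_{K_t}\to\mathrm{Id}$, as required.

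The main obstacle is precisely the analytic bookkeeping: one needs enough of the spectral theory of the elliptic operator $H$ — countability and growth of the eigenvalues, polynomial $C^k$-bounds on eigenfunctions, and the smoothing property of the heat semigroup for $t>0$ — to upgrade the formal manipulations above to rigorous ones. This is all classical (Minakshisundaram--Pleijel, Weyl's law, elliptic regularity), so I would cite it rather than reprove it; the algebraic content, namely the eigenvector identity and the orthonormal-basis expansion, is immediate from what is already recalled in the excerpt.
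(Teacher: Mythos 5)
Your proposal is correct and follows essentially the same route as the paper: verify the heat equation by term-by-term differentiation of the spectral series, using that each $e_{ij}$ is a $\lambda_i$-eigenvector, and obtain $\lim_{t\to 0}K_t(x,y)=\delta_{x,y}$ from completeness of the (orthonormal) eigenbasis. The paper simply asserts both facts in two lines; you additionally supply the analytic justification (convergence of the differentiated series for $t>0$, and the interpretation of the initial condition via $C_{K_t}\to\mathrm{Id}$ using $\mathrm{Tr}(e*e')=\langle e,e'\rangle$), which the paper leaves implicit.
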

\begin{proof}
    $K_t(x,y)$ clearly satisfies the heat equation, 
    \begin{equation*}
        H_x K_t(x,y) = H_y K_t(x,y) = - \frac{d}{dt} K_t(x,y) = \sum_{i \in \mathbb{N}} \lambda_i e^{-\lambda_i t} \sum_{j \in J_i} e_{ij}(x) \otimes_{\R} \star_y e_{ij}(y)
    \end{equation*}
    and, $ \text{lim}_{t \to 0} K_t(x,y) = \sum_{i \in \mathbb{N}, j \in J_i} e_{ij}(x) \otimes_{\R} \star_y e_{ij}(y) = \delta(x-y)$
\end{proof}
\begin{Cor}
\label{6}
    The complete heat kernel is a closed differential form.
\end{Cor}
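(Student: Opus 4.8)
The statement to prove is Corollary \ref{6}: the complete heat kernel is a closed differential form.

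The plan is to deduce this directly from the previous lemma, which establishes that $K_t(x,y)$ is a heat kernel in the sense of Definition \ref{31}, together with Proposition \ref{4}. Indeed, among the relations listed in Proposition \ref{4} is precisely the assertion that $K_t(x,y)$ is a closed differential form on $M \times M$ for all $t \in \R_{\geq 0}$, as an immediate consequence of $Q_x K_t(x,y) + Q_y K_t(x,y) = 0$ (with $Q = d$ in our Calabi-Yau data, this says $d_{M \times M} K_t = d_x K_t + d_y K_t = 0$). So the corollary follows formally: the complete heat kernel is a heat kernel, hence Proposition \ref{4} applies, hence it is closed.

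Alternatively, and perhaps more in the spirit of making the corollary self-contained, I would verify closedness by a direct computation on the explicit eigenfunction expansion. Write $K_t(x,y) = \sum_{i} e^{-\lambda_i t} \sum_{j \in J_i} e_{ij}(x) \otimes_{\R} \star_y e_{ij}(y)$ and compute $d_{M \times M} K_t = \sum_i e^{-\lambda_i t} \sum_{j \in J_i} \big( d e_{ij}(x) \otimes \star e_{ij}(y) \pm e_{ij}(x) \otimes d \star e_{ij}(y) \big)$. One then uses that the eigenforms can be chosen so that $d e_{ij}$ and $d^* e_{ij}$ are again (combinations of) eigenforms for the same eigenvalue $\lambda_i$ — since $H$ commutes with both $d$ and $d^*$ — together with the Hodge-theoretic identity relating $d \star$ to $\star d^*$. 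Grouping terms by eigenvalue and using the relation $H_x K_t = H_y K_t$ (established in the preceding lemma) shows the sum telescopes to zero. The key point is that $[H, d] = 0$ keeps everything within a fixed eigenspace, so term-by-term differentiation is legitimate and the cancellation is visible eigenvalue-by-eigenvalue.

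The main obstacle here is analytic rather than algebraic: one is differentiating an infinite series term by term, so strictly speaking one needs to justify that the series for $K_t$ and its derivatives converge (in an appropriate topology on sections of $\mathcal{A}_M \boxtimes \mathcal{A}_M$, e.g. uniformly on compact sets together with all derivatives for $t > 0$) so that $d$ commutes with the summation. This is standard elliptic theory — the factor $e^{-\lambda_i t}$ with $\lambda_i \to \infty$ provides rapid decay that dominates the polynomial growth of Sobolev norms of the eigenforms via Weyl's law — but it is the step that requires genuine care, and the paper appears to take such convergence facts for granted (as flagged in the surrounding text). Given that, I would present the short deduction from Proposition \ref{4} as the proof, noting in passing that it can also be seen directly from the expansion using $[H,d]=0$.
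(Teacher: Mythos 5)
Your primary argument — the complete heat kernel is a heat kernel by the preceding lemma, so closedness follows immediately from Proposition \ref{4} — is exactly the paper's own one-line proof. The supplementary direct eigenfunction computation and the convergence caveat are reasonable extras but not needed; the approach matches the paper.
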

\begin{proof}
     The complete heat kernel is a heat kernel for the Laplacian, hence, it is closed by proposition \ref{4}
\end{proof}
\begin{Def}
Let $\lambda_i \in \R_{\geq 0}$ be an eigenvalue of the Laplacian. Define the \textit{eigenvalue $\lambda_i$-heat kernel} $K^\lambda_t(x,y)$ as,
    \begin{equation*}
    K^\lambda_t(x,y) : = e^{-\lambda_i t} \sum_j e_{ij}(x) \otimes_{\R} \star_y e_{ij}(y) \in \Gamma(M \cross M \cross \R_{\geq 0}, \mathcal{A} \boxtimes_{C^\infty(M \cross M \cross \R_{\geq 0})} \mathcal{A} ) \equiv \mathcal{A}^{\otimes 2} \otimes C^\infty(\R_{\geq 0})
        \end{equation*}
\end{Def}
This definition easily extends to \textit{eigenvalue $A$-heat kernel} $K^A_t(x,y)$ over any arbitrary subset $A$ of the set of all eigenvalues of the Laplacian (denoted by $E \subset \R_{\geq0}$) as,
\begin{equation*}
     K^A_t(x,y) : = \sum_{i : \lambda_i \in A} e^{-\lambda_i t} \sum_{j \in J_i} e_{ij}(x) \otimes_{\R} \star_y e_{ij}(y) \in \mathcal{A}^{\otimes 2} \otimes C^\infty(\R_{\geq 0})
\end{equation*}
and in particular becomes the scale infinity heat kernel (Complete heat kernel in our terms) $K_t(x,y)$ when $A = [0,\infty) \cap E$, and becomes the cutoff heat kernel $K^{[0,\Lambda]}_t(x,y) \equiv K^\Lambda_t(x,y)$ used in the usual regularized quantum field theory with an energy cutoff $\Lambda$ when $A = [0,\Lambda] \cap E$. We'll abuse the notation slightly by dropping $\cap E$, and for instance, just writing $A = [0,\Lambda]$, $A= [0,\infty)$, etc.
\begin{Lemma}
\label{18}
    For any fixed $t \in \R_{\geq 0}$, the \textit{eigenvalue $\lambda$ heat kernel} $K^\lambda_t(x,y)$ is a closed differential form on $M \cross M$;
    \begin{equation}
        (d_x + d_y) K^\lambda_t(x,y) = 0
    \end{equation}
\end{Lemma}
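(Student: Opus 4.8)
The plan is to strip the single–eigenvalue contribution off the complete heat kernel, whose closedness is already available from Corollary~\ref{6}. Write
\[
K^{\lambda}_t(x,y)\;=\;e^{-\lambda_i t}\,\kappa_i(x,y),\qquad
\kappa_i(x,y)\;:=\;\sum_{j\in J_i}e_{ij}(x)\otimes_{\R}\star_y e_{ij}(y).
\]
Because $e^{-\lambda_i t}\neq 0$ for every $t\in\R_{\geq 0}$, it suffices to prove that the $t$-independent form $\kappa_i$ satisfies $(d_x+d_y)\kappa_i=0$. Note $\kappa_i$ is a genuine smooth form on $M\times M$: the Hodge Laplacian on the compact manifold $M$ has finite-dimensional eigenspaces, so each $J_i$ is finite and the sum is finite.

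First I would invoke Definition~\ref{25} and Corollary~\ref{6}: the complete heat kernel $K_t=\sum_{i\in\mathbb{N}}e^{-\lambda_i t}\kappa_i$ is closed on $M\times M$ for every $t>0$. For $t>0$ this eigenfunction expansion converges in the $C^\infty$ topology (the usual smoothing property of $e^{-tH}$), so $d=d_x+d_y$ may be applied term by term, giving
\[
\sum_{i\in\mathbb{N}}e^{-\lambda_i t}\,(d_x+d_y)\kappa_i\;=\;0\qquad\text{for all }t>0.
\]
Now multiply by $e^{\lambda_1 t}$ and let $t\to\infty$; since $\lambda_i-\lambda_1>0$ for $i\geq 2$ and the tail of the (convergent) series is uniformly small for large $t$, this forces $(d_x+d_y)\kappa_1=0$. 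Then $K_t-e^{-\lambda_1 t}\kappa_1=\sum_{i\geq 2}e^{-\lambda_i t}\kappa_i$ is again closed, and repeating the step with $e^{\lambda_2 t}$ yields $(d_x+d_y)\kappa_2=0$; by induction $(d_x+d_y)\kappa_i=0$ for every $i$, hence $(d_x+d_y)K^{\lambda}_t=e^{-\lambda_i t}(d_x+d_y)\kappa_i=0$, which is the claim.

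A second, more hands-on route would bypass Corollary~\ref{6}: since $d$ commutes with the Laplacian $H$ (as does the Hodge star $\star$), both $d$ and $\star$ preserve each eigenspace $E_i$, so one can expand $d e_{ij}$ and $d^{*}e_{ij}$ back into the eigenbasis $\{e_{ik}\}_{k\in J_i}$, relate the two expansions by mutual adjointness of $d$ and $d^{*}$ with respect to $\langle\,,\,\rangle$, and then check that the $d_x$- and $d_y$-contributions to $(d_x+d_y)K^{\lambda}_t$ cancel after using the Hodge-star identity $d\star=(-1)^{\deg}\star d^{*}$. I expect the real obstacle to lie entirely in that second route's sign bookkeeping — the Koszul sign $d_y$ picks up in passing the $x$-factor on $M\times M$, the degree-dependent sign relating $d\star$ and $\star d^{*}$, and the parity $P$ of the trace all have to be reconciled — whereas the first route reduces everything to the two standard facts that the heat-kernel series converges in $C^\infty$ and that distinct exponentials $\{e^{-\lambda_i t}\}_i$ can be separated, both routine.
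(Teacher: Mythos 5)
Your main route is correct, and it shares the paper's overall skeleton: both proofs start from Corollary \ref{6} (closedness of the complete heat kernel), pass $d_x+d_y$ inside the eigenfunction expansion, and then argue that the vanishing of $\sum_i e^{-\lambda_i t}(d_x+d_y)\kappa_i$ forces each summand to vanish. Where you differ is in the separation step. The paper works at a \emph{fixed} $t$: it observes that each term $(d_x+d_y)\bigl[\sum_{j\in J_i} e_{ij}(x)\otimes_{\R}\star_y e_{ij}(y)\bigr]$ is an eigenvector of $H_x\otimes 1_y$ with eigenvalue $\lambda_i$ (using $[d,H]=0$ and that the $x$-factors stay in $E_i$), so the terms for distinct $i$ are mutually orthogonal, hence linearly independent, and a vanishing combination with nonzero coefficients $e^{-\lambda_i t}$ kills each term. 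You instead separate by varying $t$: multiply by $e^{\lambda_1 t}$, let $t\to\infty$ so the tail dies, and induct down the spectrum. Both are valid; the paper's argument is purely algebraic and needs no control of the series as $t\to\infty$ (only the term-by-term differentiation, which you also use and justify more carefully via $C^\infty$ convergence of the heat expansion for $t>0$ and finite-dimensionality of each $E_i$), while yours avoids invoking the eigenspace structure of the differentiated terms at the cost of a (routine but unproved) uniform tail estimate for large $t$. Your sketched second route (direct computation with $d\star=\pm\star d^{*}$) is not needed and is not the paper's method either, so no comment is required beyond noting you correctly identified its sign bookkeeping as the delicate point.
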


\begin{proof}
    By Corollary \ref{6},
    \begin{equation*}
      (d_x + d_y) [  \sum_{i \in \mathbb{N}} e^{-\lambda_i t} \sum_{j \in J_i} e_{ij}(x) \otimes_{\R} \star_y e_{ij}(y) ] = 0
    \end{equation*}
     \begin{equation}
     \label{5}
       \implies  \sum_{i \in \mathbb{N}} e^{-\lambda_i t} (d_x + d_y) [\sum_{j \in J_i} e_{ij}(x) \otimes_{\R} \star_y e_{ij}(y) ] \equiv \sum_{i \in \mathbb{N}} e^{-\lambda_i t} A_i= 0
    \end{equation}

For a fixed $i$,
\begin{equation}
\label{60}
    A_{i} \equiv (d_x + d_y) [\sum_{j \in J_{i}} e_{ij}(x) \otimes_{\R} \star_y e_{ij}(y) ] = \sum_{j \in J_{i}} [d_x e_{ij}(x) \otimes_{\R} \star_y e_{ij}(y) + e_{ij}(x) \otimes_{\R} d_y \star_y e_{ij}(y)] \equiv \sum_{j \in J_{i}} B_{ij} \vspace{5 pt}
\end{equation}
\begin{equation}
    \implies \sum_{i \in \mathbb{N}} e^{-\lambda_i t} \sum_{j \in J_i} B_{ij} = 0 \hspace{30 pt} (\text{by (\ref{5})} \text{and (\ref{60}))}
\end{equation}
Note that, $\forall j \in J_{i} ; B_{ij} \equiv [d_x e_{ij}(x) \otimes_{\R} \star_y e_{ij}(y) + e_{ij}(x) \otimes_{\R} d_y \star_y e_{ij}(y)]$ is an eigenvector of $H_x \otimes 1_y$ with eigenvalue $\lambda_{i}$. This means, $A_{i_l} \perp A_{i_k} ;  i_l, i_k \in \mathbb{N};$ whenever $l \neq k$ (and hence linearly independent whenever non-zero). Hence, by equation (15)
\begin{equation}
  \sum_{i \in \mathbb{N}} e^{-\lambda_i t} A_i= 0 \implies   A_{i} \equiv (d_x + d_y) [\sum_{j \in J_{i}} e_{ij}(x) \otimes_{\R} \star_y e_{ij}(y) ] = 0 ; \forall i \in \mathbb{N} \implies (d_x + d_y) K^\lambda_t(x,y) = 0
\end{equation}
\end{proof}

\begin{Prop}
\label{7}
    We list some other identities (listed below) to further study the cutoff picture.
 
    \begin{equation}
    \label{12}
    (d_x + d_y) K^\Lambda_t(x,y) = 0
\end{equation}
\begin{equation}
\label{13}
     (d^*_x - d^*_y) K^\Lambda_t(x,y) = 0
\end{equation}
   \begin{equation}
   \label{14}
    (H_x - H_y) K^\Lambda_t(x,y) = 0
\end{equation}
 \begin{equation}
 \label{17}
     (d_x+d_y)L^\Lambda_t(x,y) =  - H_x K^\Lambda_t(x,y)=\frac{d}{dt}K^\Lambda_t(x,y)
\end{equation}
\end{Prop}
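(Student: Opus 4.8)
The four identities are the cutoff analogues of relations already recorded for the complete heat kernel in Proposition \ref{4}, so the plan is to obtain each of them by the same device used in the proof of Lemma \ref{18}. Write $K^\Lambda_t(x,y)=\sum_{i\,:\,\lambda_i\le\Lambda} e^{-\lambda_i t}\sum_{j\in J_i} e_{ij}(x)\otimes_{\R}\star_y e_{ij}(y)$; since the eigenvalues of the Laplacian on the compact manifold $M$ have no finite accumulation point, this is a \emph{finite} sum of single-eigenvalue kernels, so every $\R$-linear operator on $\mathcal{A}^{\otimes 2}$ distributes over it and it suffices to treat one eigenvalue at a time. The only structural input I will use is that $d$, $d^*$ and the Hodge star $\star$ all commute with $H=dd^*+d^*d$; hence if $e$ is a $\lambda$-eigenform of $H$ then so are $de$, $d^*e$, $\star e$ (or they vanish), and consequently applying any of $d_x,d_y,d^*_x,d^*_y,H_x,H_y$ to $\sum_j e_{ij}(x)\otimes\star_y e_{ij}(y)$ produces an element of the $\lambda_i$-eigenspace of $H_x\otimes 1_y$.

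\textbf{The two easy identities.} The relation $(d_x+d_y)K^\Lambda_t=0$ is exactly Lemma \ref{18} summed over $\lambda_i\le\Lambda$. For $(H_x-H_y)K^\Lambda_t=0$ no orthogonality argument is needed: term by term $H_x\big(e_{ij}(x)\otimes\star_y e_{ij}(y)\big)=\lambda_i\,e_{ij}(x)\otimes\star_y e_{ij}(y)=H_y\big(e_{ij}(x)\otimes\star_y e_{ij}(y)\big)$, the last step because $\star_y e_{ij}(y)$ is again a $\lambda_i$-eigenform, and the sum of these vanishes. The same computation gives $\frac{d}{dt}K^\Lambda_t=-\sum_{\lambda_i\le\Lambda}\lambda_i e^{-\lambda_i t}\sum_j e_{ij}(x)\otimes\star_y e_{ij}(y)=-H_xK^\Lambda_t$, which is the second equality in the last line of the statement.

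\textbf{The adjoint identity.} For $(d^*_x-d^*_y)K^\Lambda_t=0$ I would start from $(d^*_x-d^*_y)K_t=0$ (the relation $Q^\dagger_xK_t=Q^\dagger_yK_t$ of Proposition \ref{4}), expand in eigenspaces as $\sum_i e^{-\lambda_i t}(d^*_x-d^*_y)\big[\sum_j e_{ij}(x)\otimes\star_y e_{ij}(y)\big]=0$, note by the structural input that the $i$-th summand lies in the $\lambda_i$-eigenspace of $H_x\otimes 1_y$, invoke orthogonality of distinct eigenspaces (the eigenvectors of $H$ span the $L^2$-completion of $\mathcal{A}$) so that the nonzero summands are linearly independent, and deduce that each summand vanishes since $e^{-\lambda_i t}\ne 0$; restricting the sum to $\lambda_i\le\Lambda$ gives the claim. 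Equivalently, one proves $(d^*_x-d^*_y)K^\lambda_t=0$ verbatim as in Lemma \ref{18} and sums.

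\textbf{The $L$-identity and the main obstacle.} Setting $L^\Lambda_t(x,y):=-d^*_xK^\Lambda_t(x,y)$ in analogy with $L_t=-Q^\dagger_xK_t$, the remaining equality $(d_x+d_y)L^\Lambda_t=-H_xK^\Lambda_t$ can be obtained either by descending the corresponding relation for the complete kernel $L_t$ recorded in Proposition \ref{4} eigenspace-by-eigenspace, or directly from $(d_x+d_y)K^\Lambda_t=0$: write $d_yK^\Lambda_t=-d_xK^\Lambda_t$, move $d^*_x$ past $d_y$ (these anticommute, being odd operators acting on different tensor factors), and collect terms to get $-(d_xd^*_x+d^*_xd_x)K^\Lambda_t=-H_xK^\Lambda_t$. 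The main obstacle I anticipate is not analytic — finiteness of the sum removes every convergence question, and all the geometry is contained in the commutation of $d,d^*,\star$ with $H$ — but bookkeeping: one must fix the definition of $L^\Lambda_t$ once and for all and then carry the Koszul signs from the $\Z_2$-grading on $\mathcal{A}^{\otimes 2}$ consistently when splitting $d_x,d_y$ across the tensor factors and when moving $d^*_x$ past $d_y$, matching the sign conventions of \cite{tcft}.
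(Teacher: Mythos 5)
Your proposal is correct and follows essentially the same route as the paper: identity (\ref{12}) via Lemma \ref{18}, (\ref{13}) by the same eigenspace-orthogonality argument, (\ref{14}) by a termwise computation using that $\star_y$ and $H_y$ commute, and (\ref{17}) by expanding $(d_x+d_y)(-d^*_x K^\Lambda_t)$, anticommuting $d^*_x$ past $d_y$ across the tensor factors, and invoking closedness of $K^\Lambda_t$ to produce $-H_xK^\Lambda_t$. The only cosmetic difference is that you justify the eigenvalue-by-eigenvalue reduction via finiteness of the spectrum below $\Lambda$, where the paper simply argues on homogeneous components and concludes by linearity.
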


\begin{proof} 
We have proven (\ref{12}) in Lemma \ref{18}. Proof of (\ref{13}) is quite similar to proof of (\ref{12}).\\ 
Proof of (\ref{14}) : Most proofs are stated just for a homogeneous component (with the same notation) for the sake of dropping the tedious summation notations and indices, linearity ensures proofs for the whole sum. 
\begin{flalign*}
&    H_x K^\Lambda_t(x,y) = (H_x \otimes 1) [  e^{-\lambda t}  e(x) \otimes_{\R} \star_y e(y) ] = [  e^{-\lambda t} \lambda e(x) \otimes_{\R} \star_y e(y) ] = [  e^{-\lambda t}  e(x) \otimes_{\R} \star_y \lambda e(y) ] &&\\
&   = [  e^{-\lambda t}  e(x) \otimes_{\R} \star_y H_y e(y) ] = [ e^{-\lambda t}  e(x) \otimes_{\R} H_y \star_y e(y) ] =  (-1)^{|H_y||e(x)|} (1 \otimes H_y) [ e^{-\lambda t}  e(x) \otimes_{\R} \star_y e(y)] = H_y K^\Lambda_t(x,y) &&\\
\end{flalign*}
since $|H_y|=0$ and $H_y$ and $\star_y$ commute.\\
Proof of (\ref{17}) :
\begin{flalign*}
& (d_x+d_y)L^\Lambda_t(x,y) = - (d_x+d_y) d^*_x K^\Lambda_t(x,y) = - [H_x K^\Lambda_t(x,y) - d^*_x d_x K^\Lambda_t(x,y) + (-1)^{|d^*_x| |d_y|} d^*_x d_yK^\Lambda_t(x,y)]&&\\
& = -H_x K^\Lambda_t(x,y) = \frac{d}{dt}K^\Lambda_t(x,y) &&\\
\end{flalign*}
\end{proof}
\begin{Cor}
\label{9}
All identities in proposition \ref{7} hold true for eigenvalue $A$-heat kernel $K^A_t(x,y)$ (for any countable subset $ A \subset R_{\geq 0}$), simply since the operators are linear. 
\end{Cor}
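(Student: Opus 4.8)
The plan is to reduce every assertion to the single-eigenvalue case and then push the linear operators through the (in general infinite) sum defining $K^A_t$. Writing $I_A=\{i\in\mathbb{N}:\lambda_i\in A\}$, the extended Definition~\ref{25} gives $K^A_t(x,y)=\sum_{i\in I_A}K^{\lambda_i}_t(x,y)$, and applying $-d^*_x$ termwise gives $L^A_t(x,y)=\sum_{i\in I_A}L^{\lambda_i}_t(x,y)$. So it suffices to (a) verify each of (\ref{12})--(\ref{17}) with $\Lambda$ replaced by an arbitrary single eigenvalue $\lambda_i$, and (b) show that each of $d_x,d_y,d^*_x,d^*_y,H_x,H_y$ and $\tfrac{d}{dt}$ commutes with $\sum_{i\in I_A}$.

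For (a): identity (\ref{12}) for $K^{\lambda_i}_t$ is exactly Lemma~\ref{18}. Identity (\ref{13}) for $K^{\lambda_i}_t$ follows by repeating that argument with $(d_x+d_y)$ replaced by $(d^*_x-d^*_y)$: one starts from $(d^*_x-d^*_y)K_t=0$ (Proposition~\ref{4}, using $Q^\dagger=d^*$), notes that $d^*$ commutes with $H$ so that $(d^*_x-d^*_y)$ sends the $\lambda_i$-eigenspace $E_i\otimes E_i$ of $H_x\otimes 1$ into itself (using also that $\star$ commutes with $H$, so $\star_y e_{ij}(y)\in E_i$), and then invokes orthogonality of the eigenspaces to split off the $\lambda_i$-component. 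Identities (\ref{14}) and (\ref{17}) need nothing new: the computations in the proof of Proposition~\ref{7} were already carried out on a single homogeneous summand $e^{-\lambda t}e(x)\otimes\star_y e(y)$, hence apply verbatim to each term $e^{-\lambda_i t}e_{ij}(x)\otimes\star_y e_{ij}(y)$ and, after summing over $j\in J_i$, to $K^{\lambda_i}_t$; thus $(d_x+d_y)L^{\lambda_i}_t=-H_xK^{\lambda_i}_t=\tfrac{d}{dt}K^{\lambda_i}_t$ and $(H_x-H_y)K^{\lambda_i}_t=0$.

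For (b): if $I_A$ is finite there is nothing to prove beyond finite additivity of the operators, and this case already contains Proposition~\ref{7}, since discreteness of the spectrum of an elliptic operator on a compact manifold makes $[0,\Lambda]\cap E$ finite. When $I_A$ is infinite (the extreme case $A=[0,\infty)$ recovering Proposition~\ref{4}) one must interchange each operator with an infinite series; the honest justification is that all of $d_x,d_y,d^*_x,d^*_y,H_x,H_y,\tfrac{d}{dt}$ are continuous linear maps for the Fréchet ($C^\infty$) topology on $\Gamma(M\times M\times\R_{>0},\mathcal{A}\boxtimes\mathcal{A})$, and that the series for $K^A_t$ converges there: for fixed $t>0$ the weights $e^{-\lambda_i t}$ decay faster than any power of $\lambda_i$, while Weyl asymptotics bound the growth of $\lambda_i$ and of the multiplicities $|J_i|$ polynomially and elliptic estimates bound $\|e_{ij}\|_{C^k}$ polynomially in $\lambda_i$, so every $C^k$-seminorm of the tail goes to zero, uniformly for $t$ in compact subsets of $\R_{>0}$. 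Granting this, each operator commutes with $\sum_{i\in I_A}$ and the single-eigenvalue identities add up to the ones asserted.

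I expect step (b), specifically the termwise differentiation of the heat series, to be the only genuine obstacle; (a) is bookkeeping on top of Lemma~\ref{18} and Proposition~\ref{7}. In practice I would state the $C^\infty$-convergence as a standard property of heat-kernel expansions (it is tacitly used already in Definition~\ref{25}), and note that for the two cases actually needed in the paper --- $A=[0,\Lambda]$, which is finite, and $A=[0,\infty)$, which is Proposition~\ref{4} --- finite additivity together with Proposition~\ref{4} already suffices, so the general countable statement is an interpolation between these and the ``linearity'' remark in the corollary is, modulo this convergence point, accurate.
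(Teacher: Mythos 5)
Your proposal follows the same route as the paper: write $K^A_t=\sum_{\lambda\in A}K^\lambda_t$ and apply each identity of Proposition \ref{7} termwise, which is precisely what the paper's one-line justification (``simply since the operators are linear'') amounts to, with the single-eigenvalue cases supplied by Lemma \ref{18} and the homogeneous-component computations in the proof of Proposition \ref{7}. The only addition is your step (b), justifying the interchange of $d_x,d^*_x,H_x,\tfrac{d}{dt}$ with an infinite sum via $C^\infty$-convergence of the eigenfunction expansion --- a point the paper leaves tacit (as it already does in Definition \ref{25}) --- so this is a sharpening of the same argument rather than a different one.
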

\subsubsection{Single eigenvalue TCFT}
We slightly modify the labeling procedure by replacing the complete heat kernel in definition \ref{25} by some single eigenvalue heat kernel 
\begin{Def}
    Let $\mathcal{A}_\lambda \subset \mathcal{A}$ denote the eigenspace of $H$ corresponding to the eigenvalue $\lambda$. Since, $H$ and $d$ commute, $d$ preserves the eigenspaces of $H$, and hence $(\mathcal{A}_\lambda,d)$ forms a chain complex (a sub complex of $\mathcal{A}$).\\
    We denote forms in $\mathcal{A}_\lambda$ as 
$a_\lambda$. $a_\lambda \in \mathcal{A}_\lambda$ is a projection for some $a \in \mathcal{A}$
\end{Def}
Let $\gamma$ be an unmetrized ribbon graph with n incoming and m outgoing labeled external vertices. $\text{Met}(\gamma)$ is the space of metrics on $\gamma$. \\ 
Let $a=a_1(x_1) \otimes .. \otimes a_n(x_n)$ where each $a_i(x_i) \in \mathcal{A}$ labels an incoming vertex with respect to labeling of incoming external vertices.

\begin{Def}
Given an edge $e \in E(\gamma)$ joining $v_i,v_j \in V(\gamma)$, we label the edge by the form $\omega^\lambda_e$ which is defined in the following way, \\ 
If $e$ is an internal edge, then,
\begin{equation*}
    \omega^\lambda_e := K^\lambda_t(x_i,x_j) + dt L^\lambda_t(x_i,x_j) \in \mathcal{A}_\lambda^{\otimes 2} \otimes \Omega^*(\text{Met}(\gamma))
\end{equation*}
If $e$ is an incoming external edge with external vertex labeled by $a^\lambda_j(x_j)$, then
\begin{equation*}
    \omega^\lambda_e := [K^\lambda_t(x_1,x_2) + dt L^\lambda_t(x_1,x_2)] \wedge a^\lambda_k(x_k) \in \mathcal{A}_\lambda^{\otimes 2} \otimes \Omega^\bullet(\text{Met}(\gamma))
\end{equation*}
\end{Def}
\begin{Def}
    Define the map $ \tilde{K}^\lambda_\gamma : \mathcal{A}^m_\lambda \to \mathcal{A}_\lambda^{\otimes 2|E(\gamma)|} \otimes \Omega^\bullet(\text{Met}(\gamma))$ as,
    \begin{equation}
    a \mapsto \tilde{K}^\lambda_\gamma(a) := \otimes_{e \in E(\gamma)} \omega_e \in \mathcal{A}_\lambda^{\otimes 2|E(\gamma)|} \otimes \Omega^\bullet(\text{Met}(\gamma))
    \end{equation}
    Label each vertex $v \in V(\gamma)/[m]$, with a trace map, $Tr_v : \mathcal{A}_\lambda \to \R$. \\
    Then define the map $K_\gamma^\lambda : \mathcal{A}_\lambda^{\otimes m} \to \mathcal{A}_\lambda^{\otimes n} \otimes \Omega^*(\text{Met}(\gamma))$ as, 
 \begin{equation*}
       a \mapsto K_\gamma^\lambda (a) := \otimes_{v \in V(\gamma)/[n]} Tr_v (\tilde{K}_\gamma^\lambda(a))
    \end{equation*}
    \end{Def} 
The tensor order of the output forms is determined by labeling of external vertices.
\footnote{We will use the notation $\tilde{K}^\lambda_\gamma \in \mathcal{A}_\lambda^{\otimes 2|E(\gamma)|} \otimes \Omega^\bullet(\text{Met}(\gamma))$ for the form (and not a map) to which $a$ hasn't been multiplied (fed) as input, we use $\tilde{K}^\lambda_\gamma(-) \in \text{Hom}(\mathcal{A}^m,\mathcal{A}_\lambda^{\otimes 2|E(\gamma)|} \otimes \Omega^\bullet(\text{Met}(\gamma)))$ as the map which takes in $a$ as an input, and $\tilde{K}^\lambda_\gamma(a)\in \mathcal{A}_\lambda^{\otimes 2|E(\gamma)|} \otimes \Omega^\bullet(\text{Met}(\gamma))$ as the final form after taking input $a$.\\
Similarly, we will use the notation $K^\lambda_\gamma \in \mathcal{A}_\lambda^{\otimes 2|E(\gamma)|} \otimes \Omega^\bullet(\text{Met}(\gamma)) \otimes \text{Hom}(\mathcal{A}_\lambda^{2|E(\gamma)|-n})$ for the form (and not a map) that has not yet been multiplied with $a$ as input, we use $\tilde{K}^\lambda_\gamma(-) \in \text{Hom}(\mathcal{A}_\lambda^m,\mathcal{A}^{\otimes 2|E(\gamma)|} \otimes \Omega^\bullet(\text{Met}(\gamma)))$ as the map which takes in $a$ as an input, and $\tilde{K}^\lambda_\gamma(a) \in \mathcal{A}_\lambda^n \otimes \Omega^\bullet(\text{Met}(\gamma))$ as the final output form.\\
Note that $2|E(\gamma)|=|H(\gamma)|$}

\begin{Def}
    Define a map $T^\lambda$ from $\mathscr{C} \to \text{Comp}$ as,
    \begin{equation}
        T^\lambda(m) := \Bigl( \mathcal{A}_\lambda^{\otimes m} , d^m \Bigl) \hspace{10 pt} ; \hspace{10 pt} T^\lambda(\sigma) := \int_\sigma K^\lambda_{\gamma} \in \text{Hom}(\mathcal{A}_\lambda^m,\mathcal{A}_\lambda^n)   
    \end{equation}
    where, $m,n$ are integers with $m$ and $n$ many elements respectively; $\sigma$ is a chain (with m incoming and n outgoing edges) in the cell labeled by the unmetrized ribbon graph $\gamma(\sigma) \equiv \gamma$, and \\ $d^m \equiv d \otimes 1 ..\otimes 1 + 1 \otimes d ..\otimes 1 + .. + 1 ..\otimes d $
\end{Def}

\begin{Lemma}
\label{2}
    $T^\lambda : \mathscr{C} \to \text{Comp}$ is a functor. 
\end{Lemma}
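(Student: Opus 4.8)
The plan is to verify the two defining properties of a (dgsm, but here just dg-)functor $\mathscr{C}\to\mathrm{Comp}$: that $T^\lambda$ is compatible with the differentials (so that $T^\lambda(\sigma)$ is indeed a chain map, and more generally that $T^\lambda$ is a chain map on each $\mathrm{Mor}_\mathscr{C}(m,n)=C_\bullet(\Gamma(m,n))$), and that $T^\lambda$ preserves composition and identities. This should parallel the proof that $T$ is a TCFT in \cite{tcft}; the whole point is that the eigenvalue-$\lambda$ heat kernel $K^\lambda_t$ satisfies the same formal identities as the complete heat kernel $K_t$ — this is exactly Corollary \ref{9} together with Lemma \ref{18} — so Costello's argument goes through verbatim once we know those identities hold.

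First I would unwind what ``functor'' means here: on objects $T^\lambda(m)=(\mathcal{A}_\lambda^{\otimes m},d^m)$ is a well-defined chain complex because $d$ preserves each eigenspace $\mathcal{A}_\lambda$ (noted when $\mathcal{A}_\lambda$ was defined), so $d^m$ squares to zero. On morphisms, I must check that $\sigma\mapsto\int_\sigma K^\lambda_\gamma$ respects the differential on $C_\bullet(\Gamma(m,n))$, i.e. that
\begin{equation*}
\partial_{\mathrm{Comp}}\bigl(T^\lambda(\sigma)\bigr)=T^\lambda(\partial_\mathscr{C}\sigma).
\end{equation*}
The key input is the analogue of the earlier ``commutes with the differential'' lemma for $K_\gamma$, now for $K^\lambda_\gamma$: expanding $\partial_{\mathrm{Comp}}(\int_\sigma K^\lambda_\gamma)=d^n\int_\sigma K^\lambda_\gamma-(-1)^{\deg}\int_\sigma K^\lambda_\gamma\,d^m$ and moving $d$ inside, one uses (i) that on edge labels $(d_x+d_y)K^\lambda_t=0$ and $(d_x+d_y)L^\lambda_t=\tfrac{d}{dt}K^\lambda_t$ (Lemma \ref{18}, Proposition \ref{7}, Corollary \ref{9}), (ii) that the vertex trace maps $\mathrm{Tr}_v$ kill $d$-exact terms (the Calabi-Yau axiom $\mathrm{Tr}(Qe)=0$), and (iii) Stokes' theorem on $\mathrm{Met}(\gamma)$ to convert the $dt\,L^\lambda_t$ terms produced by $\tfrac{d}{dt}K^\lambda_t$ into the boundary integral $\int_{\partial\sigma}K^\lambda_\gamma$. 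The only subtlety relative to the uncutoff case is that every form in sight now lives in $\mathcal{A}_\lambda^{\otimes\bullet}$ rather than $\mathcal{A}^{\otimes\bullet}$; since $\mathcal{A}_\lambda$ is $d$-stable and the convolution/trace operations restrict to it, nothing changes.

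Then I would check functoriality of composition: for composable chains $\sigma\in C_\bullet(\Gamma(m,n))$, $\tau\in C_\bullet(\Gamma(n,p))$, the composite $\tau\circ\sigma$ is the chain on the glued graph, and $T^\lambda(\tau\circ\sigma)=T^\lambda(\tau)\circ T^\lambda(\sigma)$ because gluing two outgoing/incoming external vertices corresponds, under the labeling rules, to contracting the two $\mathcal{A}_\lambda$-tensor factors against each other via the trace pairing — precisely the composition of $\mathrm{Hom}$'s in $\mathrm{Comp}$ — and Fubini lets us integrate over $\mathrm{Met}(\sigma)\times\mathrm{Met}(\tau)$ factor by factor. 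Identities: $T^\lambda(1_m)=T^\lambda(\sigma^{i=1\to m}_{i,i})$ is the operator attached to $m$ disjoint edges of infinitesimal length, which computes $\lim_{t\to 0}K^\lambda_t$ convolution on $\mathcal{A}_\lambda^{\otimes m}$; on the eigenspace this limit is the identity (the projector $\sum_{j\in J_i}e_{ij}(x)\otimes\star_y e_{ij}(y)$ restricted to $\mathcal{A}_\lambda$ acts as the identity, since the $e_{ij}$ are an orthonormal basis of $E_i$ — here one uses that the input is already projected to $\mathcal{A}_\lambda$). I expect the main obstacle to be bookkeeping rather than conceptual: carefully tracking signs and tensor-slot orderings in the Stokes/trace manipulation of step two, and making sure the identity-morphism computation genuinely yields $1$ on $\mathcal{A}_\lambda$ (i.e. that restricting the heat-kernel convolution operator to a single eigenspace and taking $t\to 0$ gives the identity on that eigenspace, not merely on its image). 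Since everything else is an exact transcription of the $T$-is-a-TCFT argument with $\mathcal{A}$ replaced by the $d$-stable subcomplex $\mathcal{A}_\lambda$, this suffices.
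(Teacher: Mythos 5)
Your proposal is correct and takes essentially the same route as the paper: composition is checked via Fubini's theorem and the trace-pairing contraction at the glued vertices, and the identity via the $t\to 0$ limit of $K^\lambda_t$, which yields the projector onto $\mathcal{A}_\lambda$ and hence acts as the identity on inputs already lying in $\mathcal{A}_\lambda^{\otimes m}$. The only divergence is that the compatibility with differentials you develop at length is not part of this lemma in the paper — it is deferred to the separate chain-map statement (Lemma \ref{8}, via Lemmas \ref{20} and \ref{21}) — so that portion, while consistent with the paper's strategy, is superfluous for the functoriality claim itself.
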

\begin{proof}
    In order to prove $T^\lambda$ is a functor, we need to check composition and identity conditions for a functor.\\
    Given any 2 chains $\sigma_1 \in C_*(\Gamma(m,n)),\sigma_2 \in C_*(\Gamma(n,l)), \sigma_{i,i}^{i=1 \to m} \in C_*(\Gamma(m,m))$, we wish to prove, $T^\lambda(\sigma_2 \cdot \sigma_1) (a) = T^\lambda(\sigma_2) \cdot T^\lambda(\sigma_1) (a)$ and $T^\lambda (I_m) (a)= a $; for all $a \in \mathcal{A_\lambda}$ where $\sigma_{i,i}^{i=1 \to m}$ as defined earlier is the identity morphism on $m$\\
    Let $\gamma_1 \in \Gamma^U(l,m) , \gamma_2 \in \Gamma^U(m,n)$ be the unmetrized graphs corresponding to $\sigma_1,\sigma_2$ respectively. \\
    Let $x_i : 1 \leq i \leq (|V(\gamma_1)|-n) $ label any non outgoing vertex of $\gamma_1$, let $y_i : 1 \leq i \leq n $ label any outgoing vertex of $\gamma_1$ and any incoming vertex of $\gamma_2$ such that gluing is respected, let $z_i : 1 \leq i \leq (|V(\gamma_1)|-n) $ label any non incoming vertex of $\gamma_1$.  \\
    Let $r_i : 1 \leq i \leq (|E(\gamma_1)|-n) $ label any non outgoing edge of $\gamma_1$, let $s_i : 1 \leq i \leq n $ label any outgoing edge of $\gamma_1$, let $t_i : 1 \leq i \leq n $ label any incoming edge of $\gamma_2$, let $u_i : 1 \leq i \leq (|E(\gamma_2)|-n) $ label any non incoming edge of $\gamma_2$.\\
    For the sake of being neat, we drop the indices wherever the details are not relevant.\\
    Then,
    \begin{equation*}
        T^\lambda(\sigma_2 \cdot \sigma_1) (a)= \int_{\sigma_2 \cdot \sigma_1} K^\lambda_{\gamma_2 \cdot \gamma_1} (a) = \int_{r,s,t,u} \text{Tr}_{x} a(x) \widetilde{K}^\lambda_{\gamma_1}(x,x) \text{Tr}_{y} \omega_s^\lambda(x,y) \omega_t^\lambda(y,z) \text{Tr}_{z}\widetilde{K}^\lambda_{\gamma_2}(z,z)
    \end{equation*}
    \begin{equation*}
       = \int_{t,u} \bigl( \int_{r,s} \text{Tr}_{x} a(x) \widetilde{K}^\lambda_{\gamma_1}(x,x)  \omega_s^\lambda(x,y) \bigl) \text{Tr}_{y} \omega_t^\lambda(y,z) \text{Tr}_{z}\widetilde{K}^\lambda_{\gamma_2}(z,z) = \int_{t,u} \text{Tr}_{y} \text{Tr}_{z}\omega_t^\lambda(y,z) \widetilde{K}^\lambda_{\gamma_2}(z,z) (K^\lambda_{\gamma_1}(a))(y)
    \end{equation*}
    \begin{equation*}
     = K^\lambda_{\gamma_1} (K^\lambda_{\gamma_1}(a))  = T^\lambda(\sigma_2) \cdot T^\lambda(\sigma_1) (a)
    \end{equation*}
    The identity condition holds as well,
    \begin{equation*}
        T^\lambda (I_m) (a^\lambda) = \prod_{i=1}^{m} \int_{x_i} \underset{t \to 0}{\text{lim}} \omega^\lambda_{t_i} (x_i,y_i) a^\lambda_i(x_i) = \prod_{i=1}^{m} \int_{x_i} \underset{t \to 0}{\text{lim}} K^\lambda_{t_i} (x_i,y_i) a^\lambda_i(x_i) = \prod_{i=1}^{m} \int_{x_i}\delta^\lambda(x_i-z_i)a^\lambda_i(x_i) = a^\lambda_i(z_i) = a^\lambda
    \end{equation*}
 here $\delta^\lambda$ is the projection of the identity distribution $\delta$ on $\mathcal{A}_\lambda$
    \end{proof}
    Note that this proof works for any form labeling the internal and incoming external edges and has nothing to do with the Laplacian, and uses only Fubini's theorem. Lemma \ref{8}, however, depends on the Laplacian and its eigenforms.
 \begin{Lemma}
    \label{26}
   The functor $T^\lambda : \mathscr{C} \to \text{Comp}$ is symmetric monoidal. 
\end{Lemma}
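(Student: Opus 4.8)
The plan is to verify the three standard pieces of data required for a dgsm functor: compatibility with the monoidal unit, with the monoidal product, and with the braiding, using the concrete description of the two dgsm structures given earlier. Since both $\mathscr{C}$ and $\text{Comp}$ are strict monoidal (the associators, unitors on $\mathscr{C}$ are identity maps on components, and the monoidal products are literally addition of integers / tensor of chain complexes), the coherence data reduces to checking that the natural transformations $T^\lambda(m) \otimes T^\lambda(n) \xrightarrow{\cong} T^\lambda(m+n)$ are well behaved, and in our case these are honest equalities $\mathcal{A}_\lambda^{\otimes m} \otimes \mathcal{A}_\lambda^{\otimes n} = \mathcal{A}_\lambda^{\otimes(m+n)}$ of chain complexes (with differential $d^{m+n} = d^m \otimes 1 + 1 \otimes d^n$), so there is essentially nothing to check at the level of objects.

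First I would record that $T^\lambda$ sends the monoidal unit $0 \in \mathscr{C}$ to $\mathcal{A}_\lambda^{\otimes 0} = \R$, the unit of $\text{Comp}$. Next, for the monoidal product on morphisms: given chains $\sigma_1 \in C_*(\Gamma(m_1,n_1))$ and $\sigma_2 \in C_*(\Gamma(m_2,n_2))$ lying in cells labeled by $\gamma_1,\gamma_2$, their monoidal product in $\mathscr{C}$ is the Cartesian product chain $\sigma_1 \times \sigma_2$, whose underlying graph is the disjoint union $\gamma_1 \sqcup \gamma_2$. Because $\gamma_1 \sqcup \gamma_2$ has no edges connecting the two components, the edge-labeling form $\tilde K^\lambda_{\gamma_1 \sqcup \gamma_2}$ factors as the (external) tensor product $\tilde K^\lambda_{\gamma_1} \otimes \tilde K^\lambda_{\gamma_2}$, the vertex traces act componentwise, and $\text{Met}(\gamma_1 \sqcup \gamma_2) = \text{Met}(\gamma_1) \times \text{Met}(\gamma_2)$ so that integration over $\sigma_1 \times \sigma_2$ splits by Fubini. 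Hence $T^\lambda(\sigma_1 \times \sigma_2) = T^\lambda(\sigma_1) \otimes T^\lambda(\sigma_2)$ inside $\text{Hom}(\mathcal{A}_\lambda^{\otimes(m_1+m_2)}, \mathcal{A}_\lambda^{\otimes(n_1+n_2)})$. This is the same Fubini argument already used in Lemma \ref{2}, now applied across two disjoint graphs rather than across a gluing.

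Finally I would check compatibility with the braiding. The braiding $R_{m,n} \in \text{Mor}_\mathscr{C}(m+n, n+m)$ was defined as a Cartesian product of the elementary 0-chains $\sigma_{i,j}$, each of which is (a translate of) the infinitesimal-length single-edge graph; applying $T^\lambda$ to such a 0-chain gives, exactly as in the identity computation at the end of Lemma \ref{2}, the map $a^\lambda \mapsto \int_{x} \lim_{t\to 0} K^\lambda_t(x,y)\, a^\lambda(x) = a^\lambda(y)$, i.e. the identity on $\mathcal{A}_\lambda$ with the appropriate relabeling of tensor factors. Composing these across the Cartesian product, $T^\lambda(R_{m,n})$ is precisely the symmetry isomorphism of $\text{Comp}$ permuting the first $m$ tensor factors past the last $n$ (with Koszul signs, since $\mathcal{A}_\lambda$ is $\Z_2$-graded). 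One then observes that the hexagon and $R^2 = \mathrm{id}$ relations are transported from $\mathscr{C}$ to $\text{Comp}$ by functoriality (Lemma \ref{2}), or simply notes directly that the symmetry of $\text{Comp}$ already satisfies them.

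The main obstacle I anticipate is bookkeeping rather than conceptual: one must check that the tensor ordering of the output forms — which the construction pins down via the labeling of external vertices — matches the tensor ordering dictated by the symmetry of $\text{Comp}$, and in particular that the Koszul sign picked up when $T^\lambda(R_{m,n})$ transposes two odd-degree elements of $\mathcal{A}_\lambda$ agrees with the sign built into the braiding of chain complexes. This is purely a sign-tracking exercise, and I would handle it by evaluating both sides on a homogeneous elementary tensor $a^\lambda_1 \otimes \cdots \otimes a^\lambda_{m+n}$ and comparing, exactly the strategy the paper uses elsewhere.
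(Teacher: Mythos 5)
Your proposal is correct and follows the same overall route as the paper: exploit strictness on both sides, take the comparison maps $\mu_{m,n}\colon \mathcal{A}_\lambda^{\otimes m}\otimes\mathcal{A}_\lambda^{\otimes n}\to\mathcal{A}_\lambda^{\otimes(m+n)}$ and $\epsilon\colon\R\to\mathcal{A}_\lambda^{\otimes 0}$ to be identities so that the associativity and unitality diagrams commute trivially, and then check the braiding square. The one genuine difference is that you also verify the naturality of $\mu$ on morphisms, i.e.\ $T^\lambda(\sigma_1\times\sigma_2)=T^\lambda(\sigma_1)\otimes T^\lambda(\sigma_2)$, via the observation that the monoidal product in $\mathscr{C}$ is represented by the disjoint union $\gamma_1\sqcup\gamma_2$, so the edge-labeling form factors, the vertex traces act componentwise, $\text{Met}(\gamma_1\sqcup\gamma_2)=\text{Met}(\gamma_1)\times\text{Met}(\gamma_2)$, and Fubini splits the integral. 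The paper's proof does not spell this step out (it only records that the coherence diagrams consist of identity maps and that $T(R_{m,n})$ acts as the transposition), so your version is more complete on the only substantive point; likewise your explicit computation of $T^\lambda(R_{m,n})$ from $\lim_{t\to0}K^\lambda_t=\delta^\lambda$, and your flagging of the Koszul-sign bookkeeping, go slightly beyond what the paper writes but are consistent with it.
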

\begin{proof}
    \begin{itemize}

        \item $T^\lambda$ is monoidal by the following, \\
        Let $\mu_{m,n} : \mathcal{A}_\lambda^m \otimes \mathcal{A}_\lambda^n \to \mathcal{A}_\lambda^{m+n}$ be identity on components, $\mu_{m,n} : a \otimes b \mapsto  a \otimes b$. The following diagram commutes as all morphisms are identity morphisms.

        \begin{center} \begin{tikzcd}[sep=huge]
  (\mathcal{A}_\lambda^l \otimes \mathcal{A}_\lambda^m) \otimes \mathcal{A}_\lambda^n \arrow[r,"\beta_{\mathcal{A}_\lambda^l ,\mathcal{A}_\lambda^m,\mathcal{A}_\lambda^n }"] \arrow[d,"\mu_{l,m} \otimes 1_{\mathcal{A}_\lambda^n}"'] & \mathcal{A}_\lambda^l \otimes (\mathcal{A}_\lambda^m \otimes \mathcal{A}_\lambda^n) 
 \arrow[d,"1_{T(l)} \otimes \mu_{m,n} "] \\
\mathcal{A}_\lambda^{l+m} \otimes \mathcal{A}_\lambda^n \arrow[d,"\mu_{l + m, n} "'] & \mathcal{A}_\lambda^l \otimes \mathcal{A}_\lambda^{m+n} \arrow[d,"\mu_{l,m+n}"] \\
 \mathcal{A}_\lambda^{(l+m)+n}
\arrow[r,"T(\alpha_{l,m,n})"'] &
  \mathcal{A}_\lambda^{l+(m+n)}
\end{tikzcd} \\
\end{center}

$T^\lambda(0)=\mathcal{A}_\lambda^0=\R$ and $T^\lambda(1_m)=1_{\mathcal{A}_\lambda^m}$. Define $\epsilon : \R \to \mathcal{A}_\lambda^0 \equiv \R$ as the identity. The following diagram commutes as all morphisms are identity morphisms.

\begin{center} \begin{tikzcd}[sep=huge]
  \R \otimes \mathcal{A}^m_\lambda \arrow[r,"\epsilon \otimes 1_{\mathcal{A}^m_\lambda}"] \arrow[d,"1_{\mathcal{A}_\lambda^m}"'] & \R \otimes \mathcal{A}^m_\lambda \arrow[d,"\mu_{0,m}"] &  \mathcal{A}^m_\lambda \otimes \R \arrow[r,"1_{\mathcal{A}^m_\lambda} \otimes \epsilon"] \arrow[d,"1_{\mathcal{A}_\lambda^m}"'] & \mathcal{A}^m_\lambda \otimes \R \arrow[d,"\mu_{m,0}"]  \\
  \mathcal{A}^m_\lambda  & \mathcal{A}^m_\lambda \arrow[l,"1_{\mathcal{A}_\lambda^m}"] & \mathcal{A}^m_\lambda  & \mathcal{A}^m_\lambda \arrow[l,"1_{\mathcal{A}_\lambda^m}"]
\end{tikzcd}
  \end{center}      

\item T respects braiding (and hence is a symmetric monoidal functor), since the following diagram commutes,

\begin{center} \begin{tikzcd}[sep=huge]
  \mathcal{A}_\lambda^m \otimes \mathcal{A}_\lambda^n \arrow[r,"a \otimes b \mapsto b \otimes a "] \arrow[d," a \otimes b \mapsto a \otimes b "'] & \mathcal{A}_\lambda^n \otimes \mathcal{A}_\lambda^m \arrow[d,"b \otimes a \mapsto b \otimes a"]  \\
  \mathcal{A}_\lambda^{m + n} \arrow[r,"T(R_{A,B})"]  & \mathcal{A}_\lambda^{n + m}
\end{tikzcd}\\
\end{center}

As the morphism $T(R_{A,B})$ maps $a(x_1) .. \otimes .. a(x_m) \otimes b(x_1) .. \otimes .. b(x_n) \mapsto b(z_1) .. \otimes .. b(z_m) \otimes a(z_1) .. \otimes .. a(z_n)$ where $x_i,z_i \in M$

\end{itemize}
\end{proof}

\begin{Lemma}
\label{8}
        The map $T^\lambda : C_\bullet(\Gamma(m,n)) \to \text{Hom}(\mathcal{A}_\lambda^m,\mathcal{A}_\lambda^n)$ that maps $\sigma \mapsto \int_\sigma K_\gamma(-)$ is a chain map.
\end{Lemma}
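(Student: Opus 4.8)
The plan is to show that $T^\lambda$ intertwines the singular boundary operator $\partial_\mathscr{C}$ on $C_\bullet(\Gamma(m,n))$ with the differential $\partial_{\text{Comp}}$ on $\text{Hom}(\mathcal{A}_\lambda^m, \mathcal{A}_\lambda^n)$, i.e. that for a pure $k$-chain $\sigma$ lying in a cell $O_\gamma$ one has $\int_{\partial\sigma} K^\lambda_\gamma(-) = \partial_{\text{Comp}}\bigl(\int_\sigma K^\lambda_\gamma(-)\bigr) = d^n\circ\int_\sigma K^\lambda_\gamma(-) - (-1)^{|\sigma|}\bigl(\int_\sigma K^\lambda_\gamma(-)\bigr)\circ d^m$. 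The idea is the standard Stokes/Feynman-rules argument from \cite{tcft}: apply Stokes's theorem on the cone $\R^k_{\geq 0}$ to the $\text{Met}(\gamma)$-valued form $K^\lambda_\gamma(-)$, and then separate the resulting terms into (i) the geometric boundary of the chain, which gives $\int_{\partial\sigma}$, and (ii) the de Rham differential of $K^\lambda_\gamma(-)$ in the $\text{Met}(\gamma)$ directions, which will have to match the $d^n$ and $d^m$ terms.

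First I would recall from the earlier Lemma that $K^\lambda_\gamma : \mathcal{A}_\lambda^{\otimes m} \to \mathcal{A}_\lambda^{\otimes n}\otimes\Omega^*(\text{Met}(\gamma))$ commutes with the differential, where on the target the differential combines $d^n$ on $\mathcal{A}_\lambda^{\otimes n}$ with the de Rham $d$ on $\Omega^*(\text{Met}(\gamma))$. Concretely this is where the identities of Proposition \ref{7} and Corollary \ref{9} enter: on each internal or incoming edge the labeling form is $\omega^\lambda_e = K^\lambda_t(x_i,x_j) + dt\, L^\lambda_t(x_i,x_j)$, and the relations $(d_x+d_y)K^\lambda_t = 0$ together with $(d_x+d_y)L^\lambda_t = -H_xK^\lambda_t = \frac{d}{dt}K^\lambda_t$ are exactly what make the total (internal $d$ plus $dt\,\partial_t$) differential of $\omega^\lambda_e$ vanish; the vertex trace maps $\text{Tr}_v$ commute with $Q=d$ because $\text{Tr}(Qe)=0$ in a Calabi-Yau space. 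So $(d + d_{\mathrm{dR}})K^\lambda_\gamma(-) = 0$ as a $\text{Met}(\gamma)$-valued $\text{Hom}$-form, which rearranges to $d_{\mathrm{dR}}K^\lambda_\gamma(-) = -d\circ K^\lambda_\gamma(-)$ up to the sign bookkeeping coming from the degree of $\sigma$.

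Then I would write Stokes's theorem for the chain: $\int_{\partial\sigma}K^\lambda_\gamma(-) = \int_\sigma d_{\mathrm{dR}}\bigl(K^\lambda_\gamma(-)\bigr)$, where $d_{\mathrm{dR}}$ is the de Rham differential on $\text{Met}(\gamma) = \R^p_{\geq 0}$. Substituting the identity from the previous paragraph converts the right-hand side into $\pm\,d^n\circ\int_\sigma K^\lambda_\gamma(-) \mp \int_\sigma K^\lambda_\gamma(-)\circ d^m$, i.e. precisely $\partial_{\text{Comp}}(T^\lambda(\sigma))$. Care is needed on two points: the normalised cubical chain complex $C_\bullet(\Gamma(m,n))$ only includes simplices lying in a single closed cell, and the definition of cellular forms guarantees $\omega_{\gamma/e}=\omega|_{\text{Met}(\gamma/e)}$ for non-loop edges $e$, so the face of $\partial\sigma$ corresponding to a wall $l_e\to 0$ glues consistently with the chain on $O_{\gamma/e}$; one must check that the contribution of such faces is correctly accounted for and does not produce spurious terms. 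Here the moduli-space conditions (every closed loop has positive length) are what keep the relevant boundary integrals convergent.

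The main obstacle, I expect, is the sign bookkeeping and the careful matching of the two kinds of boundary faces of the cube $\R^k_{\geq 0}$ — the faces where a parameter hits a cell wall (contributing to $\int_{\partial\sigma}$ via the cellular-form compatibility) versus the combinatorics of the Koszul signs $(-1)^{|\sigma|}$ relating $d_{\mathrm{dR}}$ on $\text{Met}(\gamma)$ to the induced differential on $\text{Hom}(\mathcal{A}_\lambda^m,\mathcal{A}_\lambda^n)$. The analytic content (convergence, being able to apply Stokes and Fubini) is inherited from \cite{tcft} and from the moduli-space restrictions, and the algebraic content reduces entirely to the edge identities of Corollary \ref{9} plus $\text{Tr}(de)=0$; none of this uses anything special about $\mathcal{A}_\lambda$ versus $\mathcal{A}$, so the proof is structurally identical to the corresponding statement for $T$ in \cite{tcft}, restricted to the subcomplex $(\mathcal{A}_\lambda, d)$.
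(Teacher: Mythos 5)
Your proposal is correct and is in substance the same argument as the paper's: everything reduces to the fact that $K^\lambda_\gamma(-)\in \text{Hom}(\mathcal{A}_\lambda^m,\mathcal{A}_\lambda^n\otimes\Omega^\bullet(\text{Met}(\gamma)))$ commutes with the differentials (the paper's Lemma \ref{21}, which rests on the closedness of the edge forms $\omega^\lambda_e$, Lemma \ref{20}, i.e.\ exactly the heat-kernel identities of Proposition \ref{7} and Corollary \ref{9} that you invoke, plus the compatibility of the trace with $Q$), combined with Stokes' theorem for the integration pairing between chains and cellular forms. The only real difference is packaging: you apply Stokes directly to each pure chain $\sigma$ and then do the sign bookkeeping by hand, whereas the paper encodes Stokes once and for all as the statement that integration gives a chain-map inclusion $i:\Omega^\bullet(\text{Met}(\gamma))\to C_\bullet(\Gamma(m,n))^V$, and then transfers the chain-map property from $K^\lambda_\gamma(-)$ to $T^\lambda$ through the canonical Hom-complex isomorphisms $\Phi_1,\Phi_2$; this abstract route absorbs the Koszul-sign matching you flag as the main obstacle into the standard differentials on the Hom- and dual complexes, at the price of hiding the boundary-face/cell-wall discussion that you make explicit (and which, as you note, is handled by the definition of $C_\bullet(\Gamma(m,n))$ and the cellular-form compatibility $\omega_{\gamma/e}=\omega|_{\text{Met}(\gamma/e)}$). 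Either presentation is acceptable; your explicit version would just require carrying out the sign check that the paper's formalism renders automatic.
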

\begin{proof}
Let the map $\Phi_1 : \text{Hom}( C_\bullet(\Gamma(m,n)) , \text{Hom}(\mathcal{A}_\lambda^m,\mathcal{A}_\lambda^n)) \to \text{Hom}( \mathcal{A}_\lambda^m ,\mathcal{A}_\lambda^n \otimes (C_\bullet(\Gamma(m,n)))^V) $ be the canonical isomorphism : $\text{Hom}(A,\text{Hom}(B,C)) \to \text{Hom}(A,B^V \otimes C) \to \text{Hom}(B,C \otimes A^V)$.\\
The integration pairing  $C_\bullet(\Gamma(m,n)) \otimes \Omega^\bullet(\text{Met}(\gamma)) \to \mathbb{C}$ gives an inclusion chain map $i : \Omega^\bullet(\text{Met}(\gamma))) \to (C_\bullet(\Gamma(m,n)))^V$. This induces a chain map $\Phi_2 : \text{Hom}( \mathcal{A}_\lambda^m ,\mathcal{A}_\lambda^n \otimes (C_\bullet(\Gamma(m,n)))^V) \to \text{Hom}( \mathcal{A}_\lambda^m ,\mathcal{A}_\lambda^n \otimes \Omega^\bullet(\text{Met}(\gamma))) $.\\
Hence, the composition of maps $\Phi_2 \cdot \Phi_1 = \Phi$, is a chain map.
\begin{equation*}
\Phi : \text{Hom}( C_\bullet(\Gamma(m,n)) , \text{Hom}(\mathcal{A}_\lambda^m,\mathcal{A}_\lambda^n)) \to \text{Hom}( \mathcal{A}_\lambda^m ,\mathcal{A}_\lambda^n \otimes (C_\bullet(\Gamma(m,n)))^V) \to \text{Hom}( \mathcal{A}_\lambda^m ,\mathcal{A}_\lambda^n \otimes \Omega^\bullet(\text{Met}(\gamma)))
\end{equation*}
 By lemma \ref{21}, the map $K^\lambda_{\gamma}(-) \in \text{Hom}(\mathcal{A}_\lambda^m,\mathcal{A}_\lambda^n \otimes \Omega^\bullet(\text{Met}(\gamma)))$ that takes $a \mapsto K^\lambda_\gamma(a) $ is a chain map. Hence, $\Phi(K^\lambda_{\gamma}(-))=T^\lambda$ is a chain map.\\
\end{proof}  
\begin{Lemma}
\label{20}
 $\omega_e \in \mathcal{A}_\lambda^{\otimes 2} \otimes \Omega^*(\R_{\geq 0})$ is a closed form, i.e., $(d_x+d_y+d_t)(K_t(x,y) + d_t t L_t(x,y)) = 0$   \end{Lemma}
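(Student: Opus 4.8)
The plan is to expand $\omega_e$ in the eigenbasis and check that the total differential $d_x + d_y + d_t$ annihilates it term by term, reducing everything to identities already established for the single eigenvalue heat kernel. Recall that $\omega_e = K^\lambda_t(x,y) + dt\, L^\lambda_t(x,y)$ with $L^\lambda_t(x,y) = -d^*_x K^\lambda_t(x,y)$; so I must show $(d_x + d_y + d_t)\bigl(K^\lambda_t(x,y) + dt\, L^\lambda_t(x,y)\bigr) = 0$. (The statement as written says $d_t t L_t$, but since $K^\lambda_t$ is the $\lambda$-eigenvalue kernel carrying the explicit factor $e^{-\lambda t}$, the intended object is the edge form $K^\lambda_t + dt\, L^\lambda_t$; I will treat that.)

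First I would split the expression by form-degree in the $\R_{\geq 0}$ (the $t$) direction. The degree-zero-in-$dt$ part of $(d_x+d_y+d_t)\omega_e$ is $(d_x+d_y)K^\lambda_t(x,y)$, which vanishes by Lemma \ref{18} (equation (12), the closedness of $K^\lambda_t$ on $M\times M$). The degree-one-in-$dt$ part collects two contributions: $d_t K^\lambda_t(x,y) = dt\,\frac{d}{dt}K^\lambda_t(x,y)$ coming from differentiating $K^\lambda_t$ in $t$, and $dt\,(d_x+d_y)L^\lambda_t(x,y)$ coming from applying $d_x+d_y$ to the $dt\,L^\lambda_t$ summand (with the sign bookkeeping for moving $d_x,d_y$ past $dt$). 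By equation (17) of Proposition \ref{7} (which holds for $K^\lambda_t$ by Corollary \ref{9}), we have $(d_x+d_y)L^\lambda_t(x,y) = -H_x K^\lambda_t(x,y) = \frac{d}{dt}K^\lambda_t(x,y)$, but one must check the signs line up so that the two $dt$-contributions cancel rather than add; this is exactly the mechanism by which $\omega_e$ is closed, mirroring the closedness of the analogous form in \cite{tcft}.

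The main point to be careful about — and the only real obstacle — is the sign accounting when commuting the de Rham differentials on $M\times M$ past the $dt$ factor, since $K^\lambda_t$ and $L^\lambda_t$ are $\Z_2$-graded (differential-form-valued) quantities and $dt$ is odd. Concretely, I need the convention that $d = d_x + d_y + d_t$ acts on $dt\, L^\lambda_t$ as $d_t(dt\, L^\lambda_t) + (-1)^{?}dt\,(d_x+d_y)L^\lambda_t$ with the sign chosen so that, after invoking (17), the surviving $dt$-terms are $dt\,\tfrac{d}{dt}K^\lambda_t$ and $-dt\,\tfrac{d}{dt}K^\lambda_t$. I would fix the grading conventions once (form degree on $M\times M$ plus the $\R_{\geq 0}$ degree, total degree mod $2$) and then the cancellation is forced. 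Since $d_t(dt \, L^\lambda_t) = 0$ (as $dt\wedge dt = 0$), the computation is genuinely just: degree-$0$ piece vanishes by Lemma \ref{18}, degree-$1$-in-$dt$ piece vanishes by Corollary \ref{9} applied to (17), modulo the sign check. I expect the whole argument to be three or four lines once the conventions are pinned down.
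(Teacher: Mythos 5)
Your proposal is correct and follows essentially the same route as the paper: expand $(d_x+d_y+d_t)$ against $K^\lambda_t + dt\,L^\lambda_t$, kill the $dt$-degree-zero piece by Lemma \ref{18}, kill $d_t(dt\,L^\lambda_t)$ since $dt\wedge dt=0$, and cancel $dt\,\tfrac{d}{dt}K^\lambda_t$ against $(d_x+d_y)(dt\,L^\lambda_t)$ via identity (\ref{17}). The sign you leave to be ``pinned down'' is exactly the Koszul factor $(-1)^{|dt||d_x+d_y|}=-1$ from commuting the odd operator $d_x+d_y$ past the odd form $dt$, which is the one-line computation the paper performs, so nothing is missing.
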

\begin{proof}
\begin{flalign*}
& (d_x+d_y+d_t)(K_t^\lambda(x,y) + d_t t L_t^\lambda(x,y)) = (d_x+d_y)K_t^\lambda(x,y) +d_t K_t^\lambda(x,y)+ (d_x+d_y)d_t t L_t^\lambda(x,y)  +d_t d_t t L_t^\lambda(x,y) &&\\  
& = 0+d_t K_t^\lambda(x,y) + (d_x+d_y) d_t t L_t^\lambda(x,y)  + 0 = d_t t \frac{d}{dt}K_t^\lambda(x,y) + (-1)^{|d_t t||d_x+d_y|} d_t t (d_x+d_y) L_t^\lambda(x,y) &&\\ 
& = d_t t \frac{d}{dt}K_t^\lambda(x,y)  - d_t t \frac{d}{dt}K_t^\lambda(x,y) = 0 &&\\
\end{flalign*}
\end{proof}
\begin{Lemma} 
\label{21}
The map $ K^\lambda_{\gamma}(-) \in \text{Hom}(\mathcal{A}_\lambda^m,\mathcal{A}_\lambda^n \otimes \Omega^\bullet(\text{Met}(\gamma)))$ that takes $a \mapsto K^\lambda_\gamma(a) $ is a chain map, 
\end{Lemma}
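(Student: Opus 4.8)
This is the $\mathcal{A}_\lambda$-restricted analogue of the Lemma in Section~\ref{35} asserting that $K_\gamma$ commutes with the differential, and the plan is to run the same argument, drawing the needed heat-kernel identities from Lemma~\ref{20} and Corollary~\ref{9}. The two inputs are: (i) each bare edge propagator is closed under the combined differential in its two endpoint variables and its own length variable, which is exactly Lemma~\ref{20}; and (ii) every vertex trace $\mathrm{Tr}_v=\int_{x_v\in M}(\,\cdot\,)$ annihilates anything exact in the variable $x_v$, which is Stokes' theorem on the closed manifold $M$, i.e. the Calabi--Yau trace condition $\mathrm{Tr}(Qe)=0$. Everything in between is the Leibniz rule and sign bookkeeping. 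I would first record that $d$, $d^\dagger$ and $\star$ all commute with $H$ and hence preserve the eigenspace $\mathcal{A}_\lambda$, so that $K^\lambda_t,L^\lambda_t\in\mathcal{A}_\lambda^{\otimes 2}$ and the labelling construction genuinely lands in tensor powers of $\mathcal{A}_\lambda$; this is what makes $K^\lambda_\gamma(-)$ well defined with the stated source $(\mathcal{A}_\lambda^m,d^m)$ and target $(\mathcal{A}_\lambda^n\otimes\Omega^\bullet(\mathrm{Met}(\gamma)),\,d^n\otimes 1\pm 1\otimes d_{dR})$.

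Next I would set up notation. Write $\Omega:=\bigotimes_{e\in E(\gamma)}\big(K^\lambda_{t_e}(x_{i(e)},x_{j(e)})+dt_e\,L^\lambda_{t_e}(x_{i(e)},x_{j(e)})\big)$ for the tensor of the bare edge forms, with one length coordinate $t_e$ per edge, and $A:=\bigwedge_i a_i(x_i)$ for the wedge of the input forms placed at the incoming external vertices, so that $\widetilde K^\lambda_\gamma(a)=\Omega\wedge A$ up to a reordering of tensor factors, and $K^\lambda_\gamma(a)=\big(\prod_{v}\int_{x_v\in M}\big)(\Omega\wedge A)$, where the product runs over the ``traced'' vertices, meaning every vertex other than the $n$ outgoing external ones. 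Let $D_{\mathrm{full}}=\sum_{v\in V(\gamma)}d_{x_v}+\sum_{e\in E(\gamma)}d_{t_e}$ be the total de Rham differential on $M^{V(\gamma)}\times\mathrm{Met}(\gamma)$ and let $D_{\mathrm{out}}=\sum_{w}d_{x_w}+\sum_e d_{t_e}$, with $w$ ranging over the outgoing external vertices, be the differential on the target; thus $D_{\mathrm{out}}=D_{\mathrm{full}}-\sum_{v\ \mathrm{traced}}d_{x_v}$.

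The computation then goes as follows. Apply $D_{\mathrm{out}}$ to $K^\lambda_\gamma(a)$ and commute it past the vertex integrations; this is legitimate up to the fixed sign coming from the fibre-integration degree shift, since $D_{\mathrm{out}}$ only involves variables disjoint from the traced ones. Inside, substitute $D_{\mathrm{out}}=D_{\mathrm{full}}-\sum_{v\ \mathrm{traced}}d_{x_v}$. For the $D_{\mathrm{full}}$ term, Lemma~\ref{20} gives $D_{\mathrm{full}}\Omega=0$ by Leibniz, hence $D_{\mathrm{full}}(\Omega\wedge A)=\pm\,\Omega\wedge D_{\mathrm{full}}A=\pm\,\Omega\wedge\sum_i a_1\wedge\cdots\wedge da_i\wedge\cdots\wedge a_m=\pm\,\widetilde K^\lambda_\gamma(d^m a)$, which after the traces becomes $\pm\,K^\lambda_\gamma(d^m a)$. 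For each term $-d_{x_v}$ with $v$ traced, move $d_{x_v}$ next to its own integral $\int_{x_v\in M}$ (another sign, from commuting past the other fibre integrations) and use $\int_{x_v\in M}d_{x_v}(\cdots)=0$, i.e. Stokes on the closed $M$; every such term vanishes. What is left is $D_{\mathrm{out}}\circ K^\lambda_\gamma(-)=\pm\,K^\lambda_\gamma(-)\circ d^m$, which is the assertion that $K^\lambda_\gamma(-)$ is a chain map.

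The only genuine work, and the main obstacle, is the sign bookkeeping: one must check that the Koszul signs accumulated from commuting the odd differentials past the fibre integrations and past the already-assembled tensor factors combine to the $+$ sign demanded of a degree-$0$ chain map, and, relatedly, that $K^\lambda_\gamma$ is indeed degree-preserving (an Euler-characteristic count for the ribbon graph, as in the Lemma of Section~\ref{35}). Since these are precisely the verifications already made for the unrestricted $K_\gamma$ in \cite{tcft}, and Corollary~\ref{9} guarantees that every heat-kernel identity used there survives restriction to $\mathcal{A}_\lambda$ verbatim, no new idea is required: the proof is word for word the unrestricted one with $K_\gamma,K_t,L_t$ replaced by $K^\lambda_\gamma,K^\lambda_t,L^\lambda_t$.
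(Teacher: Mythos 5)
Your argument is correct and in substance it is the same as the paper's proof: both rest on Lemma \ref{20} (closedness of each edge form $\omega^\lambda_e$ in its endpoint and length variables), on the fact that pulling back and wedging the edge forms and inputs at shared vertices commutes with the differential, and on the vanishing under $\int_M$ of the differentials in the traced vertex variables (Stokes on the closed fiber). The paper merely packages this as a composition of three chain maps ($a \mapsto a \otimes \widetilde{K}^\lambda_\gamma$, the diagonal pullback $D^*$, and fiber integration $I^*$ over the non-outgoing vertices), which is exactly your Leibniz-plus-Stokes computation in structural form, with the sign bookkeeping absorbed into the statement that each factor is a chain map.
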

\begin{proof} 
\begin{itemize}
\item By lemma, $\omega_{e_i} \in \mathcal{A}_\lambda^2 \otimes \Omega^*(\R_{\geq 0}) \equiv \Omega^\bullet(M)^{\otimes 2} \otimes \Omega^\bullet(\R_{\geq 0}) \equiv \Omega^\bullet(M^2 \cross \R_{\geq 0})$ is closed for all edges $e_i$. We denote the projection map as
\begin{equation*}
\pi_{e_i} : M^{2|E(\gamma)|} \cross \R_{\geq 0}^{|E(\gamma)|} \to M^2 \cross \R_{\geq 0}
\end{equation*}
and its pullback on forms as,
\begin{equation*}
\pi^*_{e_i} :  \Omega^\bullet(M)^{\otimes 2} \otimes \Omega^\bullet(\R_{\geq 0}) \to \Omega^\bullet(M)^{\otimes 2|E(\gamma)|} \otimes \Omega^\bullet(\text{Met}(\gamma))
\end{equation*}
Hence, $\pi^*_{e_i} \omega^\lambda_{e_i} \in \Omega^\bullet(M)^{\otimes 2|E(\gamma)|} \otimes \Omega^\bullet(\text{Met}(\gamma))$ is a closed form, as pullback commutes with differential.\\
$\underset{i}{\otimes} \pi^*_{e_i} \omega^\lambda_{e_i} \equiv \widetilde{K^\lambda_\gamma} \in \Omega^\bullet(M)^{\otimes 2|E(\gamma)|} \otimes \Omega^\bullet(\text{Met}(\gamma))$ is closed, as tensor product of closed forms is closed.
\item The map $T^* : \mathcal{A}_\lambda^m \to  \mathcal{A}_\lambda^m \otimes \mathcal{A}_\lambda^{\otimes 2|E(\gamma)|} \otimes \Omega^\bullet(\text{Met}(\gamma)) $ that takes $ a \to a \otimes 1 \mapsto a \otimes \widetilde{K_\gamma}$ is a tensor product of 2 chain maps and is therefore, a chain map.
\item Now, denote the diagonal map (for any single pair of edges sharing a common vertex) as $D_{e_1,e_2} : M^3 \to M^4 $ that takes $(x,y,z) \mapsto (x,y,y,z) $ and it's pullback on forms ($D_{e_1,e_2}^* : (\Omega^\bullet(M))^{\otimes 4} \to (\Omega^\bullet(M))^{\otimes 3} $) is given by,
\begin{equation*}
D_{e_1,e_2}^*[(\alpha_1(x) \otimes \beta_1(y)) \otimes (\alpha_2(y) \otimes \beta_2(z))] = \alpha_1(x) \otimes \beta_1(y) \wedge \alpha_2(y) \otimes \beta_2(z)
\end{equation*}
Denote it's pullback on product of forms of all edges as $D^* : \mathcal{A}_\lambda^m \otimes \mathcal{A}_\lambda^{\otimes 2|E(\gamma)|} \otimes \Omega^\bullet(\text{Met}(\gamma)) \to \mathcal{A}_\lambda^{\otimes |V(\gamma)|} \otimes \Omega^\bullet(\text{Met}(\gamma))$. The pullback form $D^* (\underset{i}{\otimes} \pi^*_{e_i} \omega_{e_i} \underset{j}{\otimes} a_j ) \equiv 
D^*(a \otimes \widetilde{K_\gamma}) \in \Omega^\bullet(M^{|V(\gamma)|}) \otimes \Omega^\bullet(\text{Met}(\gamma))$ is closed, (again, since pullback commutes with differential) and $D^*$ is a chain map.
\item Finally, integration along the fiber (over all ($|V(\gamma)|-n$) non outgoing vertices) makes the map, $I^* : \mathcal{A}_\lambda^{\otimes |V(\gamma)|} \otimes \Omega^\bullet(\text{Met}(\gamma)) \to \mathcal{A}_\lambda^n \otimes \Omega^\bullet(\text{Met}(\gamma))$ a chain map.\\
As a result, the composition $K = I^* \cdot D^* \cdot T^* : \mathcal{A}_\lambda^m \to\mathcal{A}_\lambda^n \otimes \Omega^\bullet(\text{Met}(\gamma)) $ that takes $a \mapsto a \otimes \widetilde{K_\gamma} \mapsto D^*(a \otimes \widetilde{K_\gamma}) \mapsto K_\gamma(a) $ is a chain map.
\end{itemize}
\end{proof}

\begin{Th}
    $T^\lambda$ is a TCFT.
\end{Th}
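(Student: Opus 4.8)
The plan is to assemble the statement directly from the lemmas already proved, following the same template that the original reference uses for $T$. To say $T^\lambda$ is a TCFT means (Definition~\ref{22}) that it is a dgsm functor $\mathscr{C}\to\mathrm{Comp}$, so there are three things to verify: (i) $T^\lambda$ is a functor; (ii) the assignment on morphisms $\sigma\mapsto\int_\sigma K^\lambda_\gamma(-)$ lands in chain maps and respects the differentials on both sides, i.e.\ $T^\lambda$ is a dg-functor; and (iii) $T^\lambda$ is symmetric monoidal. Each of these is now a citation: (i) is Lemma~\ref{2}, (iii) is Lemma~\ref{26}, and the chain-map condition is Lemma~\ref{8} (which in turn rests on Lemma~\ref{21} and Lemma~\ref{20}). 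So the proof is essentially a two-line bookkeeping argument: recall the definition, and list which lemma discharges which clause.

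Concretely, first I would restate that by Definition~\ref{22} a TCFT is exactly a dgsm functor $\mathscr{C}\to\mathrm{Comp}$, and that $T^\lambda$ was defined on objects by $m\mapsto(\mathcal{A}_\lambda^{\otimes m},d^m)$ and on a chain $\sigma$ in the cell of $\gamma$ by $T^\lambda(\sigma)=\int_\sigma K^\lambda_\gamma$. Then I would invoke Lemma~\ref{2} for functoriality (composition and identities, where the identity clause uses that $\lim_{t\to0}K^\lambda_t$ is the projection $\delta^\lambda$ of the delta distribution onto $\mathcal{A}_\lambda$, which is precisely the identity on the subcomplex $\mathcal{A}_\lambda$). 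Next, Lemma~\ref{8} gives that $T^\lambda$ is a chain map $C_\bullet(\Gamma(m,n))\to\mathrm{Hom}(\mathcal{A}_\lambda^{\otimes m},\mathcal{A}_\lambda^{\otimes n})$, i.e.\ it commutes with $\partial_\mathscr{C}$ and $\partial_{\mathrm{Comp}}$, which is the dg part of ``dgsm''. Finally, Lemma~\ref{26} supplies the symmetric monoidal structure, with the structure maps $\mu_{m,n}$ and $\epsilon$ being identities on components, so the coherence diagrams commute trivially.

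The only point that needs slightly more care than a bare citation is checking that these three pieces of structure are mutually compatible — that the monoidal structure maps $\mu_{m,n}$ are themselves chain maps and that the braiding is a chain map — but this is immediate since $d^{m+n}$ agrees with $d^m\otimes 1+1\otimes d^n$ under the identity identification $\mathcal{A}_\lambda^{\otimes m}\otimes\mathcal{A}_\lambda^{\otimes n}\cong\mathcal{A}_\lambda^{\otimes(m+n)}$, and permuting tensor factors commutes with $d^{m+n}$ up to the Koszul sign already built into $\mathrm{Comp}$. I do not expect any genuine obstacle here: the substantive analytic and combinatorial work (closedness of $\omega^\lambda_e$, chain-map property of $K^\lambda_\gamma$, Fubini for composition) has all been done in the preceding lemmas, and the fact that everything descends to the eigenspace $\mathcal{A}_\lambda$ is guaranteed because $d$ preserves $\mathcal{A}_\lambda$ (as $[H,d]=0$) and $K^\lambda_t$, $L^\lambda_t$ take values in $\mathcal{A}_\lambda^{\otimes 2}$ by construction. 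Thus the proof reads:

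\begin{proof}
By Definition~\ref{22}, it suffices to show that $T^\lambda$ is a dgsm functor $\mathscr{C}\to\mathrm{Comp}$. On objects $T^\lambda(m)=(\mathcal{A}_\lambda^{\otimes m},d^m)$ is a chain complex, since $d$ preserves the eigenspace $\mathcal{A}_\lambda$ and $(d^m)^2=0$. Lemma~\ref{2} shows that $T^\lambda$ respects composition and identities, so it is a functor. Lemma~\ref{8} shows that for each pair $(m,n)$ the map $T^\lambda:C_\bullet(\Gamma(m,n))\to\mathrm{Hom}(\mathcal{A}_\lambda^{\otimes m},\mathcal{A}_\lambda^{\otimes n})$ is a chain map, i.e.\ it intertwines $\partial_\mathscr{C}$ with $\partial_{\mathrm{Comp}}$; hence $T^\lambda$ is a dg-functor. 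Finally, Lemma~\ref{26} shows that $T^\lambda$ is symmetric monoidal, with structure maps $\mu_{m,n}$ and $\epsilon$ equal to identities on components, so all coherence diagrams commute and the structure maps are manifestly chain maps. Therefore $T^\lambda$ is a dgsm functor, i.e.\ a TCFT.
\end{proof}
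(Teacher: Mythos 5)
Your proposal is correct and follows essentially the same route as the paper, which likewise deduces the theorem directly from Lemmas \ref{2}, \ref{26} and \ref{8} together with Definition \ref{22}; your additional remarks on compatibility of the monoidal structure maps with the differentials are harmless elaboration rather than a new argument.
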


\begin{proof}
    The theorem readily follows from the lemmas \ref{2}, \ref{26} and \ref{8}.
\end{proof}

\section{Constructing TCFTs using single-eigenvalue TCFTs} \label{Constructing TCFTs using single-eigenvalue TCFTs}
We wish to construct a Calabi-Yau TCFT by adding 2 Calabi-Yau TCFTs. We define an idea of addition, and explore if it results in a well defined dgsm functor.
\subsection{Adding two TCFTs}
\begin{Def}
\label{28}
     Let $T^{\alpha}$ and $T^{\beta}$ be 2 eigenvalue TCFTs for $\alpha, \beta \in [0,\infty)$. We assume $\alpha \neq \beta$ throughout this section\\
     Define the sum $T^{\alpha} + T^{\beta} : \mathscr{C} \to \text{Comp}$ to be the map,
 \begin{equation}
     (T^{\alpha} + T^{\beta}) (m) := 
\Bigl( (\mathcal{A}_\alpha \oplus \mathcal{A}_\beta)^{\otimes m} , d^m \Bigl) \equiv \mathcal{A}^m_{\alpha,\beta} \hspace{10 pt} ; \hspace{10 pt} (T^{\alpha} + T^{\beta})(\sigma) := T^{\alpha}(\sigma) + T^{\beta}(\sigma)
 \end{equation}
     \end{Def}
     The following lemma and corollary are used to prove lemma \ref{38}
     \begin{Lemma} 
     \label{27}
     
     \begin{equation}
         \text{Tr}_{y} \omega_s^\alpha(x,y) \wedge \omega_t^\beta(y,z) = 0
        \end{equation}
     \end{Lemma}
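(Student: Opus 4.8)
The plan is to reduce the claim to the orthogonality of the $H$-eigenspaces $\mathcal{A}_\alpha$ and $\mathcal{A}_\beta$, which holds because $\alpha\neq\beta$ and $H$ is self-adjoint for $\langle\,,\,\rangle$ (and its eigenvectors are pairwise orthogonal, as recalled above). Concretely, I would first unpack the two edge-forms: by definition $\omega_s^\alpha(x,y)=K^\alpha_t(x,y)+dt\,L^\alpha_t(x,y)$, where $K^\alpha_t(x,y)=e^{-\alpha t}\sum_j e_{ij}(x)\otimes_{\R}\star_y e_{ij}(y)$ with the $e_{ij}$ spanning the $\alpha$-eigenspace and $L^\alpha_t(x,y)=-d^*_x K^\alpha_t(x,y)$; similarly $\omega_t^\beta(y,z)=K^\beta_t(y,z)+dt\,L^\beta_t(y,z)$ is assembled from eigenforms $e_{lk}$ of eigenvalue $\beta$. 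Expanding $\omega_s^\alpha(x,y)\wedge\omega_t^\beta(y,z)$ into homogeneous summands, each summand factors as (a form depending only on $x$)$\otimes$(a form in $y$ from the $\alpha$-edge)$\wedge$(a form in $y$ from the $\beta$-edge)$\otimes$(a form depending only on $z$)$\otimes$(a form on $\text{Met}(\gamma)$), and $\text{Tr}_y$ contracts exactly the two $y$-tensor factors via $a(y)\otimes b(y)\mapsto\int_M a\wedge b$, carrying the $x$-, $z$- and $\text{Met}(\gamma)$-factors along unchanged.

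The key observation is then that each $y$-factor from the $\alpha$-edge is of the form $\star_y e_{ij}(y)$ (from $K^\alpha$, and also from $L^\alpha$, where the $d^*$ hits the $x$-variable), and this lies in $\mathcal{A}_\alpha$ because $\star$ commutes with $H$ — indeed $\star d\star=\pm d^*$ forces $\star H=H\star$, so $\star$ preserves $H$-eigenspaces. Likewise each $y$-factor from the $\beta$-edge is $e_{lk}(y)$ or $d^*_y e_{lk}(y)$, both in $\mathcal{A}_\beta$ since $d^*$ commutes with $H$. Hence every summand reduces to $\int_M a_\alpha\wedge b_\beta$ for some $a_\alpha\in\mathcal{A}_\alpha$, $b_\beta\in\mathcal{A}_\beta$, multiplied by factors not involving $y$. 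Using compatibility of the trace with the metric ($\text{Tr}(e*e')=\langle e,e'\rangle$ with $*=\star$), we get $\int_M a_\alpha\wedge b_\beta=\text{Tr}\bigl(a_\alpha\star(\star^{-1}b_\beta)\bigr)=\langle a_\alpha,\star^{-1}b_\beta\rangle$; since $\star^{-1}b_\beta\in\mathcal{A}_\beta$ and $\mathcal{A}_\alpha\perp\mathcal{A}_\beta$, this vanishes. Summing over summands yields $\text{Tr}_y\,\omega_s^\alpha(x,y)\wedge\omega_t^\beta(y,z)=0$.

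I do not expect any substantive obstacle here; this lemma is a clean consequence of eigenspace orthogonality. The only items requiring a little care are (i) checking that $\star$ and $d^*$ preserve $H$-eigenspaces, so that the $y$-factors genuinely land in $\mathcal{A}_\alpha$ respectively $\mathcal{A}_\beta$, and (ii) the sign and ordering bookkeeping involved in commuting $\text{Tr}_y$ past the $dt$ and $\text{Met}(\gamma)$-form factors — but these signs are immaterial precisely because the conclusion is that every summand is $0$.
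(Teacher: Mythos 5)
Your proposal is correct and follows essentially the same route as the paper: expand $\omega_s^\alpha$ and $\omega_t^\beta$ into eigenform summands, observe that the $y$-factors land in $\mathcal{A}_\alpha$ and $\mathcal{A}_\beta$ respectively (using that $d^*$, and in your version also $\star$, preserve $H$-eigenspaces), and conclude each term vanishes by the trace--metric compatibility $\text{Tr}(e\star e')=\langle e,e'\rangle$ together with orthogonality of distinct eigenspaces. The only cosmetic difference is that the paper pairs $e_{kl}(y)$ directly against $\star_y e_{ij}(y)$ without invoking $\star H = H\star$, whereas you route the same orthogonality through $\star^{-1}b_\beta$; the substance is identical.
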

     \begin{proof}
\begin{flalign*}
& \text{Tr}_{y} \omega_s^\alpha(x,y) \wedge \omega_t^\beta(y,z)] = \text{Tr}_{y} ( K^\alpha_s(x,y) - ds Q^\dagger_x K^\alpha_s(x,y)) ( K^\beta_t(y,z) - dt Q^\dagger_y K^\beta_t(y,z))  && \\ 
& = \text{Tr}_{y} (K^\alpha_s(x,y) - ds d_x^* K^\alpha_s(x,y))(K^\beta_t(y,z) - dt d_y^* K^\beta_t(y,z)) && \\ 
& = \text{Tr}_{y} (e^{-s\alpha} \sum_{j \in J} e_{ij}(x) \otimes \star e_{ij}(y))(e^{-t\beta} \sum_{l \in L} e_{kl}(y) \otimes \star e_{kl}(z)) && \\
& - dt \text{Tr}_{y} (e^{-s\alpha} \sum_{j \in J} e_{ij}(x) \otimes \star e_{ij}(y))(e^{-t\beta} \sum_{l \in L} d^*_y e_{kl}(y) \otimes \star e_{kl}(z)) && \\
& - ds \text{Tr}_{y} (e^{-s\alpha} \sum_{j \in J} d^*_x e_{ij}(x) \otimes \star e_{ij}(y))(e^{-t\beta} \sum_{l \in L} e_{kl}(y) \otimes \star e_{kl}(z)) && \\
& + dsdt \text{Tr}_{y} (e^{-r\alpha} \sum_{j \in J} d^*_x e_{ij}(x) \otimes \star e_{ij}(y))(e^{-r\beta} \sum_{l \in L} d^*_y e_{kl}(y) \otimes \star e_{kl}(z)) \biggl) && \\
& = 0. \hspace{10 pt} \text{since, for any j,l;} &&\\
& \text{the first term,} &&\\
& \text{Tr}_{y}  (e_{ij}(x) \otimes \star e_{ij}(y)) (e_{kl}(y) \otimes \star e_{kl}(z)) = (-1)^{|\text{Tr}_{y}| |e_{ij}(x)|} e_{ij}(x) \text{Tr}_{y}  [e_{kl}(y) \star e_{ij}(y)] \otimes \star e_{kl}(z) =0 && \\
& ( \text{since,}\hspace{5 pt} \text{Tr}_{y} [e_{kl}(y) \star e_{ij}(y)] = \langle e_{kl}(y) , e_{ij}(y) \rangle =0; \hspace{2 pt} \text{because eigenspaces of Laplacian are orthogonal}) && \\ 
& \text{the second term;} && \\
& \text{Tr}_{y} (e_{ij}(x) \otimes \star e_{ij}(y)) (d^*_y e_{kl}(y) \otimes \star e_{kl}(z)) = (-1)^{|\text{Tr}_{y}| |e_{ij}(x)|} e_{ij}(x) \text{Tr}_{y} [d^*_y e_{kl}(y) \star e_{ij}(y)] \otimes \star e_{kl}(z)) = 0;&& \\
& (\text{since, $d^*_y e_{kl}(y)$ is again in the eigenspace of $\beta$, because $d^*_y$ preserves eigenspaces of H as $[d^*_y,H]=0$)}. && \\
& \text{Similarly the third and fourth terms vanish as well.} && \\
\end{flalign*}
\end{proof}
\begin{Cor}
    \begin{equation*}
         T^{\alpha}(\sigma_2) \cdot T^{\beta}(\sigma_1) (a) = 0
    \end{equation*}
\end{Cor}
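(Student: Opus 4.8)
The plan is to deduce the Corollary directly from Lemma \ref{27} by unwinding the definition of the composition $T^{\alpha}(\sigma_2)\cdot T^{\beta}(\sigma_1)$ in exactly the same way the composition was computed in the proof of Lemma \ref{2}. First I would write out $T^{\beta}(\sigma_1)(a) = \int_{\sigma_1} K^{\beta}_{\gamma_1}(a)$ and $T^{\alpha}(\sigma_2) = \int_{\sigma_2} K^{\alpha}_{\gamma_2}$, and then recall that composing these two Hom-space elements means gluing the outgoing vertices of $\gamma_1$ to the incoming vertices of $\gamma_2$; by Fubini (exactly as in Lemma \ref{2}) the composite is an iterated integral over the edge-length parameters of $\gamma_1$ and $\gamma_2$ of a product of forms, and crucially the glued vertices $y_i$ each carry a trace $\text{Tr}_{y_i}$ applied to a product in which one factor is an outgoing-edge form $\omega^{\beta}_{s_i}(x,y_i)$ coming from $\gamma_1$ and the other is an incoming-edge form $\omega^{\alpha}_{t_i}(y_i,z)$ coming from $\gamma_2$.

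The key step is then to observe that the integrand contains the factor $\text{Tr}_{y_i}\,\omega^{\beta}_{s_i}(x,y_i)\wedge\omega^{\alpha}_{t_i}(y_i,z)$ for at least one glued vertex $y_i$ (there is always at least one, since $\sigma_1 \in C_*(\Gamma(m,n))$ and $\sigma_2 \in C_*(\Gamma(n,l))$ are being composed along $n$ vertices, and the degenerate case $n=0$ gives $T^{\beta}(\sigma_1) = 0$ or is handled trivially). By Lemma \ref{27} this factor vanishes identically — the orthogonality of the $\alpha$- and $\beta$-eigenspaces of the Laplacian kills every term in the expansion of the product of edge forms, because each term reduces to $\langle e_{kl}(y_i), e_{ij}(y_i)\rangle$ with $\lambda_i = \beta \neq \alpha = \lambda_k$. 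Since one factor of the integrand is zero, the whole integrand is zero, hence the integral is zero, and so $T^{\alpha}(\sigma_2)\cdot T^{\beta}(\sigma_1)(a) = 0$ for all $a$.

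I should be slightly careful about two bookkeeping points. First, the edge forms $\omega^{\beta}_e$ and $\omega^{\alpha}_e$ each contain a $K$-piece and a $dt\,L$-piece, so the product $\omega^{\beta}_{s_i}\wedge\omega^{\alpha}_{t_i}$ expands into four terms; but this is precisely the four-term expansion already carried out inside the proof of Lemma \ref{27}, where each term is shown to vanish using that $d^*$ preserves eigenspaces, so I can simply cite Lemma \ref{27} wholesale rather than redo it. Second, I should note that the trace/convolution conventions (the sign $(-1)^P$, the placement of the trace) match those used in the edge-labeling definitions, so that the glued-vertex contribution really is of the form appearing in Lemma \ref{27}; this is routine given that the composition in $\mathscr{C}$ is defined by gluing outgoing to incoming vertices respecting labels.

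The main obstacle — really the only nontrivial point — is making fully precise that after applying Fubini the composite integrand literally factors through an expression containing $\text{Tr}_{y_i}\,\omega^{\beta}_{s_i}(x,y_i)\wedge\omega^{\alpha}_{t_i}(y_i,z)$ as a tensor factor, i.e. that the trace over the glued vertex $y_i$ can be isolated from the rest of the Feynman-graph contraction. This is the same manipulation already performed (for a single eigenvalue) in the proof of Lemma \ref{2}, so the honest content here is just to observe that nothing in that manipulation used $\alpha = \beta$, and then invoke Lemma \ref{27}. I expect the write-up to be only a few lines.
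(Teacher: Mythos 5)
Your proposal is correct and follows essentially the same route as the paper: the paper's proof simply writes $T^{\alpha}(\sigma_2)\cdot T^{\beta}(\sigma_1)(a)$ as the glued-graph integral (exactly as in Lemma \ref{2}) whose integrand contains the factor $\text{Tr}_{y}\,\omega_s^{\beta}(x,y)\,\omega_t^{\alpha}(y,z)$, which vanishes by Lemma \ref{27}. Your extra bookkeeping remarks (the four-term expansion, the swap of $\alpha$ and $\beta$, the degenerate $n=0$ case) are consistent with, and slightly more careful than, the paper's one-line argument.
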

\begin{proof}
      \begin{equation*}
        T^{\alpha}(\sigma_2) \cdot T^{\beta}(\sigma_1) (a) = \int_{r,s,t,u} \text{Tr}_{x} a^\beta(x) \widetilde{K}^\beta_{\gamma_1}(x,x)\text{Tr}_{y} \omega_s^\beta(x,y) \omega_t^\alpha(y,z) \text{Tr}_{z}\widetilde{K}^\alpha_{\gamma_2}(z,z) = 0
    \end{equation*}
\end{proof}
\begin{Lemma} 
\label{38}
    $T^{\alpha} + T^{\beta} : \mathscr{C} \to \text{Comp}$ is a functor.
\end{Lemma}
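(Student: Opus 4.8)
The plan is to verify the two functor axioms---compatibility with composition and preservation of identities---for the map $T^{\alpha}+T^{\beta}$, reducing each to facts already established for the single-eigenvalue functors $T^{\alpha}$ and $T^{\beta}$ together with the vanishing cross-term of Lemma \ref{27} and its Corollary. Since $(T^{\alpha}+T^{\beta})(\sigma)$ is defined componentwise as $T^{\alpha}(\sigma)+T^{\beta}(\sigma)$ acting on $\mathcal{A}^m_{\alpha,\beta}=(\mathcal{A}_\alpha\oplus\mathcal{A}_\beta)^{\otimes m}$, the key is to understand how these maps interact on the direct sum: I expect that $T^{\alpha}(\sigma)$ kills anything with a $\mathcal{A}_\beta$-component fed in (its input edges are labeled by $\lambda=\alpha$ eigenforms, and pairing against $\mathcal{A}_\beta$ gives zero by orthogonality of eigenspaces) and symmetrically for $T^{\beta}(\sigma)$, so that on $\mathcal{A}^m_{\alpha,\beta}$ the operator $T^{\alpha}(\sigma)+T^{\beta}(\sigma)$ is essentially the ``block diagonal'' map $T^{\alpha}(\sigma)\oplus T^{\beta}(\sigma)$.

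First I would make precise this block-diagonal behavior: for a chain $\sigma$, decompose an input $a\in\mathcal{A}^m_{\alpha,\beta}$ into its pure tensor components lying in various $\mathcal{A}_{\alpha}$/$\mathcal{A}_{\beta}$ factors, and observe (using the trace pairings in the definition of $K^{\alpha}_\gamma$, which on internal and incoming external edges pair eigenforms against $\star$ of eigenforms) that $T^{\alpha}(\sigma)$ annihilates any component with at least one $\mathcal{A}_\beta$-tensor-factor, and vice versa. This is the same orthogonality computation that powers Lemma \ref{27}. Then for composition, given $\sigma_1\in C_*(\Gamma(l,m))$, $\sigma_2\in C_*(\Gamma(m,n))$, I would expand
\begin{equation*}
(T^{\alpha}+T^{\beta})(\sigma_2\cdot\sigma_1)=T^{\alpha}(\sigma_2\cdot\sigma_1)+T^{\beta}(\sigma_2\cdot\sigma_1)=T^{\alpha}(\sigma_2)T^{\alpha}(\sigma_1)+T^{\beta}(\sigma_2)T^{\beta}(\sigma_1),
\end{equation*}
using that each $T^{\lambda}$ is already a functor (Lemma \ref{2}). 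On the other side,
\begin{equation*}
(T^{\alpha}+T^{\beta})(\sigma_2)\cdot(T^{\alpha}+T^{\beta})(\sigma_1)=\bigl(T^{\alpha}(\sigma_2)+T^{\beta}(\sigma_2)\bigr)\bigl(T^{\alpha}(\sigma_1)+T^{\beta}(\sigma_1)\bigr),
\end{equation*}
which expands into four terms; the two diagonal terms $T^{\alpha}(\sigma_2)T^{\alpha}(\sigma_1)$ and $T^{\beta}(\sigma_2)T^{\beta}(\sigma_1)$ match, and the two cross terms $T^{\alpha}(\sigma_2)T^{\beta}(\sigma_1)$ and $T^{\beta}(\sigma_2)T^{\alpha}(\sigma_1)$ vanish by the Corollary to Lemma \ref{27}. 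For the identity axiom, $(T^{\alpha}+T^{\beta})(I_m)$ applied to $a\in\mathcal{A}^m_{\alpha,\beta}$ equals $T^{\alpha}(I_m)(a)+T^{\beta}(I_m)(a)$; writing $a$ in terms of pure tensors and using Lemma \ref{2}'s identity computation together with the block-diagonal vanishing, $T^{\alpha}(I_m)$ projects onto the all-$\mathcal{A}_\alpha$ part and $T^{\beta}(I_m)$ onto the all-$\mathcal{A}_\beta$ part, so the sum recovers $a$ precisely on $\mathcal{A}^m_{\alpha,\beta}$.

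The step I expect to be the main obstacle is establishing the block-diagonal vanishing cleanly on \emph{tensor powers} $\mathcal{A}^m_{\alpha,\beta}=(\mathcal{A}_\alpha\oplus\mathcal{A}_\beta)^{\otimes m}$, where an input may mix $\mathcal{A}_\alpha$ and $\mathcal{A}_\beta$ labels across different incoming vertices: one has to check that $T^{\alpha}(\sigma)$ really does kill every mixed component and not merely the ``all-$\mathcal{A}_\beta$'' one. The cleanest route is to trace the argument of Lemma \ref{27}/its Corollary through the full Feynman-rule definition of $K^{\alpha}_\gamma$: every incoming external edge carrying a $\mathcal{A}_\beta$-labeled input vertex produces, after the $\mathrm{Tr}_v$ at the adjacent internal vertex, a factor $\mathrm{Tr}_y[K^{\alpha}_t(x,y)\wedge a^\beta(y)]=\langle\cdot,\cdot\rangle$-type pairing of an $\alpha$-eigenform against a $\beta$-eigenform, which is zero---so the whole integrand vanishes. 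Granting this (and the analogous statement for connected components with no incoming edge, where $T^{\lambda}$ on the identity/exceptional pieces behaves as in Lemma \ref{2}), both axioms follow, and the monoidal structure will be handled subsequently; here we only claim $T^{\alpha}+T^{\beta}$ is a functor.
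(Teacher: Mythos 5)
Your composition check is exactly the paper's proof of Lemma \ref{38}: expand $(T^{\alpha}+T^{\beta})(\sigma_2)\cdot(T^{\alpha}+T^{\beta})(\sigma_1)$ into four terms, combine the two diagonal terms using functoriality of each $T^{\lambda}$ (Lemma \ref{2}), and kill the two cross terms with the Corollary to Lemma \ref{27}; the paper's proof consists of this computation and nothing else (it never checks the identity axiom). Your additional care about how $T^{\alpha}(\sigma)$ acts on mixed tensors in $(\mathcal{A}_\alpha\oplus\mathcal{A}_\beta)^{\otimes m}$ is a genuine refinement the paper silently skips, and your orthogonality argument for why every mixed component is annihilated is the right one.

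However, your identity-axiom step asserts something that contradicts your own block-diagonal observation. If $T^{\alpha}(I_m)$ annihilates every component containing at least one $\mathcal{A}_\beta$ tensor factor and is the identity only on $\mathcal{A}_\alpha^{\otimes m}$ (and symmetrically for $\beta$), then $T^{\alpha}(I_m)+T^{\beta}(I_m)$ is the projection onto $\mathcal{A}_\alpha^{\otimes m}\oplus\mathcal{A}_\beta^{\otimes m}$, which for $m\geq 2$ is a proper subspace of $\mathcal{A}^m_{\alpha,\beta}=(\mathcal{A}_\alpha\oplus\mathcal{A}_\beta)^{\otimes m}$ as written in Definition \ref{28}; on a mixed tensor such as $a_\alpha\otimes a_\beta$ the sum gives $0$, not the input, so the claim that ``the sum recovers $a$ precisely on $\mathcal{A}^m_{\alpha,\beta}$'' fails as stated. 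This is really a defect (or ambiguity) in Definition \ref{28} itself, which the paper's proof avoids only by omitting the identity check: for the identity axiom to hold one should take $(T^{\alpha}+T^{\beta})(m)$ to be the unmixed subspace $\mathcal{A}_\alpha^{\otimes m}\oplus\mathcal{A}_\beta^{\otimes m}$, which is what Definition \ref{40} and Proposition \ref{1} implicitly use; on that object your projection argument does yield the identity and the rest of your proof goes through unchanged.
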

\begin{proof}
    We wish to prove, $(T^{\alpha} + T^{\beta})(\sigma_2 \cdot \sigma_1) (a) = (T^{\alpha} + T^{\beta})(\sigma_2) \cdot (T^{\alpha} + T^{\beta})(\sigma_1) (a)$; $ \forall a \in \mathcal{A}^m_{\alpha,\beta}$ and for all $\sigma_1 \in C_\bullet(\Gamma(m,n)), \sigma_2 \in C_\bullet(\Gamma(n,l))$\\
    Let $\gamma_1 \in \Gamma^U(m,n) , \gamma_2 \in \Gamma^U(n,l)$ be the unmetrized graphs corresponding to $\sigma_1,\sigma_2$ respectively. Then,
    \begin{flalign*}
       & (T^{\alpha} + T^{\beta})(\sigma_2) \cdot (T^{\alpha} + T^{\beta})(\sigma_1) (a) = (T^{\alpha}(\sigma_2) + T^{\beta}(\sigma_2)) \cdot (T^{\alpha}(\sigma_1) + T^{\beta}(\sigma_1)) (a) && \\
            &      = T^{\alpha}(\sigma_2) \cdot T^{\alpha}(\sigma_1) (a) + T^{\alpha}(\sigma_2) \cdot T^{\beta}(\sigma_1) (a) + T^{\beta}(\sigma_2) \cdot T^{\alpha}(\sigma_1) (a) + T^{\beta}(\sigma_2) \cdot T^{\beta}(\sigma_1) (a) && \\
     &   =   T^{\alpha}(\sigma_2 \cdot \sigma_1) (a) +0 + 0 + T^{\beta}(\sigma_2 \cdot \sigma_1) (a) =      (T^{\alpha} + T^{\beta} ) (\sigma_2 \cdot \sigma_1)(a) && \\
    \end{flalign*}
\end{proof}
\begin{Lemma}
\label{3}
    The  functor $T^{\alpha} + T^{\beta} : \mathscr{C} \to \text{Comp}$ is symmetric monoidal.
\end{Lemma}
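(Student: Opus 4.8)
The plan is to follow the same template used for $T^\lambda$ in Lemma~\ref{26}, since the monoidal, unit, and braiding structures of $\mathscr{C}$ and $\text{Comp}$ are all strict/identity on components, and the only change is that the target object has been replaced by $(\mathcal{A}_\alpha \oplus \mathcal{A}_\beta)^{\otimes m}$. First I would record that $(T^\alpha + T^\beta)(0) = (\mathcal{A}_\alpha \oplus \mathcal{A}_\beta)^{\otimes 0} = \R$ and define the unit map $\epsilon \colon \R \to (T^\alpha+T^\beta)(0)$ as the identity, exactly as before. Then I would exhibit the multiplication natural transformation $\mu_{m,n} \colon \mathcal{A}^m_{\alpha,\beta} \otimes \mathcal{A}^n_{\alpha,\beta} \to \mathcal{A}^{m+n}_{\alpha,\beta}$ as the canonical identification $(\mathcal{A}_\alpha\oplus\mathcal{A}_\beta)^{\otimes m} \otimes (\mathcal{A}_\alpha\oplus\mathcal{A}_\beta)^{\otimes n} \cong (\mathcal{A}_\alpha\oplus\mathcal{A}_\beta)^{\otimes(m+n)}$, which is a chain map (it intertwines $d^m \otimes 1 + 1 \otimes d^n$ with $d^{m+n}$) and is identity on the underlying vector spaces. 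The associativity hexagon and the two unit triangles then commute because every arrow in them is an identity map on components, word-for-word as in the proof of Lemma~\ref{26}.

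The one point that genuinely needs the structure of $T^\alpha + T^\beta$ (as opposed to being a formal consequence of strictness) is compatibility of $\mu$ with the functors on morphisms: for $\sigma_1, \sigma_2$ one must check that $\mu_{m',n'} \circ \big((T^\alpha+T^\beta)(\sigma_1) \otimes (T^\alpha+T^\beta)(\sigma_2)\big) = (T^\alpha+T^\beta)(\sigma_1 \times \sigma_2) \circ \mu_{m,n}$. Here I would expand $(T^\alpha+T^\beta)(\sigma_i) = T^\alpha(\sigma_i) + T^\beta(\sigma_i)$ on each factor, distribute the tensor product, and observe that each of the four resulting terms matches the corresponding piece of $(T^\alpha+T^\beta)(\sigma_1\times\sigma_2)$ because each $T^\alpha$, $T^\beta$ is already monoidal (Lemma~\ref{26}) and because the labeling map $K_{\gamma_1 \times \gamma_2}$ is the tensor product $K_{\gamma_1} \otimes K_{\gamma_2}$ on the disjoint union of graphs — there are no edges joining the two components, so no cross terms of the type killed by Lemma~\ref{27} even arise. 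Finally, the braiding square is handled exactly as in Lemma~\ref{26}: $T(R_{m,n})$ permutes tensor factors and is the identity on components, so the square commutes.

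The main obstacle, such as it is, is bookkeeping rather than mathematics: one must be careful that $d^m$ on $(\mathcal{A}_\alpha \oplus \mathcal{A}_\beta)^{\otimes m}$ is the total differential and that $\mu_{m,n}$ respects it (it does, since $d$ on $\mathcal{A}_\alpha \oplus \mathcal{A}_\beta$ is the direct sum of the two restrictions of $d$, each of which preserves its eigenspace), and that the sum $T^\alpha(\sigma) + T^\beta(\sigma)$ indeed lands in $\text{Hom}(\mathcal{A}^m_{\alpha,\beta}, \mathcal{A}^n_{\alpha,\beta})$ — which holds because $T^\alpha(\sigma)$ vanishes on the $\mathcal{A}_\beta$ summands and vice versa, a consequence of the orthogonality of distinct eigenspaces already exploited in Lemma~\ref{27}. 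With these observations in place the diagrams are literally the ones drawn in the proof of Lemma~\ref{26} with $\mathcal{A}_\lambda$ replaced by $\mathcal{A}_\alpha \oplus \mathcal{A}_\beta$, so I would simply reproduce those three commuting diagrams and remark that all arrows are identities on components.
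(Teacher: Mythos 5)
Your proposal is correct and follows essentially the same route as the paper: define $\mu_{m,n}$ and $\epsilon$ as identities on components, observe that every arrow in the associativity, unit, and braiding diagrams is an identity (the diagrams are literally those of Lemma~\ref{26} with $\mathcal{A}_\lambda$ replaced by $\mathcal{A}_\alpha \oplus \mathcal{A}_\beta$), and conclude. You in fact go slightly further than the paper by explicitly checking naturality of $\mu$ against morphisms (noting that disjoint unions of graphs produce no cross terms), a point the paper's proof leaves implicit.
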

\begin{proof}
\begin{itemize}
Since,     \begin{equation*}
        (T^\alpha + T^\beta)(m) \otimes (T^\alpha + T^\beta)(n) = ((\mathcal{A}_\alpha \oplus \mathcal{A}_\beta) )^m \otimes ((\mathcal{A}_\alpha \oplus \mathcal{A}_\beta) )^n = ((\mathcal{A}_\alpha \oplus \mathcal{A}_\beta) )^{m+n} = (T^\alpha + T^\beta)(m + n),
    \end{equation*}
           \item T is monoidal by, \\
        Defining $\mu_{m,n} : (\mathcal{A}_\alpha \oplus \mathcal{A}_\beta) ^m \otimes (\mathcal{A}_\alpha \oplus \mathcal{A}_\beta) ^n \to (\mathcal{A}_\alpha \oplus \mathcal{A}_\beta) ^{m+n}$ to be the identity on components, $\mu_{m,n} : a \otimes b \mapsto  a \otimes b$. The following diagram commutes as all morphisms are identity morphisms.
\begin{center} \begin{tikzcd}[sep=huge]
  (\mathcal{A}_{\alpha,\beta}^l \otimes \mathcal{A}_{\alpha,\beta}^m) \otimes \mathcal{A}_{\alpha,\beta}^n \arrow[r,"\beta_{\mathcal{A}_{\alpha,\beta}^l ,\mathcal{A}_{\alpha,\beta} ^m,\mathcal{A}_{\alpha,\beta}^n }"] \arrow[d,"\mu_{l,m} \otimes 1_{\mathcal{A}_{\alpha,\beta}^n}"'] & \mathcal{A}_{\alpha,\beta}^l \otimes (\mathcal{A}_{\alpha,\beta}^m \otimes \mathcal{A}_{\alpha,\beta}^n) 
\arrow[d,"1_{T(l)} \otimes \mu_{m,n} "] \\
\mathcal{A}_{\alpha,\beta}^{l+m} \otimes \mathcal{A}_{\alpha,\beta}^n \arrow[d,"\mu_{l + m, n} "'] & \mathcal{A}_{\alpha,\beta}^l \otimes \mathcal{A}_{\alpha,\beta}^{m+n} \arrow[d,"\mu_{l,m+n}"] \\
\mathcal{A}_{\alpha,\beta}^{(l+m)+n}
\arrow[r,"T(\alpha_{l,m,n})"'] &
\mathcal{A}_{\alpha,\beta}^{l+(m+n)}
\end{tikzcd} \\
\end{center}

$T^{\alpha,\beta}(0)=\mathcal{A}_{\alpha,\beta}^0=\R$ and $T^{\alpha,\beta}(1_m)=1_{\mathcal{A}_{\alpha,\beta}^m}$. Define $\epsilon : \R \to \mathcal{A}_{\alpha,\beta}^0 \equiv \R$ as the identity. The following diagram commutes as all morphisms are identity morphisms.

\begin{center} \begin{tikzcd}[sep=huge]
  \R \otimes \mathcal{A}^m_{\alpha,\beta} \arrow[r,"\epsilon \otimes 1_{\mathcal{A}^m_{\alpha,\beta}}"] \arrow[d,"1_{\mathcal{A}_{\alpha,\beta}^m}"'] & \R \otimes \mathcal{A}^m_{\alpha,\beta} \arrow[d,"\mu_{0,m}"] &  \mathcal{A}^m_{\alpha,\beta} \otimes \R \arrow[r,"1_{\mathcal{A}^m_{\alpha,\beta}} \otimes \epsilon"] \arrow[d,"1_{\mathcal{A}_{\alpha,\beta}^m}"'] & \mathcal{A}^m_{\alpha,\beta} \otimes \R \arrow[d,"\mu_{m,0}"]  \\
  \mathcal{A}^m_{\alpha,\beta}  & \mathcal{A}^m_{\alpha,\beta} \arrow[l,"1_{\mathcal{A}_{\alpha,\beta}^m}"] & \mathcal{A}^m_{\alpha,\beta}  & \mathcal{A}^m_{\alpha,\beta} \arrow[l,"1_{\mathcal{A}_{\alpha,\beta}^m}"]
\end{tikzcd}
 \end{center}

\item T respects braiding (and hence is a symmetric monoidal functor), since the following diagram commutes,

\begin{center} \begin{tikzcd}[sep=huge]
  \mathcal{A}_{\alpha,\beta} ^m \otimes \mathcal{A}_{\alpha,\beta} ^n \arrow[r,"a \otimes b \mapsto b \otimes a "] \arrow[d," a \otimes b \mapsto a \otimes b "'] & \mathcal{A}_{\alpha,\beta} ^n \otimes \mathcal{A}_{\alpha,\beta} ^m \arrow[d,"b \otimes a \mapsto b \otimes a"]  \\
  \mathcal{A}_{\alpha,\beta} ^{m + n} \arrow[r,"T(R_{A,B})"]  & \mathcal{A}_{\alpha,\beta} ^{n + m}
\end{tikzcd}\\
\end{center}

As the morphism $T(R_{A,B})$ maps $a(x_1) .. \otimes .. a(x_m) \otimes b(x_1) .. \otimes .. b(x_n) \mapsto b(z_1) .. \otimes .. b(z_m) \otimes a(z_1) .. \otimes .. a(z_n)$ where $x_i,z_i \in M$

\end{itemize}
\end{proof}
\begin{Lemma}
\label{29}
    $T^{\alpha} + T^{\beta} : C_\bullet(\Gamma(m,n)) \to \text{Hom}(\mathcal{A}_{\alpha,\beta}^m,\mathcal{A}_{\alpha,\beta}^n) $ is a chain map
\end{Lemma}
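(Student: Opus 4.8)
The plan is to derive this immediately from Lemma \ref{8}, applied once with $\lambda = \alpha$ and once with $\lambda = \beta$, after setting up the small amount of bookkeeping needed to regard both $T^\alpha$ and $T^\beta$ as landing in the common complex $\text{Hom}(\mathcal{A}_{\alpha,\beta}^m,\mathcal{A}_{\alpha,\beta}^n)$. First I would record that, since $d$ commutes with $H$, every eigenspace $\mathcal{A}_\lambda$ is a subcomplex of $(\mathcal{A},d)$; hence $\mathcal{A}_{\alpha,\beta} = \mathcal{A}_\alpha \oplus \mathcal{A}_\beta$ is a direct-sum decomposition of chain complexes, and the structure maps $\iota_\alpha\colon \mathcal{A}_\alpha \hookrightarrow \mathcal{A}_{\alpha,\beta}$, $\iota_\beta$, and the projections $p_\alpha, p_\beta$ are chain maps. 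Taking $m$-fold (completed) tensor powers and equipping everything with $d^m$, the induced maps $\iota_\alpha^{\otimes m}$, $p_\alpha^{\otimes m}$, etc., are again chain maps, because $d^m$ is by definition the tensor-power differential and is therefore block-diagonal with respect to the decomposition of $\mathcal{A}_{\alpha,\beta}^{\otimes m}$ coming from that of each factor.

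With this in hand, $(T^\alpha + T^\beta)(\sigma)$ is by definition the sum in $\text{Hom}(\mathcal{A}_{\alpha,\beta}^m,\mathcal{A}_{\alpha,\beta}^n)$ of the two maps $\iota_\alpha^{\otimes n}\circ T^\alpha(\sigma)\circ p_\alpha^{\otimes m}$ and $\iota_\beta^{\otimes n}\circ T^\beta(\sigma)\circ p_\beta^{\otimes m}$. Lemma \ref{8} (with $\lambda=\alpha$) says $\sigma \mapsto T^\alpha(\sigma)$ is a chain map $C_\bullet(\Gamma(m,n)) \to \text{Hom}(\mathcal{A}_\alpha^m,\mathcal{A}_\alpha^n)$; since pre-composition with the fixed chain map $p_\alpha^{\otimes m}$ and post-composition with the fixed chain map $\iota_\alpha^{\otimes n}$ are chain maps between the relevant $\text{Hom}$-complexes, the assignment $\sigma \mapsto \iota_\alpha^{\otimes n}\circ T^\alpha(\sigma)\circ p_\alpha^{\otimes m}$ is a chain map into $\text{Hom}(\mathcal{A}_{\alpha,\beta}^m,\mathcal{A}_{\alpha,\beta}^n)$, and likewise for $\beta$.

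To conclude, I would use that $\partial_{\text{Comp}}$ is additive, so the sum of two chain maps $C_\bullet(\Gamma(m,n)) \to \text{Hom}(\mathcal{A}_{\alpha,\beta}^m,\mathcal{A}_{\alpha,\beta}^n)$ is again a chain map; this gives exactly that $T^\alpha + T^\beta$ intertwines $\partial_\mathscr{C}$ and $\partial_{\text{Comp}}$. I do not expect a real obstacle: the only point deserving care is to check that the differential $d^m$ on $\mathcal{A}_{\alpha,\beta}^m$ is precisely the one induced from the subcomplex structure on each eigenspace---so that the promotion maps genuinely are chain maps---and that the integration-pairing argument of Lemma \ref{8} is compatible with the direct sum; both are immediate because $d$, $d^*$, $\star$ all preserve the eigenspace decomposition, hence so does every edge-label $\omega_e^\lambda$, and everything in sight is $\R$-linear.
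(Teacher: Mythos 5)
Your proposal is correct and follows essentially the same route as the paper: the paper's proof simply notes that $T^\alpha$ and $T^\beta$ are chain maps (Lemma \ref{8}) and that the defect $(T^{\alpha}+T^{\beta})\partial_{\mathscr{C}}(\sigma)-(-1)^{p}\,\partial_{\text{Comp}}(T^{\alpha}+T^{\beta})(\sigma)$ splits into the two corresponding defects, each of which vanishes. Your additional bookkeeping with $\iota_\lambda^{\otimes n}$ and $p_\lambda^{\otimes m}$ just makes explicit how each $T^\lambda(\sigma)$ is regarded as an element of $\text{Hom}(\mathcal{A}_{\alpha,\beta}^m,\mathcal{A}_{\alpha,\beta}^n)$, a point the paper leaves implicit, but the substance of the argument is identical.
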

\label{30}
\begin{proof} Since, $T^{\alpha}, T^{\beta}$ are chain maps,
   \begin{equation*}
    (T^{\alpha} + T^{\beta}) \partial_C (\sigma) - (-1)^p d_{\text{Comp}} (T^{\alpha} + T^{\beta}) (\sigma) = [T^{\alpha} \partial_C (\sigma) - (-1)^p d_{\text{Comp}} T^{\alpha}(\sigma)] + [T^{\beta} \partial_C (\sigma) - (-1)^p d_{\text{Comp}} T^{\beta}(\sigma)] = 0
\end{equation*} 
where, $p \equiv |T^\alpha|=|T^\beta|=|T^\alpha+T^\beta|$ . Hence, $T^{\alpha} + T^{\beta}$ is also a chain map.
\end{proof}
\begin{Th}
     $T^{\alpha} + T^{\beta}$ is a TCFT if $\alpha \neq \beta$
\end{Th}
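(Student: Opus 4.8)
The plan is to verify the three conditions packaged into Definition \ref{22}: that $T^{\alpha}+T^{\beta}$ is a functor $\mathscr{C}\to\text{Comp}$, that it is symmetric monoidal, and that it is compatible with the differentials. Each of these has already been isolated as a separate lemma in this section, so the proof of the theorem amounts to assembling them, and the only work is to point to where the hypothesis $\alpha\neq\beta$ is actually needed.

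First I would invoke Lemma \ref{38} for functoriality; this is the step that carries the real content. Expanding $(T^{\alpha}+T^{\beta})(\sigma_2\cdot\sigma_1)(a)$ by bilinearity produces four terms: the two ``diagonal'' terms $T^{\alpha}(\sigma_2)\cdot T^{\alpha}(\sigma_1)(a)$ and $T^{\beta}(\sigma_2)\cdot T^{\beta}(\sigma_1)(a)$, which recombine into $(T^{\alpha}+T^{\beta})(\sigma_2\cdot\sigma_1)(a)$ using functoriality of the single-eigenvalue TCFTs (Lemma \ref{2}), and the two cross terms $T^{\alpha}(\sigma_2)\cdot T^{\beta}(\sigma_1)(a)$ and $T^{\beta}(\sigma_2)\cdot T^{\alpha}(\sigma_1)(a)$, which vanish by the corollary following Lemma \ref{27}. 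This vanishing is exactly where $\alpha\neq\beta$ enters: after taking the trace over a vertex shared by an edge labelled with a $\lambda=\alpha$ kernel and an edge labelled with a $\lambda=\beta$ kernel, every surviving summand contains a pairing $\langle e_{ij},e_{kl}\rangle$ between eigenvectors of $H$ with distinct eigenvalues, which is zero since eigenspaces of the Laplacian are orthogonal — and $d^{*}$ preserving eigenspaces of $H$ guarantees this stays true even after the $d^{*}$'s in the $L_t$ pieces are applied. So the genuine mathematical force of the theorem is concentrated in this orthogonality argument.

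Next I would cite Lemma \ref{3} for the symmetric monoidal structure: on objects one has $(\mathcal{A}_{\alpha}\oplus\mathcal{A}_{\beta})^{\otimes m}\otimes(\mathcal{A}_{\alpha}\oplus\mathcal{A}_{\beta})^{\otimes n}=(\mathcal{A}_{\alpha}\oplus\mathcal{A}_{\beta})^{\otimes(m+n)}$, with all structure maps identities on components, so the associativity, unit, and hexagon diagrams commute trivially. Finally, Lemma \ref{29} gives that $T^{\alpha}+T^{\beta}$ is a chain map on each morphism space $C_{\bullet}(\Gamma(m,n))$, which is immediate from linearity of the boundary and differential operators together with the fact that $T^{\alpha}$ and $T^{\beta}$ are individually chain maps. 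Combining Lemmas \ref{38}, \ref{3}, and \ref{29}, the map $T^{\alpha}+T^{\beta}$ is a dgsm functor $\mathscr{C}\to\text{Comp}$, hence a TCFT by Definition \ref{22}. I expect the only load-bearing step to be the functoriality check — specifically the orthogonality computation that kills the cross terms — while the monoidal and chain-map parts are purely formal.
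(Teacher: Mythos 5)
Your proposal is correct and follows the paper's own argument exactly: the theorem is assembled from Lemma \ref{38} (functoriality, where the cross terms die by the orthogonality Corollary after Lemma \ref{27}, which is indeed where $\alpha\neq\beta$ is used), Lemma \ref{3} (symmetric monoidality), and Lemma \ref{29} (the chain-map condition). Your added remarks only spell out details the paper delegates to those lemmas, so there is no substantive difference.
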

\begin{proof}
    The proof readily follows from the lemmas \ref{38}, \ref{3} and \ref{29}.
\end{proof}
\begin{Prop}
    The sum of 2 TCFT's defined as in \ref{28}, is not a functor in general.
\end{Prop}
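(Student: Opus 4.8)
The plan is to show that the hypothesis $\alpha \neq \beta$ used in Definition \ref{28} and in the preceding theorem cannot be dropped: already the weakest functoriality axiom fails when $\alpha = \beta$. So I would take $\alpha = \beta$ as the counterexample. In that case $\mathcal{A}_\alpha$ and $\mathcal{A}_\beta$ are literally the same subspace of $\mathcal{A}$, so $\mathcal{A}_\alpha \oplus \mathcal{A}_\beta$ is naturally identified with $\mathcal{A}_\alpha$ and, on morphisms, $(T^\alpha + T^\beta)(\sigma) = T^\alpha(\sigma) + T^\alpha(\sigma) = 2\,T^\alpha(\sigma)$.

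First I would record that $\mathcal{A}_\alpha \neq 0$: by assumption $\alpha$ is a genuine eigenvalue of the Laplacian, so it has a nonzero eigenform, and hence for every $m \geq 1$ we may choose $0 \neq a \in \mathcal{A}_\alpha^{\otimes m}$. Then I would invoke the identity computation carried out inside the proof of Lemma \ref{2}, which gives $T^\alpha(I_m)(a) = a$ for $a \in \mathcal{A}_\alpha^{\otimes m}$. Consequently
\[
(T^\alpha + T^\beta)(I_m)(a) = 2\,T^\alpha(I_m)(a) = 2a \neq a,
\]
so $(T^\alpha + T^\beta)$ does not send the identity morphism $I_m$ of $\mathscr{C}$ to the identity of $(T^\alpha + T^\beta)(m)$, hence it is not a functor. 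This is the entire argument; the only points requiring care are making the $\alpha = \beta$ interpretation of the sum explicit and noting that $\mathcal{A}_\alpha \neq 0$.

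I would additionally spell out why composition fails as well, since this is the more instructive failure mode and pinpoints exactly where the hypothesis $\alpha \neq \beta$ was being used. In the proof of Lemma \ref{38} the cross terms $T^\alpha(\sigma_2)\cdot T^\beta(\sigma_1)$ were killed by Lemma \ref{27}, whose proof relied on orthogonality of distinct eigenspaces of the Laplacian, i.e. $\text{Tr}_y[e_{kl}(y)\star e_{ij}(y)] = \langle e_{kl}(y), e_{ij}(y)\rangle = 0$. For $\alpha = \beta$ that orthogonality is lost, so the cross terms survive and one obtains $(T^\alpha + T^\alpha)(\sigma_2)\cdot(T^\alpha + T^\alpha)(\sigma_1) = 4\,T^\alpha(\sigma_2\cdot\sigma_1)$, whereas $(T^\alpha + T^\alpha)(\sigma_2\cdot\sigma_1) = 2\,T^\alpha(\sigma_2\cdot\sigma_1)$; these disagree whenever $T^\alpha(\sigma_2\cdot\sigma_1) \neq 0$, e.g. for $\sigma_1 = \sigma_2 = I_m$ and $a$ as above.

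There is essentially no obstacle here: the statement is a consistency remark rather than a theorem, and the only subtlety is fixing the right reading of $T^\alpha + T^\beta$ when the two eigenvalues coincide. I would lead with the identity-axiom argument, which is unambiguous and one line, and keep the composition remark as a short paragraph for context.
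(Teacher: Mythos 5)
Your proposal is correct and rests on the same counterexample as the paper: taking $\alpha=\beta$ in Definition \ref{28}. The paper's own proof exhibits the failure of the composition axiom, computing $(T^{\alpha}+T^{\alpha})(\sigma_2)\cdot(T^{\alpha}+T^{\alpha})(\sigma_1)(a)=4\,T^{\alpha}(\sigma_2\cdot\sigma_1)(a)=2\,(T^{\alpha}+T^{\alpha})(\sigma_2\cdot\sigma_1)(a)$, which is exactly your second paragraph; your diagnosis that the cross terms are no longer killed because Lemma \ref{27} needs orthogonality of \emph{distinct} eigenspaces matches the paper's implicit reasoning. What you do differently is to lead with the identity axiom, $(T^{\alpha}+T^{\alpha})(I_m)(a)=2a\neq a$, which is an even shorter and equally valid route, and you are more careful than the paper on the one point its proof glosses over: the inequality $2\,T(\sigma_2\cdot\sigma_1)(a)\neq T(\sigma_2\cdot\sigma_1)(a)$ only has content when the value is nonzero, which you secure by choosing $\sigma_1=\sigma_2=I_m$ and noting $\mathcal{A}_\alpha\neq 0$. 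The only wording to tighten is the claim that $\mathcal{A}_\alpha\oplus\mathcal{A}_\beta$ ``is naturally identified with'' $\mathcal{A}_\alpha$ when $\alpha=\beta$: you should say explicitly that the paper's $\oplus$ is the internal sum of eigenspaces inside $\mathcal{A}$ (as the paper's own proof does, taking $a\in\mathcal{A}_{\alpha,\beta}^m$), since for an external direct sum the identification would not hold and the bookkeeping of the factor $2$ would change.
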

\begin{proof}
     Consider the map $T_\alpha+T_\alpha : C_\bullet(\Gamma(m,n)) \to \text{Hom}(\mathcal{A}_{\alpha,\beta}^m,\mathcal{A}_{\alpha,\beta}^n) $,\\
     Let $\sigma_1 \in C_\bullet(\Gamma(m,n)), \sigma_2 \in C_\bullet(\Gamma(n,l))$ and let $a \in \mathcal{A}_{\alpha,\beta}^m$\\
     We wish to prove, $(T^{\alpha} + T^{\alpha})(\sigma_2 \cdot \sigma_1) (a) \neq (T^{\alpha} + T^{\alpha})(\sigma_2) \cdot (T^{\alpha} + T^{\alpha})(\sigma_1) (a)$ \\
    Let $\gamma_1 \in \Gamma^U(l,m) , \gamma_2 \in \Gamma^U(m,n)$ be the unmetrized graphs corresponding to $\sigma_1,\sigma_2$ respectively. Then,
    \begin{equation*}
        (T^{\alpha} + T^{\alpha})(\sigma_2) \cdot (T^{\alpha} + T^{\alpha})(\sigma_1) (a) = (T^{\alpha}(\sigma_2) + T^{\alpha}(\sigma_2)) \cdot (T^{\alpha}(\sigma_1) + T^{\alpha}(\sigma_1)) (a) 
            \end{equation*}
            \begin{equation*}
        = T^{\alpha}(\sigma_2) \cdot T^{\alpha}(\sigma_1) (a) + T^{\alpha}(\sigma_2) \cdot T^{\alpha}(\sigma_1) (a) + T^{\alpha}(\sigma_2) \cdot T^{\alpha}(\sigma_1) (a) + T^{\alpha}(\sigma_2) \cdot T^{\alpha}(\sigma_1) (a) 
    \end{equation*}
    \begin{equation*}
    =   T^{\alpha}(\sigma_2 \cdot \sigma_1) (a) + T^{\alpha}(\sigma_2 \cdot \sigma_1) (a) + T^{\alpha}(\sigma_2 \cdot \sigma_1) (a) + T^{\alpha}(\sigma_2 \cdot \sigma_1) (a) =     2 (T^{\alpha} + T^{\alpha} ) (\sigma_2 \cdot \sigma_1)(a) \neq (T^{\alpha} + T^{\alpha} ) (\sigma_2 \cdot \sigma_1)(a)
    \end{equation*}
\end{proof}
\subsection{Cutoff TCFT}
We now describe the an alternate way to define a Calabi-Yau TCFT using eigenvalue A-heat Kernel $K^A_t(x,y)$ (definition 2.4), which is more reminiscent of a usual Quantum field theory cutoff. We define this as follows,
\begin{Def}
\label{40}
Let $A$ be a countable subset of $[0,\infty)$. Let the map
      $ T^A : \mathscr{C} \to \text{Comp}$ be defined as,
       \begin{equation}
     T^A (m) := \underset{\lambda \in A}{\oplus} \mathcal{A}^m_\lambda \equiv \mathcal{A}^m_A \hspace{10 pt} ; \hspace{10 pt} T^A(\sigma) :=   \underset{\lambda \in A}{\sum} T^{\lambda}(\sigma) = \int_\sigma K^A_\gamma
 \end{equation}
\end{Def}

\begin{Th}
$T^A$ is a TCFT
\end{Th}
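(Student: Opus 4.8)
The plan is to mirror, essentially verbatim, the three-part structure used to prove that each single-eigenvalue $T^\lambda$ is a TCFT (Lemmas \ref{2}, \ref{26}, \ref{8}) and that $T^\alpha+T^\beta$ is a TCFT (Lemmas \ref{38}, \ref{3}, \ref{29}), since $T^A$ is literally the $A$-indexed analogue of the two-eigenvalue sum. So I would prove three things in order: (i) $T^A$ is a functor, (ii) $T^A$ is symmetric monoidal, (iii) $T^A$ is a chain map on morphism complexes; the theorem then follows by Definition \ref{22} exactly as in the earlier two theorems.

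\textbf{Functoriality.} The key input is the orthogonality of distinct eigenspaces, already packaged as Lemma \ref{27} and its corollary: $\mathrm{Tr}_y\,\omega_s^\mu(x,y)\wedge\omega_t^\nu(y,z)=0$ whenever $\mu\neq\nu$, hence $T^\mu(\sigma_2)\cdot T^\nu(\sigma_1)(a)=0$ for $\mu\neq\nu$. Using $T^A(\sigma)=\sum_{\lambda\in A}T^\lambda(\sigma)$ and bilinearity of composition,
\begin{equation*}
T^A(\sigma_2)\cdot T^A(\sigma_1)(a)=\sum_{\mu,\nu\in A}T^\mu(\sigma_2)\cdot T^\nu(\sigma_1)(a)=\sum_{\lambda\in A}T^\lambda(\sigma_2)\cdot T^\lambda(\sigma_1)(a)=\sum_{\lambda\in A}T^\lambda(\sigma_2\cdot\sigma_1)(a)=T^A(\sigma_2\cdot\sigma_1)(a),
\end{equation*}
the third equality being Lemma \ref{2}. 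For identities, $T^A(I_m)(a)=\sum_{\lambda\in A}T^\lambda(I_m)(a_\lambda)=\sum_{\lambda\in A}a_\lambda=a$ for $a=\sum_{\lambda\in A}a_\lambda\in\mathcal{A}^m_A$, again by Lemma \ref{2}; note $T^\lambda$ kills components outside $\mathcal{A}_\lambda$ because of the same orthogonality, so cross terms drop. One should also confirm $T^A$ respects the differential on objects, i.e. $T^A(m)=(\mathcal{A}^m_A,d^m)$ is a genuine chain complex: this holds since each $(\mathcal{A}_\lambda,d)$ is a subcomplex and a direct sum of subcomplexes is a subcomplex.

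\textbf{Symmetric monoidality and chain map.} These are formal and essentially identical to Lemmas \ref{3} and \ref{29}. The monoidal comparison maps $\mu_{m,n}:\mathcal{A}^m_A\otimes\mathcal{A}^n_A\to\mathcal{A}^{m+n}_A$ and unit $\epsilon:\R\to\mathcal{A}^0_A\equiv\R$ are identities on components, so the associativity, unit, and braiding squares commute trivially, using $\mathcal{A}^m_A\otimes\mathcal{A}^n_A=(\bigoplus_{\lambda\in A}\mathcal{A}_\lambda)^{m+n}=\mathcal{A}^{m+n}_A$; the braiding acts by permuting tensor factors exactly as in the single-eigenvalue case. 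For the chain-map property, write $T^A=\sum_{\lambda\in A}T^\lambda$ and use that each $T^\lambda$ is a chain map (Lemma \ref{8}) together with linearity of the differential $\partial_{\mathrm{Comp}}$, so $\partial_{\mathrm{Comp}}T^A(\sigma)=\sum_\lambda\partial_{\mathrm{Comp}}T^\lambda(\sigma)=\sum_\lambda(-1)^{|T^\lambda|}T^\lambda(\partial_C\sigma)=(-1)^{|T^A|}T^A(\partial_C\sigma)$, all the $T^\lambda$ having the same degree.

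\textbf{Main obstacle.} The only genuinely non-formal point, and the one I would be most careful about, is the handling of infinite sums when $A$ is infinite (countable): one must know that $\sum_{\lambda\in A}T^\lambda(\sigma)$ converges in $\mathrm{Hom}(\mathcal{A}^m_A,\mathcal{A}^n_A)$ (in the appropriate topology on the completed tensor products) and that the rearrangement $\sum_{\mu,\nu}=\sum_\lambda$ used in the functoriality step is legitimate. Since the cross terms vanish identically by Lemma \ref{27}, the double sum is really a single sum, so no cancellation-of-infinities subtlety arises; and $\sum_\lambda K^\lambda_t=K^A_t$ converges by the same estimates that make the complete heat kernel $K_t$ well defined (Definition \ref{25}), so $T^A(\sigma)=\int_\sigma K^A_\gamma$ is well defined directly. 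I would state this convergence as an assumption inherited from the setup (parallel to the unproven assumptions already made about the spectrum), note that on a fixed chain $\sigma$ the relevant sums are the ones already controlled in \cite{tcft}, and otherwise present the proof as: ``the argument is identical to that of Theorem for $T^\alpha+T^\beta$, with the index set $\{\alpha,\beta\}$ replaced by $A$ and Lemma \ref{27} applied to every pair of distinct eigenvalues in $A$.''

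\begin{proof}
By Definition \ref{22} it suffices to check that $T^A$ is a functor, that it is symmetric monoidal, and that it is a chain map on each morphism complex; these are the $A$-indexed analogues of Lemmas \ref{38}, \ref{3}, \ref{29}.

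Each $(\mathcal{A}_\lambda,d)$ is a subcomplex of $(\mathcal{A},d)$, hence so is the direct sum $\mathcal{A}_A=\bigoplus_{\lambda\in A}\mathcal{A}_\lambda$, and $T^A(m)=(\mathcal{A}^m_A,d^m)$ is a chain complex. For functoriality, composition in $\mathrm{Comp}$ is bilinear, so for $\sigma_1\in C_\bullet(\Gamma(m,n))$, $\sigma_2\in C_\bullet(\Gamma(n,l))$ and $a\in\mathcal{A}^m_A$,
\begin{equation*}
T^A(\sigma_2)\cdot T^A(\sigma_1)(a)=\sum_{\mu\in A}\sum_{\nu\in A}T^\mu(\sigma_2)\cdot T^\nu(\sigma_1)(a).
\end{equation*}
By the Corollary to Lemma \ref{27}, $T^\mu(\sigma_2)\cdot T^\nu(\sigma_1)(a)=0$ whenever $\mu\neq\nu$, so the double sum collapses to the diagonal, and by Lemma \ref{2},
\begin{equation*}
\sum_{\mu\in A}\sum_{\nu\in A}T^\mu(\sigma_2)\cdot T^\nu(\sigma_1)(a)=\sum_{\lambda\in A}T^\lambda(\sigma_2)\cdot T^\lambda(\sigma_1)(a)=\sum_{\lambda\in A}T^\lambda(\sigma_2\cdot\sigma_1)(a)=T^A(\sigma_2\cdot\sigma_1)(a).
\end{equation*}
For the identity, write $a=\sum_{\lambda\in A}a_\lambda$ with $a_\lambda\in\mathcal{A}^m_\lambda$; since $T^\lambda(I_m)$ annihilates the components in $\mathcal{A}^m_\mu$ for $\mu\neq\lambda$ (again by orthogonality of eigenspaces) and restricts to the identity on $\mathcal{A}^m_\lambda$ by Lemma \ref{2}, we get $T^A(I_m)(a)=\sum_{\lambda\in A}T^\lambda(I_m)(a)=\sum_{\lambda\in A}a_\lambda=a$.

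The symmetric monoidal structure is built exactly as in Lemma \ref{3}: since $\mathcal{A}^m_A\otimes\mathcal{A}^n_A=(\bigoplus_{\lambda\in A}\mathcal{A}_\lambda)^{m+n}=\mathcal{A}^{m+n}_A$, the comparison maps $\mu_{m,n}:\mathcal{A}^m_A\otimes\mathcal{A}^n_A\to\mathcal{A}^{m+n}_A$ and the unit $\epsilon:\R\to\mathcal{A}^0_A\equiv\R$ are the identity on components, so the associativity, unit and hexagon/braiding diagrams commute trivially, and the braiding is implemented by the corresponding permutation of tensor factors as in the single-eigenvalue case. Finally, $T^A=\sum_{\lambda\in A}T^\lambda$ with all $T^\lambda$ of the same degree $p$, so using that each $T^\lambda$ is a chain map (Lemma \ref{8}) and linearity of $\partial_{\mathrm{Comp}}$,
\begin{equation*}
\partial_{\mathrm{Comp}}T^A(\sigma)-(-1)^p T^A(\partial_C\sigma)=\sum_{\lambda\in A}\bigl(\partial_{\mathrm{Comp}}T^\lambda(\sigma)-(-1)^p T^\lambda(\partial_C\sigma)\bigr)=0,
\end{equation*}
so $T^A$ is a chain map. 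Therefore $T^A$ is a dgsm functor $\mathscr{C}\to\mathrm{Comp}$, i.e. a TCFT.
\end{proof}
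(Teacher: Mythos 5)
Your proof is correct, and the overall three-part skeleton (functor, symmetric monoidal, chain map, then invoke Definition \ref{22}) is the same as the paper's; but on two of the three parts you argue along a genuinely different line. The paper treats $T^A$ as a single kernel-level theory: functoriality is proved by rerunning the direct Fubini argument of Lemma \ref{2} with $K^A_t$ in place of $K^\lambda_t$ (using $\lim_{t\to 0}K^A_t=\delta^A$, the projector onto $\mathcal{A}_A$), and the chain-map property is obtained by noting that $K^A_t$ satisfies all the identities of Corollary \ref{9}, so Lemmas \ref{20} and \ref{21} apply verbatim to $K^A$ and Lemma \ref{8} then gives the chain map. You instead treat $T^A$ through its decomposition $\sum_{\lambda\in A}T^\lambda$, reducing functoriality to the two-eigenvalue argument of Lemma \ref{38} with Lemma \ref{27} applied to every pair of distinct eigenvalues, and the chain-map property to the linearity argument of Lemma \ref{29}. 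Each route buys something: the paper's kernel-level argument sidesteps the infinite-sum and rearrangement issues you rightly flag (convergence of $K^A_t$ is exactly the convergence already assumed for the complete heat kernel, and no termwise manipulation of operators is needed), while your summation argument makes explicit the cross-term vanishing that the paper only uses implicitly when Definition \ref{40} identifies $\sum_{\lambda\in A}T^\lambda(\sigma)$ with $\int_\sigma K^A_\gamma$ (this is really Proposition \ref{1} generalized to arbitrary pairs in $A$), and it also makes transparent why $T^\lambda$ annihilates components in $\mathcal{A}_\mu$ for $\mu\neq\lambda$. If you keep your route, the one point to state cleanly is the convergence/finite-support convention for infinite $A$ (with the algebraic direct sum $\bigoplus_{\lambda\in A}\mathcal{A}_\lambda$ every input has finitely many components, so your double sums are finite on each element); with that remark your proof is complete and the symmetric-monoidal part coincides with the paper's.
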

\begin{proof}
  \begin{itemize}
      \item $T^A$ is a functor (Proof is simlar to proof of lemma \ref{2}).
\item $T^A$ is symmetric monoidal by (Proof is simlar to proof of lemma \ref{3}).
\item $K^A_t(x,y)$ satisfies the properties in Corollary \ref{9}, hence, lemma \ref{20} (and consecutively lemma \ref{21}) hold for $K^A_t(x,y)$. Then by lemma \ref{8}, $T^A$ is a chain map. 
  \end{itemize}  
\end{proof}
The following proposition shows that the TCFTs obtained by definition \ref{28} definition \ref{40} are identical.
\begin{Prop}
\label{1}
    \begin{equation}
        (T^{\alpha} + T^{\beta})(\sigma) = \int_\sigma K^\alpha_\gamma + K^\beta_\gamma = \int_\sigma K^{\alpha,\beta}_\gamma = T^{\alpha,\beta}(\sigma) \hspace{10 pt} \forall \sigma \in C_\bullet(\Gamma(m,n))
    \end{equation} 
\end{Prop}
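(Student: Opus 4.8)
The two outer equalities in the statement are essentially an unwinding of definitions: $(T^{\alpha}+T^{\beta})(\sigma):=T^{\alpha}(\sigma)+T^{\beta}(\sigma)=\int_\sigma K^\alpha_\gamma+\int_\sigma K^\beta_\gamma=\int_\sigma(K^\alpha_\gamma+K^\beta_\gamma)$ by linearity of the integration pairing, while $\int_\sigma K^{\alpha,\beta}_\gamma=T^{\alpha,\beta}(\sigma)$ is Definition \ref{40} with $A=\{\alpha,\beta\}$. So the whole content sits in the middle equality, and it suffices to prove the identity of maps $K^\alpha_\gamma+K^\beta_\gamma=K^{\alpha,\beta}_\gamma$ in $\text{Hom}(\mathcal{A}^m_{\alpha,\beta},\mathcal{A}^n_{\alpha,\beta}\otimes\Omega^*(\text{Met}(\gamma)))$ for each unmetrized graph $\gamma$; here $K^\alpha_\gamma$ is read as the graph operation in which \emph{every} edge carries the $\alpha$-heat kernel, applied to all of $\mathcal{A}^m_{\alpha,\beta}$, its restriction to $\mathcal{A}^m_\alpha$ being the original $T^\alpha$-integrand. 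Once this pointwise identity of forms is known, integrating over $\sigma$ finishes the proof.

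First I would record that, by Corollary \ref{9} and linearity, the edge form for the $\{\alpha,\beta\}$-kernel splits: $\omega^{\alpha,\beta}_e=K^{\alpha,\beta}_t(x_i,x_j)+dt\,L^{\alpha,\beta}_t(x_i,x_j)=\omega^\alpha_e+\omega^\beta_e$ on every internal edge, and likewise (after wedging the input) on every incoming external edge. Next I would use that $K_\gamma$ is built from the edge forms only through operations that are \emph{multilinear} in them — the tensor product $\widetilde K_\gamma=\bigotimes_e\omega_e$, the diagonal pullbacks $D^*$ that wedge the two legs meeting at a common vertex, and the fibrewise integrations $I^*$ (the traces $\text{Tr}_v$) — exactly as assembled in the proof of Lemma \ref{21}. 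Expanding the tensor product distributes the per-edge sum: $K^{\alpha,\beta}_\gamma=\sum_{f\colon E(\gamma)\to\{\alpha,\beta\}}K^{(f)}_\gamma$, where $K^{(f)}_\gamma$ is the graph operation with each edge $e$ labelled by $\omega^{f(e)}_e$. The two constant labellings $f\equiv\alpha$ and $f\equiv\beta$ reproduce $K^\alpha_\gamma$ and $K^\beta_\gamma$; here one uses that an incoming external edge carrying $K^\alpha_t$ annihilates the $\mathcal{A}_\beta$-part of the input it is wedged against, since distinct eigenspaces of $H$ are orthogonal (so $\text{Tr}_y[\,e^\alpha(y)\star e^\beta(y)\,]=\langle e^\alpha,e^\beta\rangle=0$), which is also why the $f\equiv\alpha$ term agrees with $T^\alpha(\sigma)$ precomposed with the projection $\mathcal{A}^m_{\alpha,\beta}\to\mathcal{A}^m_\alpha$.

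The heart of the argument is that every mixed labelling contributes zero. Given such an $f$, pick a connected component of $\gamma$ on which $f$ is non-constant; walking along edges one reaches a vertex $v$ at which an $\alpha$-labelled edge and a $\beta$-labelled edge meet. At $v$ the operation $K^{(f)}_\gamma$ wedges the leg coming from the $\alpha$-edge (a section of $\mathcal{A}_\alpha$) against the leg coming from the $\beta$-edge (a section of $\mathcal{A}_\beta$) and then applies $\text{Tr}_v$ and the fibrewise integration; this is precisely the configuration killed in Lemma \ref{27} and its corollary, and it should force $K^{(f)}_\gamma=0$. Summing over $f$ then gives $K^{\alpha,\beta}_\gamma=K^\alpha_\gamma+K^\beta_\gamma$, and integrating over $\sigma$ yields the proposition; note that $\alpha\neq\beta$ is used exactly here, through the orthogonality of the two eigenspaces.

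The step I expect to be the real obstacle is the vanishing of the mixed terms at internal vertices of valency $\ge 3$. Lemma \ref{27} literally produces the vanishing pairing $\langle e^\alpha,e^\beta\rangle$ only when the vertex in question has exactly two incident half-edges, as happens at the gluing vertices in a composition; a generic internal vertex of $\gamma$ is trivalent or higher, and there $\text{Tr}_v$ multiplies \emph{three or more} legs before integrating, so the orthogonality of $\mathcal{A}_\alpha$ and $\mathcal{A}_\beta$ is not visibly a bilinear pairing anymore. To make this airtight I would try to re-express the vertex contraction through the convolution operators $C_{K^\alpha_t},C_{K^\beta_t}$ — which are, up to sign and the scalars $e^{-\alpha t},e^{-\beta t}$, the orthogonal projections onto $\mathcal{A}_\alpha,\mathcal{A}_\beta$, so that $C_{K^\alpha_t}\circ C_{K^\beta_t}=0$ — and thereby reduce the mixed-labelling contraction to a sequence of the bivalent contractions already handled by Lemma \ref{27}; if that reduction cannot be carried out in full generality, one would have to restrict the statement to the subclass of graphs for which it can.
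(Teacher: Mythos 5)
Your outline reproduces the paper's own argument: the paper's proof of Proposition \ref{1} is exactly the expansion of $\int_\sigma K^{\alpha,\beta}_\gamma$ into per-edge and per-input sums followed by the single assertion that all cross terms vanish by Lemma \ref{27}, which is your decomposition over labellings $f\colon E(\gamma)\to\{\alpha,\beta\}$ in different notation, and your reading of the outer equalities as definitional (Definitions \ref{28} and \ref{40}) is also how the paper treats them. So in strategy you and the paper coincide.

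The obstacle you flag in your final paragraph is, however, a genuine gap, and the paper does not close it either: the printed proof writes the integrand schematically as $\text{Tr}_x(\cdots)\,\text{Tr}_y\,(\omega_s^\alpha+\omega_s^\beta)(\omega_t^\alpha+\omega_t^\beta)\,\text{Tr}_z(\cdots)$, a form in which every mixed pair of legs meets across a bivalent trace — precisely the configuration of Lemma \ref{27} — and it even writes $\widetilde K^{\alpha,\beta}_{\gamma_i}$ as $\widetilde K^\alpha_{\gamma_i}+\widetilde K^\beta_{\gamma_i}$, which already presupposes that mixed labellings inside each piece have been discarded. At an internal vertex of valency $\ge 3$ the contraction is $\text{Tr}_v$ of a product of three or more legs, and orthogonality of $\mathcal{A}_\alpha$ and $\mathcal{A}_\beta$ does not kill such a term: the algebra product of two $\alpha$-eigenforms need not be orthogonal to $\mathcal{A}_\beta$ (already on a flat torus, $\cos^2 x$ has a nonzero component in a higher eigenspace, so a trivalent vertex with two $\alpha$-legs and one $\beta$-leg can give a nonzero triple trace). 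For the same reason your proposed repair via $C_{K^\alpha_t}\circ C_{K^\beta_t}=0$ cannot reduce a $\ge 3$-valent trace to bivalent pairings: the projections would have to be inserted after the legs are multiplied in $\mathcal{A}$, and multiplication does not preserve eigenspaces. So your closing hedge is the honest assessment: both your argument and the paper's establish the vanishing of cross terms only where they arise across a bilinear pairing (inputs wedged against external-edge kernels, and the gluing vertices handled by the corollary to Lemma \ref{27}); for graphs with internal vertices of valency $\ge 3$ the middle equality $K^\alpha_\gamma+K^\beta_\gamma=K^{\alpha,\beta}_\gamma$ requires a further argument or a restriction or reinterpretation of the claim, and this is a gap in the paper's proof as much as in your proposal.
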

\begin{proof}
\begin{flalign*}
& \int_\sigma K^{\alpha,\beta}_\gamma = \int_{r,s,t,u} \text{Tr}_{x} (a^\alpha(x) + a^\beta(x)) (\widetilde{K}^\alpha_{\gamma_1}(x,x) + \widetilde{K}^\beta_{\gamma_1}(x,x))\text{Tr}_{y} (\omega_s^\alpha(x,y) + \omega_s^\beta(x,y)) (\omega_t^\alpha(y,z) + \omega_t^\beta(y,z)) &&\\
& \text{Tr}_{z}(\widetilde{K}^\alpha_{\gamma_2}(z,z) + \widetilde{K}^\beta_{\gamma_2}(z,z)) &&\\
& = \int_{r,s,t,u} \text{Tr}_{x} a^\alpha(x)\widetilde{K}^\alpha_{\gamma_1}(x,x)\text{Tr}_{y} \omega_s^\alpha(x,y) \omega_t^\alpha(y,z) \text{Tr}_{z}\widetilde{K}^\alpha_{\gamma_2}(z,z) &&\\
& + \int_{r,s,t,u} \text{Tr}_{x} a^\beta(x)\widetilde{K}^\beta_{\gamma_1}(x,x)\text{Tr}_{y} \omega_s^\beta(x,y) \omega_t^\beta(y,z)\text{Tr}_{z}\widetilde{K}^\beta_{\gamma_2}(z,z) &&\\
&   \text{(Cross terms vanish, by lemma \ref{27})} &&\\
& = \int_\sigma K^\alpha_\gamma + K^\beta_\gamma = (T^{\alpha} + T^{\beta})(\sigma) &&\\
\end{flalign*} 
\end{proof}
TCFT constructed in this section is how one usually regulates a QFT by a sharp energy cutoff. We now relate any two regularized TCFTs with different cutoffs. 
\begin{Th}
    Let $\Lambda' > \Lambda > 0$. Then, \\
    $T^{[0,\Lambda]}$ is a TCFT (called as energy eigenvalue $\Lambda$ cutoff TCFTs),
    \begin{equation}
        T^{[0,\Lambda']} = T^{[0,\Lambda]} + T^{[\Lambda,\Lambda']} 
    \end{equation}
\end{Th}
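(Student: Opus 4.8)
The plan is to reduce everything to Proposition~\ref{1} together with the additivity of the eigenvalue $A$-heat kernel in the index set $A$. First I would observe that, since $\Lambda, \Lambda' \notin E$ can be arranged (or if they are eigenvalues, simply decide once and for all whether the endpoint is included — the paper's convention is $A = [0,\Lambda]\cap E$), the sets $A_1 := [0,\Lambda]\cap E$ and $A_2 := [\Lambda,\Lambda']\cap E$ partition $A := [0,\Lambda']\cap E$, i.e. $A = A_1 \sqcup A_2$ (with the only possible overlap being an eigenvalue exactly equal to $\Lambda$, which can be assigned to one side). From the definition of the eigenvalue $A$-heat kernel as a sum over $\{i : \lambda_i \in A\}$, this immediately gives the pointwise identity $K^{A}_t(x,y) = K^{A_1}_t(x,y) + K^{A_2}_t(x,y)$, and likewise $L^{A}_t = L^{A_1}_t + L^{A_2}_t$ since $L = -Q^\dagger K$ and $Q^\dagger$ is linear.

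Next I would feed this decomposition into the labeling construction. Each edge form $\omega^A_e = K^A_t + dt\, L^A_t$ splits as $\omega^{A_1}_e + \omega^{A_2}_e$, and the vanishing of all cross terms is exactly the content of Lemma~\ref{27}: for $\lambda \in A_1$ and $\mu \in A_2$ with $\lambda \neq \mu$, $\mathrm{Tr}_y\, \omega^{\lambda}_s(x,y)\wedge \omega^{\mu}_t(y,z) = 0$ because distinct eigenspaces of $H$ are orthogonal and $d^*$ preserves eigenspaces. (The case $\lambda = \mu = \Lambda$ does not arise since we put that eigenvalue on only one side.) So running the Feynman-rule contraction $K^A_\gamma$ with $K^A = K^{A_1}+K^{A_2}$ and expanding, every mixed monomial dies, leaving $K^A_\gamma = K^{A_1}_\gamma + K^{A_2}_\gamma$ as maps $\mathcal{A}^m_A \to \mathcal{A}^n_A \otimes \Omega^\bullet(\mathrm{Met}(\gamma))$ — this is the $n$-fold generalization of the two-cutoff computation already carried out in the proof of Proposition~\ref{1}. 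Integrating over a chain $\sigma$ then yields $T^{[0,\Lambda']}(\sigma) = T^{[0,\Lambda]}(\sigma) + T^{[\Lambda,\Lambda']}(\sigma)$ on morphisms, and on objects $\mathcal{A}^m_{[0,\Lambda']} = \bigoplus_{\lambda \in A}\mathcal{A}^m_\lambda = \bigl(\bigoplus_{\lambda\in A_1}\mathcal{A}^m_\lambda\bigr)\oplus\bigl(\bigoplus_{\lambda\in A_2}\mathcal{A}^m_\lambda\bigr)$, which is precisely $(T^{[0,\Lambda]} + T^{[\Lambda,\Lambda']})(m)$ in the sense of Definition~\ref{28} (extended to arbitrary subsets as in Definition~\ref{40}). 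Finally, that $T^{[0,\Lambda]}$, $T^{[\Lambda,\Lambda']}$ and their sum are genuine TCFTs is already established: each $T^A$ is a TCFT by the theorem preceding Proposition~\ref{1}, and the sum is handled exactly as in Lemmas~\ref{38}, \ref{3} and \ref{29}, the disjointness $A_1 \cap A_2 = \emptyset$ playing the role that $\alpha \neq \beta$ played there.

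The main obstacle — really the only subtlety — is bookkeeping at the shared endpoint $\Lambda$: if $\Lambda$ happens to be an eigenvalue, then naively $[0,\Lambda]\cap E$ and $[\Lambda,\Lambda']\cap E$ both contain it and the sum $T^{[0,\Lambda]} + T^{[\Lambda,\Lambda']}$ would double-count that eigenspace, breaking functoriality exactly as in the Proposition showing $T_\alpha + T_\alpha$ fails to be a functor. I would dispatch this by invoking the paper's stated convention that intervals denote intersection with $E$ and by declaring once that the endpoint is placed in only one of the two pieces (equivalently, one may take $[0,\Lambda]$ and $(\Lambda,\Lambda']$), so that $A_1$ and $A_2$ are honestly disjoint; everything else is then a direct consequence of linearity of the heat-kernel sum and Lemma~\ref{27}. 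Beyond that, no new ideas are needed — the statement is essentially Proposition~\ref{1} with the two-element index set $\{\alpha,\beta\}$ replaced by the partition $A = A_1 \sqcup A_2$.
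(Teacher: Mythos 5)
Your proposal is correct and follows essentially the same route as the paper: decompose the cutoff theory into its single-eigenvalue constituents and invoke the additivity established in Proposition \ref{1} (equivalently, Definition \ref{40} together with the vanishing of cross terms from Lemma \ref{27}), with the TCFT property of $T^{[0,\Lambda]}$ already supplied by the theorem that $T^A$ is a TCFT. Your bookkeeping at the shared endpoint is in fact more careful than the paper's own proof, whose index split $\sum_{i=0}^{i_{\Lambda}} T^{\lambda_i} + \sum_{i=i_{\Lambda}}^{i_{\Lambda'}} T^{\lambda_i}$ counts the eigenvalue $\lambda_{i_\Lambda}$ twice when $\Lambda$ is itself an eigenvalue.
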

\begin{proof}
    The proof readily follows from Proposition \ref{1} as, 
     \begin{equation}
        T^{[0,\Lambda']} = \sum_{i=0}^{i_{\Lambda'}} T^{\lambda_i} = \sum_{i=0}^{i_{\Lambda}} T^{\lambda_i} + \sum_{i=i_{\Lambda}}^{i_{\Lambda'}} T^{\lambda_i} = T^{[0,\Lambda]} + T^{[\Lambda,\Lambda']} 
    \end{equation}
\end{proof}

To conclude, we presented a way to construct eigenvalue cutoff Heat Kernel and construct regularized TCFTs. In our next work, we would explore length based cutoffs like in quantum field theory and attempt to construct a regularized TCFT which is cutoff independent using either of the regularization methods. That will allow us to extend the TCFT on the compactified moduli space.

\appendix

\section{Category Theory Background} \label{section category theory background}

Definitions \ref{23} to \ref{24} are all adapted from \cite{mc}

\begin{Def}
\label{23}
    A \textit{monoidal category} $(\mathcal{C}, \square, I)$ is a category $\mathcal{C}$ with the following structure,
    \begin{enumerate}
        \item a bifunctor $\square : \mathcal{C} \cross \mathcal{C} \to \mathcal{C}$, called the \textit{tensor product}. Here $\mathcal{C} \cross \mathcal{C}$ is the product category of $\mathcal{C}$ with itself. Denote $\square(A,B) \equiv
        A \square B$ and $\square(f,g) \equiv
        f \square g$ for all $A,B \in \text{Ob}(\mathcal{C})$ and $f,g \in \text{Mor}_\mathcal{C}(-,-)$
        \item an object $I \in \text{Ob}(\mathcal{C})$ called the unit object or \textit{tensor unit}.
        \item Three natural isomorphisms,
        \begin{equation*}
         \alpha : (- \square -) \square - \to - \square (- \square -) \hspace{10 pt} \text{called the \textit{associator}; with components,}
        \end{equation*}
        \begin{equation*}
             \alpha_{A,B,C} : (A \square B) \square C \to A \square (B \square C) 
        \end{equation*}
        \begin{equation*}
         \lambda : I \square - \to - \hspace{10 pt} \text{called the \textit{left unitor} ; with components,}
        \end{equation*}
        \begin{equation*}
             \lambda_{A} : I \square A \to A  
        \end{equation*}
        \begin{equation*}
         \rho : - \square I \to - \hspace{10 pt} \text{called the \textit{right unitor} ; with components,}
        \end{equation*}
        \begin{equation*}
             \rho_{A} : A \square I \to A  
        \end{equation*}
        \item Two conditions on the natural isomorphisms,
       
        \begin{itemize}
            \item the \textit{pentagon identity} which states that the following diagram should commute for all objects,
            
            \begin{center} \begin{tikzcd}[sep=huge] 
            ((A \square B) \square C) \square D \arrow[r,"\alpha_{A \square B,C,D}"] \arrow[d,"\alpha_{A,B,C} \square 1_D"'] & (A \square B) \square (C \square D) \arrow[r,"\alpha_{A,B,C \square D}" ] & A \square (B \square (C \square D)) \\
  (A \square (B \square C)) \square D \arrow[r,"\alpha_{A,B \square C,D}"'] &
  A \square ((B \square C) \square D) \arrow[ur,"1_A \square \alpha_{B,C,D}"']
\end{tikzcd}
\end{center}

            \item the \textit{triangle identity} which states that the following diagram should commute for all objects,

 \begin{center} \begin{tikzcd}[sep=huge]
  (A \square I) \square B \arrow[r,"\alpha_{A ,I,B}"] \arrow[d,"\rho_{A} \square 1_B"'] & A \square (I \square B) \arrow[dl,"1_A \square \lambda_B "]  \\
  A \square B
\end{tikzcd}
 \end{center}           
        \end{itemize}
    \end{enumerate}
\end{Def}

\begin{Def}
Let $(\mathcal{C}, \square, I)$ be a monoidal category. Using the monoidal bifunctor define the \textit{twist bifunctor} $\widetilde{\square} : \mathcal{C} \cross \mathcal{C} \to \mathcal{C}$ as $(A,B) \mapsto B \square A$.\\
A natural isomorphism $S$ between the monoidal functor $\square$ and the twist functor $\widetilde{\square}$ is called \textit{braiding} $S_{A,B} : A \square B \to B \square A $, if the following diagram commutes. (There's an additional similar diagram with inverses that needs to commute, but since we only care about symmetric monoidal categories and functors, that diagram commutes if and only if the one we have commutes) (\textit{hexagon identity}),
       
\begin{center}
     \begin{tikzcd}[sep=huge]
  (A \square B) \square C \arrow[r,"\alpha_{A,B,C}"] \arrow[d,"S_{A,B} \square 1_C"'] & A \square (B \square C)\arrow[r,"S_{A,B \square C}" ] & (B \square C) \square A \arrow[d,"\alpha_{B,C,A}"]\\
  (B \square A) \square C \arrow[r,"\alpha_{B,A,C}"'] &
  B \square (A \square C) \arrow[r,"1_B \square S_{A,C}"']  &
  B \square (C \square A) 
\end{tikzcd}\\
\end{center}
       
A \textit{braided monoidal category} is a monoidal category $(\mathcal{C}, \square, I)$ equipped with a braiding $S$.\\
If in addition to the hexagon identity, $S$ satisfies $S_{B,A} \cdot S_{A,B} = I_{A \square B}$ for all objcts/morphisms $A,B \in \mathcal{C}$, then $(\mathcal{C}, \square, I, S)$ is said to be a \textit{symmetric monoidal category}.
\end{Def}

\begin{Def}
    A \textit{differential graded symmetric monoidal (dgsm) category} (\cite{seg}) is a category $\mathscr{C}$ such that,
    \begin{itemize}
    \item $\mathcal{C}$ is a symmetric monoidal category
        \item $\text{Mor}_\mathcal{C}(A,B)$ (for any 2 objects A,B) is a chain complex.
   \end{itemize} 
    \end{Def}

\begin{Def}
Let $(\mathcal{C}, \square, I)$ and $(\mathcal{D}, \odot, J)$ be monoidal categories. A \textit{monoidal functor}  $ 
\mathscr{F} : (\mathcal{C}, \square, I) \to (\mathcal{D}, \odot , J)$ is a functor $ 
\mathscr{F} : \mathcal{C} \to \mathcal{D}$ equipped with the following structure,
\begin{enumerate}
    \item a morphism $\epsilon : J \to F(I)$
    \item a natural transformation $\mu$ from the functor $ \odot \cdot (F,F) : \mathcal{C} \cross \mathcal{C} \to \mathcal{D} \cross \mathcal{D} \to \mathcal{D}$ to the functor $ F \cdot \square : \mathcal{C} \cross \mathcal{C} \to \mathcal{C} \to \mathcal{D}$ with components of the form (for objects or morphisms A,B in $\mathcal{C}$),
    \begin{equation*}
        \mu_{A,B} : F(A) \odot F(B) \to F(A \square B)
    \end{equation*}
    \item Two conditions on the above morphisms
    \begin{itemize}
        \item For all objects A,B,C in $\mathcal{C}$, the associators $\alpha, \beta$, the following diagram commutes (\textit{associativity}),

\begin{center} \begin{tikzcd}[sep=huge]
  (F(A) \odot F(B)) \odot F(C) \arrow[r,"\beta_{F(A),F(B),F(C)}"] \arrow[d,"\mu_{A,B} \odot 1_{F(C)}"'] & F(A) \odot (F(B) \odot F(C)) 
 \arrow[d,"1_{F(A)} \odot \mu_{B,C} "] \\
  F(A \square B) \odot F(C)  \arrow[d,"\mu_{A \square B, C} "'] & F(A) \odot F(B \square C) \arrow[d,"\mu_{A,B \square C}"] \\
  F((A \square B )\square C) \arrow[r,"F(\alpha_{A,B,C})"'] &
   F(A \square (B \square C)) 
\end{tikzcd}\\
\end{center}
        \item For all objects/morphisms A in $\mathcal{C}$, the unitors $\lambda,\rho,\Lambda,\varrho$ the following diagrams commute (\textit{unitality}),

\begin{center} \begin{tikzcd}[sep=huge]
  J \odot F(A) \arrow[r,"\epsilon \odot 1_{F(A)}"] \arrow[d,"\Lambda_{F(A)}"'] & F(I) \odot F(A) \arrow[d,"\mu_{I,A}"] &  F(A) \odot J \arrow[r,"1_{F(A)} \odot \epsilon"] \arrow[d,"\varrho_{F(A)}"'] & F(A) \odot F(I) \arrow[d,"\mu_{A,I}"]  \\
  F(A)  & F(I \square A) \arrow[l,"F(\lambda_{A})"] & F(A)  & F(A \square I) \arrow[l,"F(\rho_{A})"]
\end{tikzcd}\\
 \end{center}       
    \end{itemize}
\end{enumerate}
\end{Def}

\begin{Def}
    Let $(\mathcal{C}, \square, I,R)$ and $(\mathcal{D}, \odot, J,S)$ be braided monoidal categories ($R,S$ are braidings). A \textit{braided monoidal functor}  $ 
\mathscr{F} : (\mathcal{C}, \square, I,R) \to (\mathcal{D}, \odot , J,S)$ is a monoidal functor  $ 
\mathscr{F}, \mu, \epsilon : (\mathcal{C}, \square, I) \to (\mathcal{D}, \odot , J)$ which respects the braiding on both sides, i.e., satisfies the law,

\begin{center} \begin{tikzcd}[sep=huge]
  F(A) \odot F(B) \arrow[r,"S_{F(A),F(B)}"] \arrow[d,"\mu_{A,B}"'] & F(B) \odot F(A) \arrow[d,"\mu_{B,A}"]  \\
  F(A \square B ) \arrow[r,"F(R_{A,B})"]  & F(B \square A) 
\end{tikzcd}\\
\end{center}
If $\mathcal{C},\mathcal{D}$ happen to be symmetric monoidal categories, then the braided monoidal functor $F$ is said to be a \textit{symmetric monoidal functor}.
\end{Def}

\begin{Def}
\label{24}
 Let $\mathcal{C}$ and $\mathcal{D}$ be differential graded symmetric monoidal categories. A \textit{differential graded symmetric monoidal} (dgsm) functor from $ \mathcal{C}$ to $\mathcal{D}$ is a functor $F : \mathcal{C} \to \mathcal{D}$ which is,
 \begin{itemize}
     \item a symmetric monoidal functor 
     \item a chain map $F : \text{Mor}_\mathcal{C}(A,B) \to \text{Mor}_\mathcal{D}(F(A),F(B))$ (for any 2 objects A,B in $\mathcal{C}$)
 \end{itemize}
\end{Def}

\begin{proof} of Proposition \ref{37}
\begin{itemize}
    \item The bifunctor is given by tensor product ($\otimes$) of chain complexes over the ground field $\R$, $I \equiv \R$
    We assume that disjoint union of chains is associative (identifying upto isomorphism). Hence,\\
$\alpha, \lambda, \rho$ are identity maps on components,
\begin{equation*}
    \alpha : (a \otimes b) \otimes c \mapsto a \otimes (b \otimes c) = (a \otimes b) \otimes c = a \otimes b \otimes c
\end{equation*}
\item Pentagon and triangle identities are trivially satisfied as this is a strict category.
    \item Define $S_{A,B} : A \otimes B \to B \otimes A$ as $a \otimes b \mapsto b \otimes a$ which clearly squares to identity, and satisfies hexagon identity,

  \begin{center} \begin{tikzcd}[sep=huge]
  (A \otimes B) \otimes C \arrow[r,"\alpha_{A,B,C}"] \arrow[d,"S_{A,B} \otimes 1_C"'] & A \otimes (B \otimes C)\arrow[r,"S_{A,B \otimes C}" ] & (B \otimes C) \otimes A \arrow[d,"\alpha_{B,C,A}"]\\
  (B \otimes A) \otimes C \arrow[r,"\alpha_{B,A,C}"'] &
  B \otimes (A \otimes C) \arrow[r,"1_B \otimes S_{A,C}"']  &
  B \otimes (C \otimes A) 
  \end{tikzcd}\\
\end{center}

We wish to check,
\begin{equation*}
\alpha_{B,C,A} \cdot S_{A,B \otimes C} \cdot \alpha_{A,B,C} [(a \otimes b) \otimes c ]
    =
    1_B \otimes S_{A,C} \cdot \alpha_{B,A,C} \cdot S_{A,B} \otimes 1_C
    [(a \otimes b) \otimes c]
\end{equation*}

The L.H.S. gives,
\begin{equation*}
    \alpha_{B,C,A} \cdot S_{A,B \otimes C} [a \otimes (b \otimes c) ] = \alpha_{B,C,A} [(b \otimes c) \otimes a] = [b \otimes (c \otimes a)]   
\end{equation*}
The R.H.S. gives,
\begin{equation*}
    1_B \otimes S_{A,C} \cdot \alpha_{B,A,C} 
    [(b \otimes a) \otimes c] =  1_B \otimes S_{A,C} 
    [b \otimes (a \otimes c)] = 
    [b \otimes (c \otimes a)] = L.H.S.
\end{equation*}
\item $\text{Mor}_\text{Comp}(A,B) := \text{Hom}_\bullet(A,B)$ forms a chain complex in the following sense,\\
Let $\text{Hom}_p(A,B)$ consist of all degree p linear maps from $A \to B$. Now, given $\Phi \in \text{Hom}_p(A,B)$, define,
\begin{equation*}
    d_{\text{Comp}} \Phi  : = d_B \cdot \Phi - (-1)^{|\Phi|} \Phi \cdot d_A \in \text{Hom}_{p-1}(A,B)
\end{equation*}
\end{itemize}
\end{proof}

\bibliographystyle{amsalpha}    
\bibliography{fullpaper}

\end{document}